\documentclass[12pt]{article}
\usepackage[T1]{fontenc}
\usepackage{amsmath}
\usepackage{float}
\usepackage{amsfonts}
\usepackage{graphicx}
\usepackage[mathscr]{euscript} 
\usepackage{rotating}
\usepackage{pdflscape}
\usepackage[hmargin=2.5cm,vmargin=3cm]{geometry} %Seems to fit ECMA margins
\usepackage{booktabs,caption}
\usepackage{natbib}
\usepackage{setspace}
\usepackage{bbm}
\usepackage{amsthm}
\usepackage{enumerate}
\usepackage{threeparttable}
\usepackage{verbatim}
\usepackage{caption}
\usepackage{url}
\usepackage{amssymb}
\usepackage{subcaption,array}
\usepackage{adjustbox}
\usepackage[titletoc,title]{appendix}
\usepackage{multirow}
\usepackage{threeparttable}
\usepackage{siunitx}

\usepackage{newtxtext,newtxmath}
\newcommand\fnote[1]{\captionsetup{font=small}\caption*{#1}}

\usepackage[hidelinks,breaklinks]{hyperref}
\hypersetup{colorlinks=true,allcolors=blue}

\usepackage{framed}

\newtheorem{prop}{Proposition}

\theoremstyle{definition}
\newtheorem{assumption}{Assumption}
\newtheorem{define}{Definition}

\usepackage{calligra}
\usepackage[T1]{fontenc}

\usepackage{todonotes}

\usepackage[scr=boondoxo]{mathalfa}  % Use 'boondox' script font
\newcommand{\symms}{\mathscr{s}}

\usepackage{longtable}

\begin{document}

\title{
    \textbf{Nested Pseudo-GMM Estimation \\ 
    of Demand for Differentiated Products}\thanks{We thank Zhentong Lu for their helpful comments. The data used in this study were obtained through a request submitted under the \textit{Canadian Access to Information Act}. We thank the \textit{Liquor Control Board of Ontario} (LCBO) for providing access to the data.}
    }
    
\author{
    Victor Aguirregabiria \footnote{Department of Economics, University of Toronto. 150 St. George Street, Toronto, ON, M5S 3G7, Canada, \href{mailto.victor.aguirregabiria@utoronto.ca}{victor.aguirregabiria@utoronto.ca}.}  \\  \emph{University of Toronto, CEPR} \and
    Hui Liu
    \footnote{Ma Yinchu School of Economics, Tianjin University, Tianjin, China, \href{mailto: liuhui33@tju.edu.cn}{liuhui33@tju.edu.cn}.} 
    \\ \emph{Tianjin University}
    \and
    Yao Luo\footnote{Department of Economics, University of Toronto. 150 St. George Street, Toronto, ON, M5S 3G7, Canada, \href{mailto: yao.luo@utoronto.ca}{yao.luo@utoronto.ca}.} 
    \\ \emph{University of Toronto}
    }

\date{\today}

\maketitle

\thispagestyle{empty}

\begin{abstract} 
    We propose a fast algorithm for computing the GMM estimator in the BLP demand model (\citeauthor{berry_levinshon_1995}, \citeyear{berry_levinshon_1995}). Inspired by nested pseudo-likelihood methods for dynamic discrete choice models, our approach avoids repeatedly solving the inverse demand system by swapping the order of the GMM optimization and the fixed-point computation. We show that, by fixing consumer-level outside-option probabilities, BLP’s market-share–mean-utility inversion becomes closed-form and, crucially, separable across products, yielding a nested pseudo-GMM algorithm with analytic gradients. The resulting estimator scales dramatically better with the number of products and is naturally suited for parallel and multithreaded implementation. In the inner loop, outside-option probabilities are treated as fixed objects while a pseudo-GMM criterion is minimized with respect to the structural parameters, substantially reducing computational cost. Monte Carlo simulations and an empirical application show that our method is significantly faster than the fastest existing alternatives, with efficiency gains that grow more than proportionally in the number of products.

\vspace{0.4cm}
\noindent
\textbf{Keywords:} Random Coefficients Logit; Sufficient Statistics; Market Share Inversion; Newton-Kantorovich Iteration; Asymptotic Properties; LCBO

\vspace{0.4cm}
\noindent\textbf{JEL codes:} C23, C25, C51, C61, D12, L11.
\end{abstract}

\newpage
\setcounter{page}{1}

\setstretch{1.6}

%%%%%%%%%%%%%%%%%%%%%%

\section{Introduction}

The \cite{berry_levinshon_1995} (hereafter BLP) random coefficients discrete-choice model has become a workhorse of demand modeling in the empirical literature, as it accommodates rich substitution patterns between products and avoids the \textit{independence of irrelevant alternatives} property inherent in standard multinomial logit models. The original estimation approach relies on a nested fixed-point algorithm, which combines the minimization of a generalized method of moments (GMM) criterion function in the outer loop with a fixed-point iteration in the inner loop to invert the demand system and recover mean utilities. This method has been widely adopted in empirical research, including \citet{nevo2000practitioner}, \citet{petrin2002quantifying}, \citet{bayer2007unified}, and \citet{fan2013ownership}, among many others.\footnote{See \citet{ackerberg2007econometric} for a comprehensive survey, and \citet{conlon2020best} for a discussion of best practices in implementing this algorithm.}

It is well known that the nested fixed-point algorithm is computationally inefficient for estimating the GMM parameters in the BLP model. While the outer-loop minimization steps are relatively inexpensive, the inner-loop fixed-point iterations are computationally intensive, as they involve inversion of a high-dimensional demand system. The primary source of inefficiency is that the algorithm fully solves the costly fixed-point problem for every candidate parameter vector, even when those guesses are far from the final estimate.

Several studies have proposed alternative methods to reduce the computational burden of estimating the BLP model. \citet{dube2012improving} introduce a mathematical program with equilibrium constraints (MPEC) approach, which minimizes the GMM objective function subject to the market share equations treated as constraints. By jointly optimizing the Lagrangian over both the structural parameters and the mean utilities, the MPEC method requires solving the market share inversion only once—at the optimum. \citeauthor{ABLP2015} (\citeyear{ABLP2015}) propose a method that uses a linear approximation to analytically invert the market share equations, leading to a pseudo-GMM minimization problem over the model parameters. We refer to this approach as the ABLP estimator.

Inspired by the Nested Pseudo Likelihood (NPL) estimator of \citet{aguirregabiria_mira_2002,aguirregabiria_mira_2007} in the context of dynamic discrete choice models, we propose a Nested Pseudo-GMM (NP-GMM) method for estimating the BLP demand model. Our approach incorporates several features that substantially reduce the computational burden of GMM estimation. NP-GMM is a computationally motivated estimator that targets the same structural parameters under the same identification conditions as GMM. It trades a small asymptotic efficiency loss for large speed gains.

First, similar to \citet{ABLP2015} and in contrast to the standard nested fixed-point algorithm, our method swaps the order of the fixed-point iteration and the GMM optimization. This avoids the repeated—and computationally intensive—inversion of the demand system, at the cost of performing more, but less expensive, iterations in the minimization of a pseudo-GMM criterion. Second, in the inner loop, we treat the vector of individual outside-option choice probabilities as incidental objects (i.e., as \textit{sufficient statistics} for competitive effects). Conditional on these probabilities, mean utilities can be recovered product by product using a closed-form expression: that is, the market-share–mean-utility inversion is both closed-form and, crucially, separable across products. Third, while the gradient of the full GMM objective function lacks an analytical expression, the gradient of our pseudo-GMM criterion function is simple and explicit. This facilitates efficient computation of both the objective function and its derivatives and makes our algorithm particularly well-suited for parallel processing and multithreading. Together, these innovations lead to a significantly faster estimation procedure. Importantly, while the defining feature in 
\citet{ABLP2015}'s ABLP method is a Taylor approximation / linearization of the fixed point mapping, our method instead uses an exact identity, not an approximation. 

Our NP-GMM method outperforms alternative \textit{Nested Pseudo} (NP) algorithms that more directly adapt the insights from \citet{aguirregabiria_mira_2002, aguirregabiria_mira_2007} to the BLP framework. Our approach extends these alternative methods in several important ways. First, rather than fixing the mean utilities as in a direct adaptation, we fix individual choice probabilities for the outside option to implement the pseudo-GMM step in the inner loop. This yields simple, closed-form expressions that are well-suited for parallelization and multithreading. Second, we show that our estimation problem can be reformulated as one in which the set of parameters is augmented with an auxiliary set, which is held fixed during the pseudo-GMM step. This recasts the problem into a structure similar to that of the original Nested Pseudo estimators. 

We establish consistency and asymptotic normality of our NP-GMM estimator, providing a new asymptotic framework for NP estimation in high- and infinite-dimensional settings, where existing proofs do not apply. In many applications of NP methods—such as those in \citet{lin2024endogeneity} and \citet{dearing2025efficient}—it is standard to assume a finite number of states, which underpins the asymptotic arguments in \citet{aguirregabiria_mira_2002, aguirregabiria_mira_2007}. However, the BLP model differs fundamentally from standard dynamic discrete choice models: the dimensionality of the mean utility vector grows with the sample size and tends to infinity. This renders conventional asymptotic arguments for NP estimators inapplicable. In contrast, our reformulation offers a novel framework for establishing the statistical properties of NP-GMM methods in such settings.

We conduct a series of Monte Carlo experiments to evaluate the computational efficiency and finite-sample performance of the proposed method. Our comparison focuses on the ABLP estimator, which is the most closely related approach and, to date, the fastest available method for estimating random-coefficients demand models using aggregate data. As shown by \citet{ABLP2015}, the ABLP estimator delivers substantial computational gains relative to traditional nested fixed-point and MPEC methods. We then illustrate the implementation of our methodology using wine demand data from the Liquor Control Board of Ontario (LCBO), demonstrating the scalability and feasibility of estimating demand models with a large number of products.

A key advantage of our method lies in its compatibility with multithreading. While our algorithm benefits significantly from parallelization when computing the GMM objective function and its gradients, the ABLP estimator performs poorly under multithreading, particularly when evaluating gradients—a major component of its overall computational cost. As a result, ABLP gains little from parallelization. When comparing both methods using their respective optimal number of threads, we find that our algorithm evaluates the GMM objective function and gradients approximately five times faster than ABLP. Even for a moderate number of products, our proposed method is approximately twice as fast as the fastest currently available estimator, and its computational advantage grows more than proportionally with the number of products in the demand system. 

Our paper also relates to a broader literature that develops computationally efficient estimators for variants of the BLP model. Assuming a discretization of consumer tastes and a reduced-form pricing function, \citet{fox2011simple} propose a linear regression estimator subject to linear inequality constraints. \citet{kalouptsidi2012market} also considers discrete consumer types and leverages a duality between market shares and consumer heterogeneity. More recently, \citet{lu2023semi} introduce a semi-nonparametric approach and propose a two-step estimator based on sieve GMM. See also \citet{wang2023sieve} and \citet{compiani2022market} for related sieve-based methods for estimating nonparametric distributions of random coefficients in the BLP framework. In addition, several papers examine the numerical performance and implementation best practices of the standard BLP estimator and its alternatives; see, for example, \citet{knittel2014estimation} and \citet{conlon2020best}.

The rest of the paper is organized as follows. Section \ref{sec:model} introduces the BLP demand model, the GMM estimator, and three estimation algorithms: the Nested Fixed Point method, the ABLP approach, and our proposed NP-GMM algorithm. Section \ref{sec:monte} presents Monte Carlo evidence comparing the performance of the ABLP and NP-GMM estimators. Section \ref{sec:lcbo} illustrates the implementation of the proposed algorithm and demonstrates its computational performance using wine demand data from the LCBO. Section \ref{sec:conclusion} summarizes and concludes. The code implementing the different estimators and the Monte Carlo experiments is available on GitHub at
\url{https://github.com/luo8yao/np-gmm-blp}.

\section{Model, estimators, and algorithms \label{sec:model}} 

\subsection{BLP Model}

We consider a set of markets indexed by \( t \in \{1,2,\ldots,T\} \). Each market is populated by a mass \( M_t \) of consumers and offers a set of products \( \mathcal{J} = \{1,2,\ldots,J\} \). In each market, consumers choose one product \( j \in \mathcal{J} \) or opt not to purchase, which we denote by \( j = 0 \). Product \( j \) in market \( t \) is characterized by observed attributes \( \boldsymbol{x}_{jt} \) and an unobserved quality component \( \xi_{jt} \). The vector \( \boldsymbol{x}_{jt} \) contains \( K \) product characteristics, including price, and we let \( \boldsymbol{x}_t \) denote the collection \( (\boldsymbol{x}_{1t},\ldots,\boldsymbol{x}_{Jt}) \). The scalar \( \xi_{jt} \) captures market- and product-specific utility components observed by consumers but unobserved by the researcher; we collect these terms in the vector \( \boldsymbol{\xi}_t = (\xi_{1t},\ldots,\xi_{Jt}) \). Consumer $i$ in market $t$ obtains the utility from purchasing product $j$:
\begin{equation}
    u_{ijt} = \boldsymbol{x}_{jt}^{\prime} \, \boldsymbol{\beta}_{it} + \xi_{jt} + \epsilon_{ijt}.
\end{equation}
The utility of the outside good, the "no-purchase" option, is $u_{i0t} = \epsilon_{i0t}$. The parameter vector $\boldsymbol{\beta}_{it} = (\beta^{1}_{it}, \beta^{2}_{it}, ..., \beta^{K}_{it})^{\prime}$ contains consumer $i$'s (in market $t$) tastes for the $K$ characteristics; $\epsilon_{ijt}$ is an additional idiosyncratic shock that follows the Type I extreme value distribution. 

For $k \in \{1,2,...,K\}$, each random coefficient $\beta^{k}_{it}$ has the following structure:
\begin{equation}
    \beta^k_{it} = \beta^k + 
    \sigma^{k} \, \nu^{k}_{it} + 
    \boldsymbol{d}_{it}^{\prime} \, \boldsymbol{\pi}^{k},
\end{equation}
where $\beta^k$ and $\sigma^{k}$ are fixed parameters; $(\nu_{it}^{1}, \ldots, \nu_{it}^{K})$ are independent random draws from the standard normal distribution; $\boldsymbol{d}_{it}$ is a vector of observable demographics of consumer $i$ in market $t$, which may be measured at the consumer or market level; and $\boldsymbol{\pi}^{k}$ is a vector of parameters.

Let $\boldsymbol{\theta}$ be the vector with all the model parameters. It is convenient to distinguish two groups of parameters: $\boldsymbol{\theta}^{\prime} = \left( \boldsymbol{\beta}^{\prime}, \, \boldsymbol{\sigma}^{\prime} \right)$, with 
$\boldsymbol{\beta}^{\prime} = \left( \beta^{1}, \, ..., \,  \beta^{K} \right)$ and $\boldsymbol{\sigma}^{\prime} = \left( \sigma^{k}, \,  \boldsymbol{\pi}^{k}: \, k=1,2, ..., K \right)$. Let $\delta_{jt}$ be the average utility of product $j$ in market $t$: that is, $\delta_{jt} = \boldsymbol{x}_{jt}^{\prime} \, \boldsymbol{\beta} + \xi_{jt}$. Define the vector of average utilities for all the products in market $t$ as $\boldsymbol{\delta}_{t}$. The market share of product $j$ in market $t$ is represented as $s_{jt}$. We use function $\symms_{j}(\boldsymbol{\delta}_{t}, \boldsymbol{x}_t, \boldsymbol{\sigma})$ to represent the prediction of the model about market share $s_{jt}$. That is:
\begin{equation}
    s_{jt} = 
    \symms_{j}(\boldsymbol{\delta}_{t}, \boldsymbol{x}_t, 
    \boldsymbol{\sigma}) 
    \, = \,
    \int_{\boldsymbol{\nu}_{it}, \boldsymbol{d}_{it}} 
    \frac{ \exp \{ 
        \delta_{jt} + 
        \sum_{k=1}^{K} x^{k}_{jt} \,
        [\sigma^{k} \, \nu^{k}_{it} + 
        \boldsymbol{d}_{it}^{\prime} \, \boldsymbol{\pi}^{k} ]
        \}
    }
    { 1 + \sum_{j^{\prime}=1}^J 
    \exp \{ 
        \delta_{j^{\prime} t} + 
        \sum_{k=1}^{K} x^{k}_{j^{\prime} t} \,
        [ \sigma^{k} \, \nu^{k}_{it} + 
        \boldsymbol{d}_{it}^{\prime} \, \boldsymbol{\pi}^{k} 
        ]
        \}
    } \; dF_{t}(\boldsymbol{\nu}_{it}, \boldsymbol{d}_{it})
\end{equation}
where $F_{t}(\cdot)$ represents the distribution function of $(\boldsymbol{\nu}_{it}, \boldsymbol{d}_{it})$ in market $t$. This integral is typically evaluated via Monte Carlo simulation, using $N$ draws from the distribution $F_t$. Throughout the paper, we assume that market shares are computed using a finite number of such draws.\footnote{This can be interpreted either as assuming a finite population or that $N$ is sufficiently large for the simulation error to be negligible—an assumption that is nearly universal in this literature.} Accordingly, we treat the set of draws $\{ \boldsymbol{\nu}_{it}, \boldsymbol{d}_{it}: i = 1, 2, \dots, N \}$ as fixed and known to the researcher, and we use the following expression for the model-predicted market shares:
\begin{equation}
    s_{jt} \; = \; 
    \symms_{j}(\boldsymbol{\delta}_{t}, \boldsymbol{x}_t, 
    \boldsymbol{\sigma}) 
    \; = \;
    \displaystyle
    \frac{1}{N}
    \sum_{i=1}^{N}
    \frac{ \exp \{ 
        \delta_{jt} + 
        \sum_{k=1}^{K} x^{k}_{jt} \,
        [\sigma^{k} \, \nu^{k}_{it} + 
        \boldsymbol{d}_{it}^{\prime} \, \boldsymbol{\pi}^{k} ]
        \}
    }
    { 1 + \sum_{j^{\prime}=1}^J 
    \exp \{ 
        \delta_{j^{\prime} t} + 
        \sum_{k=1}^{K} x^{k}_{j^{\prime} t} \,
        [ \sigma^{k} \, \nu^{k}_{it} + 
        \boldsymbol{d}_{it}^{\prime} \, \boldsymbol{\pi}^{k} 
        ]
        \}
    }
\label{eq_model_mshares}
\end{equation}

Define the vector of market shares $\boldsymbol{s}_{t} \equiv \left( s_{1t}, \, ..., \, s_{Jt} \right)$ and the corresponding 
vector-valued mapping 
$\boldsymbol{\symms}(\boldsymbol{\delta}_{t}, \boldsymbol{x}_t, \boldsymbol{\sigma}) \equiv 
\left( \symms_{1}(\boldsymbol{\delta}_{t}, \boldsymbol{x}_t, \boldsymbol{\sigma}), \, ..., \,
\symms_{J}(\boldsymbol{\delta}_{t}, \boldsymbol{x}_t, \boldsymbol{\sigma}) \right)$. The demand system, in vector form, is:
\begin{equation} \label{eq:demand}
    \boldsymbol{s}_{t} \; = \;
    \boldsymbol{\symms}(\boldsymbol{\delta}_{t}, \boldsymbol{x}_t, \boldsymbol{\sigma})
\end{equation}
By \textit{Berry's Inversion Theorem} (\citeauthor{berry1994}, \citeyear{berry1994}), this mapping is invertible in $\boldsymbol{\delta}_{t}$. For any value of $(\boldsymbol{x}_t, \boldsymbol{\sigma})$, there exists an inverse function $\boldsymbol{\symms}^{-1}(\boldsymbol{s}_{t}, \boldsymbol{x}_t, \boldsymbol{\sigma})$ such that $\boldsymbol{\delta}_{t} = \boldsymbol{\symms}^{-1}(\boldsymbol{s}_{t}, \boldsymbol{x}_t, \boldsymbol{\sigma})$, or at the product level, 
\begin{equation}
    \delta_{jt} \; = \;
    \symms^{-1}_{j}(\boldsymbol{s}_{t}, \boldsymbol{x}_t, \boldsymbol{\sigma})
\end{equation}
While the inverse function has a closed-form expression in the standard logit and nested logit models, this is not the case in the random coefficients logit model. In this case, the inversion must be computed numerically, which is computationally intensive—particularly in models with a large number of products.

Based on the inversion of the demand system, we can write this demand model as a partially linear regression model:
\begin{equation}
    \symms^{-1}_{j}(\boldsymbol{s}_{t}, \boldsymbol{x}_t, \boldsymbol{\sigma})
    \; = \;
    \boldsymbol{x}_{jt}^{\prime} \, \boldsymbol{\beta} + \xi_{jt} 
\end{equation}
Price is included in \( \boldsymbol{x}_{jt} \) and is potentially correlated with \( \xi_{jt} \), so consistent estimation requires instrumental variables. The econometric model is completed by specifying the validity and identification power of instrumental variables. Specifically, we assume the existence of a vector of $q$ observable variables $\boldsymbol{z}_{jt}$ that satisfies the moment conditions $\mathbb{E}\left( \boldsymbol{z}_{jt} \, \xi_{jt} \right) = \boldsymbol{0}$. These conditions identify the parameter vector $\boldsymbol{\theta}$. A common choice for the instruments includes observable product characteristics other than the price of product $j$, along with the same characteristics for products that compete with product $j$.

\subsection{The GMM estimator and the Nested Fixed Point algorithm \label{sec:NFP}}

Let $m(\boldsymbol{\theta})$ be the vector of $q$ sample moments that correspond to the population moments $\mathbb{E}\left( \boldsymbol{z}_{jt} \, \xi_{jt} \right)$.
\begin{equation}
    \displaystyle 
    m(\boldsymbol{\theta}) \; = \;
    \frac{1}{JT} 
    \sum_{t=1}^{T} \sum_{j=1}^{J}
    \boldsymbol{z}_{jt} \, 
    \left[ 
        \symms^{-1}_{j}(\boldsymbol{s}_{t}, \boldsymbol{x}_t, \boldsymbol{\sigma}) -
        \boldsymbol{x}_{jt}^{\prime} \, \boldsymbol{\beta} 
    \right]
\end{equation}
The GMM estimator of $\boldsymbol{\theta}$ is defined as:
\begin{equation}
    \widehat{\boldsymbol{\theta}} \; = \;
    \arg \min_{\boldsymbol{\theta} \in \Theta}
    \; \;
    G(\boldsymbol{\theta}) 
    \; = \;
    m(\boldsymbol{\theta})^{\prime} \; \boldsymbol{W} \;
    m(\boldsymbol{\theta})
\end{equation}
where $\boldsymbol{W}$ is a positive definite $q \times q$ weighting matrix. 

The GMM estimator $\widehat{\boldsymbol{\theta}}$ satisfies the first-order conditions: $\frac{\partial  G\widehat{\boldsymbol{\theta}})}{\partial \boldsymbol{\theta}} = 0$. 
\cite{berry_levinshon_1995} propose the Nested Fixed-Point (NFXP) algorithm that computes this estimator by applying Newton’s method to solve this system of nonlinear equations \eqref{eq:demand}. Specifically, it performs the following iterative procedure until convergence. At iteration \(n+1 \geq 1\):  
\begin{equation}
    \boldsymbol{\theta}^{n+1}
    \, = \,
    \boldsymbol{\theta}^{n} - 
    \left( 
        \frac{\partial^{2} \, G(\boldsymbol{\theta}^{n})}
        {\partial \, \boldsymbol{\theta} \, 
        \boldsymbol{\theta}^{\prime}}
    \right)^{-1}
    \; 
    \frac{\partial \, G(\boldsymbol{\theta}^{n})}
        {\partial \, \boldsymbol{\theta}}
\end{equation}

At every Newton iteration, and for every market $t$ in the data, the NFXP algorithm fully solves the inversion problem that defines the vector of average utilities $\boldsymbol{\delta}_{t}$ as an implicit function of $(\boldsymbol{s}_{t}, \boldsymbol{x}_t, \boldsymbol{\sigma})$. Specifically, for a fixed value of $\boldsymbol{\sigma}$ (the one from Newton's iteration $n$), it applies the following fixed point iterations until convergence. At iteration \(\tau +1 \geq 1\):  
\begin{equation}
    \boldsymbol{\delta}_{t}^{\tau+1}
    \; = \;
    \boldsymbol{\delta}_{t}^{\tau} + 
    \ln\left( \boldsymbol{s}_{t} \right) - 
    \ln\left( 
        \boldsymbol{\symms} 
        (\boldsymbol{\delta}_{t}^{\tau}, \boldsymbol{x}_t, \boldsymbol{\sigma})
    \right).
\end{equation}
Upon convergence of the fixed-point algorithm, the solution $\boldsymbol{\delta}_t$ represents the evaluation of the inverse function $\boldsymbol{\symms}^{-1}(\boldsymbol{s}_t, \boldsymbol{x}_t, \boldsymbol{\sigma})$. This result can then be used to compute the criterion function $G(\boldsymbol{\theta}^{n})$ and its derivatives, enabling a new Newton iteration that updates the parameter vector to a new value $\boldsymbol{\theta}^{n+1}$.

In empirically common situations --especially when the outside good share is small-- the modulus of this contraction mapping is often close to one, which means the fixed-point algorithm may require many iterations to converge 
(\citeauthor{dube2012improving}, \citeyear{dube2012improving},  \citeauthor{conlon2020best}, \citeyear{conlon2020best}). The NFXP algorithm must solve $T$ such fixed-point problems at each trial value of $\boldsymbol{\theta}$ during the Newton iterations. These characteristics make the NFXP algorithm computationally intensive.

\subsection{Aguirregabiria-Mira Nested Pseudo-GMM estimator}

Before introducing our estimator, we begin by describing a direct application of the Nested Pseudo-Likelihood or GMM method from \citet{aguirregabiria_mira_2002, aguirregabiria_mira_2007} to the BLP model. This section serves two purposes. First, it introduces the core ideas shared by a broad class of Nested Pseudo-GMM (NP-GMM) estimators. Second, it demonstrates that a straightforward, off-the-shelf application of the standard NP-GMM approach—commonly used in dynamic discrete choice models—does not work in the context of the BLP model. These limitations motivate the modifications to this method that we present in Section \ref{sec_our_NPGMM}.

Let $\boldsymbol{\delta} = (\boldsymbol{\delta}_{t}: t=1, 2, ..., T)$ be the vector of mean utilities across all the markets and products in the sample. The method treats these mean utilities as incidental parameters to be estimated together with the structural parameters in $\boldsymbol{\theta}$. Specifically, we define an extended or pseudo-GMM criterion function that includes $(\boldsymbol{\delta}, \boldsymbol{\theta})$ as arguments:
\begin{equation}
    \widetilde{Q}(\boldsymbol{\delta}, \boldsymbol{\theta}) 
    \; = \;
    \displaystyle 
    \left(
    \sum_{j,t}
    \boldsymbol{z}_{jt} \, 
    \left[ 
        \delta_{jt} -
        \boldsymbol{x}_{jt}^{\prime} \, \boldsymbol{\beta} 
    \right]
    \right)^{\prime} 
    \, \boldsymbol{W} \,
    \left(
    \sum_{j,t}
    \boldsymbol{z}_{jt} \, 
    \left[ 
        \delta_{jt} -
        \boldsymbol{x}_{jt}^{\prime} \, \boldsymbol{\beta} 
    \right]
    \right)
\label{eq_am_criterion}
\end{equation}

\begin{define} \label{def_am_npgmm}
    The \textbf{AM-NP-GMM estimator} is defined as the pair $(\widehat{\boldsymbol{\delta}}, \widehat{\boldsymbol{\theta}})$   that satisfies the following two conditions:
    \begin{itemize}
        \item AM-1. Given the vector of mean utilities $\widehat{\boldsymbol{\delta}}$, the parameter vector $\widehat{\boldsymbol{\theta}}$ minimizes the pseudo-GMM criterion function:
        \begin{equation}
            \widehat{\boldsymbol{\theta}} 
            \; = \;
            \arg \min_{\boldsymbol{\theta}} 
            \; \; 
            \widetilde{Q}(\widehat{\boldsymbol{\delta}}, \boldsymbol{\theta}) \; .
        \end{equation}

        \item AM-2. Given the structural parameters $\widehat{\boldsymbol{\theta}}$, the vector of mean utilities $\widehat{\boldsymbol{\delta}}_{t}$ solves, for each market $t$, the system of $J$ nonlinear equations implied by the demand system:
        \begin{equation}
            \boldsymbol{s}_{t}
            \; = \;
            \boldsymbol{\symms} 
            (\widehat{\boldsymbol{\delta}}_{t}, \, 
            \boldsymbol{x}_t, \,
            \widehat{\boldsymbol{\theta}}) 
            \qquad \text{for } 
            t = 1, \, 2, \, \dots, \, T.    
            \qquad \blacksquare
        \end{equation}
    \end{itemize}
\end{define}

A key feature concerns the algorithm used to compute the estimator $(\widehat{\boldsymbol{\delta}}, \widehat{\boldsymbol{\theta}})$ that satisfies conditions AM-1 and AM-2. This algorithm shares several characteristics with the NFXP algorithm: it is a nested procedure that alternates between minimizing the pseudo-GMM criterion with respect to $\boldsymbol{\theta}$ and performing iterations to solve for the mean utilities $\boldsymbol{\delta}_t$ implied by the demand system. However, unlike the NFXP algorithm, the roles of the inner and outer loops are reversed.

The algorithm proceeds as follows. It starts from an initial value for the vectors of means utilities: $\widehat{\boldsymbol{\delta}}^{0} = (\widehat{\boldsymbol{\delta}}^{0}_{t}: \, t \,= \,1, \, 2, \, ..., \,T)$. A natural way of initializing the mean utilities is using the ones that we would have if the model did not have random coefficients (i.e., $\boldsymbol{\sigma} = 0$) such that we had a standard logit model:
\begin{equation}
    \widehat{\delta}^{0}_{jt} \; = \;
    \ln(s_{jt}) - \ln(s_{0t})
\end{equation}
Then, at every (outer) iteration $n \geq 1$ we perform two tasks:
\begin{itemize}
    \item [1.] \textit{Pseudo-GMM estimation} of 
    $\boldsymbol{\theta}$, given $\widehat{\boldsymbol{\delta}}^{n-1}$:
    \begin{equation}
        \widehat{\boldsymbol{\theta}}^{n}
        \; = \;
        \arg \min_{\boldsymbol{\theta}} 
        \; \; 
        \widetilde{Q}
        \left(\widehat{\boldsymbol{\delta}}^{n-1}, \boldsymbol{\theta} \right) .
    \end{equation}
    \item [2.] \textit{One Newton-Kantorovich iteration} to update the vector of mean utilities:
    \begin{equation}
    \widehat{\boldsymbol{\delta}}^{n}_{t}
    \; = \;
    \widehat{\boldsymbol{\delta}}^{n-1}_{t} + 
    \left[ 
        \nabla_{\delta^{\prime}} \ln\boldsymbol{\symms}(\widehat{\boldsymbol{\delta}}^{n-1}_{t}, \widehat{\boldsymbol{\sigma}}^{n})
    \right]^{-1} \, 
    \left[ 
        \ln \boldsymbol{s}_{t} - 
        \ln \boldsymbol{\symms}(\widehat{\boldsymbol{\delta}}^{n-1}_{t}, \widehat{\boldsymbol{\sigma}}^{n})
    \right] 
    \qquad \text{for } 
        t = 1, \, 2, \, \dots, \, T. \label{eq_outer_npgmma_newton}
    \end{equation}
    where $\nabla_{\delta^{\prime}} \ln\boldsymbol{\symms}(\boldsymbol{\delta}, \boldsymbol{\sigma})$ denotes the $J \times J$ Jacobian matrix of market shares with respect to mean utilities. 
\end{itemize}

The main computational advantage of this algorithm over NFXP is that it avoids repeatedly solving the $T$ fixed-point problems. Instead, the NP-GMM algorithm solves these $T$ fixed-point problems only once -—at convergence. This efficiency comes at the cost of solving multiple pseudo-GMM minimization problems. However, these minimizations are computationally inexpensive because the pseudo-GMM criterion function is quadratic in $\boldsymbol{\theta}$ such that the pseudo-GMM estimator has a simple closed-form solution.

However, the approach described above suffers from a fundamental limitation. The pseudo-GMM criterion function in equation \eqref{eq_am_criterion} depends on the incidental parameters $\boldsymbol{\delta}$ and a subset of the structural parameters, $\boldsymbol{\beta}$, but not on the remaining structural parameters, $\boldsymbol{\sigma}$. As a result, these parameters cannot be identified or estimated within this framework.

In the next section, we introduce an alternative pseudo-GMM criterion function that depends on a set of incidental parameters distinct from $\boldsymbol{\delta}$. This new criterion enables the estimation of all structural parameters within a NP-GMM framework. Crucially, it retains the computational simplicity of the original pseudo-GMM approach.

\subsection{Our NP-GMM estimator \label{sec_our_NPGMM}}

\subsubsection{A new regression-like representation of the model}

Let $s_{0it}$ denote the market share of the outside good for consumer $i$ in market $t$. This consumer-level market share can be expressed as a function of the mean utilities $\boldsymbol{\delta}_{t}$ and $\boldsymbol{\sigma}$:
\begin{equation}
    s_{0it} \, = \, 
    \lambda_{it}\left(
        \boldsymbol{\delta}_{t}, \, \boldsymbol{\sigma}
    \right)
    \, = \,
    \displaystyle
    \frac{1}
    { 1 + \displaystyle \sum_{j=1}^J 
    \exp \Bigl( 
        \delta_{jt} + 
        \textstyle
        \sum_{k=1}^{K} x^{k}_{jt} \,
        [ \sigma^{k} \, \nu^{k}_{it} + 
        \boldsymbol{d}_{it}^{\prime} \, \boldsymbol{\pi}^{k} 
        ]
        \Bigr)
    }
\end{equation}
Combining the definition of $s_{0it}$ with equation \eqref{eq_model_mshares} for the model-implied market shares, we have:
\begin{equation}
    \begin{array}[c]{rcl}
    s_{jt} & = & 
    \displaystyle
    \frac{1}{N}
    \sum_{i=1}^{N}
    s_{0it} \; 
    \exp \Bigl( 
        \delta_{jt} + 
        \textstyle
        \sum_{k=1}^{K} x^{k}_{jt} \,
        [\sigma^{k} \, \nu^{k}_{it} + 
        \boldsymbol{d}_{it}^{\prime} \, \boldsymbol{\pi}^{k} ]
        \Bigr) \\ \\
    & = &
    \exp( \delta_{jt}) \; 
    \displaystyle
    \frac{1}{N}
    \sum_{i=1}^{N}
    \lambda_{it} \; 
    \exp \Bigl( 
        \textstyle
        \sum_{k=1}^{K} x^{k}_{jt} \,
        [\sigma^{k} \, \nu^{k}_{it} + 
        \boldsymbol{d}_{it}^{\prime} \, \boldsymbol{\pi}^{k} ]
        \Bigr) 
    \end{array}
\end{equation}
That implies the following regression-like expression:
\begin{equation}
    \ln \left( s_{jt} \right) 
    \; = \; 
    \boldsymbol{x}_{jt}^{\prime} \, \boldsymbol{\beta} 
    \, + \,
    h\left( 
        \boldsymbol{\lambda}_{t}, \boldsymbol{x}_{jt}, 
        \boldsymbol{\sigma}
    \right) 
    \, + \,
    \xi_{jt}
\label{eq_new_reg}
\end{equation}
where $\boldsymbol{\lambda}_{t}$ represents the vector of consumer-level market shares for the outside good, $(s_{0it}: \, i=1, 2, \dots N)$, and:\footnote{Note that when $\boldsymbol{\sigma} =0$, the function $h\left( \boldsymbol{\lambda}_{t}, \boldsymbol{x}_{jt}, \boldsymbol{\sigma} \right)$ simplifies to $\ln(s_{0t})$, and the regression equation \eqref{eq_new_reg} corresponds to the standard logit model.}
\begin{equation}
    h\left( 
        \boldsymbol{\lambda}_{t}, \boldsymbol{x}_{jt}, 
        \boldsymbol{\sigma}
    \right) 
    \; \equiv \; 
    \displaystyle
    \ln \left(
    \frac{1}{N}
    \sum_{i=1}^{N}
    \lambda_{it} \, 
    \exp \left\{ 
        \textstyle
        \sum_{k=1}^{K} x^{k}_{jt} \,
        [\sigma^{k} \, \nu^{k}_{it} + 
        \boldsymbol{d}_{it}^{\prime} \, \boldsymbol{\pi}^{k} ]
        \right\}
    \right).
\end{equation}

An interesting property of the vector $\boldsymbol{\lambda}_{t}$ is that—although it does not vary across products $j$—it captures all the information from the market shares of products other than $j$ that is required to compute the inverse demand function $\delta_{jt} = \symms^{-1}_{j}(\boldsymbol{s}_{t}, \boldsymbol{x}_{t}, \boldsymbol{\sigma})$. Specifically, the model implies:
\begin{equation} 
    \delta_{jt} \; = \; 
    \symms^{-1}_{j}(\boldsymbol{s}_{t}, \boldsymbol{x}_{t}, \boldsymbol{\sigma}) 
    \; = \;
    \ln \left( s_{jt} \right) -
    h\left( 
        \boldsymbol{\lambda}_{t}, \boldsymbol{x}_{jt}, 
        \boldsymbol{\sigma}
    \right) 
    \label{NPGMM}
\end{equation}
This property has an important implication for constructing the Pseudo-GMM criterion function used in our estimator: once we treat $\boldsymbol{\lambda}_{t}$ as a vector of incidental parameters, we no longer need to include $\boldsymbol{\delta}_{t}$ explicitly. Thus, conditional on $\boldsymbol{\lambda}_{t}$, recovering mean utilities $\boldsymbol{\delta}_{t}$ requires only evaluating the closed form expression $h\left( \boldsymbol{\lambda}_{t}, \boldsymbol{x}_{jt}, \boldsymbol{\sigma} \right)$ separately for each product, and does not require solving a $J$-dimensional fixed point.

Given the regression equation \eqref{eq_new_reg} and treating $\boldsymbol{\lambda} =(\boldsymbol{\lambda}_{t}: t=1, 2, \dots T)$ as incidental parameters, we can define the following pseudo-GMM criterion function:
{\small
\begin{equation}
    Q(\boldsymbol{\lambda}, \boldsymbol{\theta}) 
    \; = \;
    \displaystyle 
    \left(
    \sum_{j,t}
    \boldsymbol{z}_{jt} \, 
    \left[ 
        \ln \left( s_{jt} \right) - \boldsymbol{x}_{jt}^{\prime} \boldsymbol{\beta} -  h\left( 
        \boldsymbol{\lambda}_{t}, \boldsymbol{x}_{jt}, 
        \boldsymbol{\sigma}
        \right) 
    \right]
    \right)^{\prime} 
    \, \boldsymbol{W} \,
    \left(
    \sum_{j,t}
    \boldsymbol{z}_{jt} \, 
    \left[ 
        \ln \left( s_{jt} \right) - \boldsymbol{x}_{jt}^{\prime} \boldsymbol{\beta} -  h\left( 
        \boldsymbol{\lambda}_{t}, \boldsymbol{x}_{jt}, 
        \boldsymbol{\sigma}
        \right) 
    \right]
    \right)
\label{eq_our_criterion}
\end{equation}
} 

\begin{define} \label{def_our_npgmm}
    Our \textbf{NP-GMM estimator} is defined as a triple $(\widehat{\boldsymbol{\lambda}}, \, \widehat{\boldsymbol{\delta}}, \, \widehat{\boldsymbol{\theta}})$ satisfying the following three  conditions:
    \begin{itemize}
        \item NP-1. Given $\widehat{\boldsymbol{\lambda}}$, the parameter vector $\widehat{\boldsymbol{\theta}}$ minimizes the pseudo-GMM criterion function $Q$:
        \begin{equation}
            \widehat{\boldsymbol{\theta}} 
            \; = \;
            \arg \min_{\boldsymbol{\theta}} 
            \; \; 
            Q(\widehat{\boldsymbol{\lambda}}, \,
            \boldsymbol{\theta}) 
        \end{equation}

        \item NP-2. Given the structural parameters $\widehat{\boldsymbol{\theta}}$, the vector of mean utilities $\widehat{\boldsymbol{\delta}}_{t}$ solves, for each market $t$, the system of $J$ nonlinear equations implied by the demand system:
        \begin{equation}
            \boldsymbol{s}_{t}
            \; = \;
            \boldsymbol{\symms} 
            (\widehat{\boldsymbol{\delta}}_{t}, \, 
            \boldsymbol{x}_t, \,
            \widehat{\boldsymbol{\theta}}) 
            \qquad \text{for } 
            t = 1, \, 2, \, \dots, \, T.
        \end{equation}

        \item NP-3. Given $( \widehat{\boldsymbol{\delta}},  \widehat{\boldsymbol{\theta}})$, the vector $\widehat{\boldsymbol{\lambda}}$ contains the consumer-level market shares of the outside good which are implied by the model:
        \begin{equation}
            \widehat{\lambda}_{it} \, = \, 
            \lambda_{it} \left(
                \widehat{\boldsymbol{\delta}}_{t}, \, \widehat{\boldsymbol{\sigma}}
            \right)
            \qquad \text{for } 
            i = 1, \, 2, \, \dots, \, N,
            \text{ and } 
            t = 1, \, 2, \, \dots, \, T.
            \qquad \blacksquare
        \end{equation}    
    \end{itemize}
\end{define}

In the remainder of this section, we present two main results. First, we explain why the NP-GMM estimator differs from the standard GMM estimator. Second, we establish the consistency and asymptotic normality of the NP-GMM estimator under identification assumptions that are no stronger than those required for the consistency of the GMM estimator.

\subsubsection{The difference between GMM and NP-GMM estimators}

To illustrate the difference between the GMM and NP-GMM estimators, it is useful to define two mappings that relate the incidental parameters $\boldsymbol{\delta}$ and $\boldsymbol{\lambda}$ to the structural parameters $\boldsymbol{\theta}$. We begin by defining Berry's fixed-point mapping $\Psi$:
\begin{equation}
    \boldsymbol{\delta} 
    \; = \;
    \Psi\left( \boldsymbol{\delta}, \, \boldsymbol{\sigma} \right) 
    \; \equiv \;
    \boldsymbol{\delta} + 
    \ln\left( \boldsymbol{s} \right) - 
    \ln\left( 
        \boldsymbol{\symms} 
        \left( 
            \boldsymbol{\delta}, \boldsymbol{x}, 
            \boldsymbol{\sigma}
        \right)
    \right) 
\end{equation}
This mapping is continuously differentiable in both $\boldsymbol{\delta}$ and $\boldsymbol{\sigma}$. By \textit{Berry's Inversion Theorem} \citep{berry1994}, for any fixed value of $\boldsymbol{\sigma}$, the mapping $\Psi\left( \cdot , \boldsymbol{\sigma} \right)$ is a contraction. Consequently, there exists a unique solution $\boldsymbol{\delta}$ to the fixed-point problem for each $\boldsymbol{\sigma}$. We denote this solution by $\boldsymbol{\delta}^\ast(\boldsymbol{\sigma})$. Given $\boldsymbol{\sigma}$ and the associated vector of mean utilities $\boldsymbol{\delta}^\ast(\boldsymbol{\sigma})$, we can then define the mapping $\boldsymbol{\lambda}^\ast(\boldsymbol{\sigma})$, which yields the consumer-level market shares for the outside good:
\begin{equation}
    \boldsymbol{\lambda} 
    \; = \;
    \lambda^{\ast} \left( \boldsymbol{\sigma} \right) 
    \; \equiv \;
    \lambda \left( 
        \delta^{\ast}(\boldsymbol{\sigma}), \,
        \boldsymbol{\sigma}
    \right) 
\end{equation}

Taking into account the mappings $\delta^{\ast}$ and $\lambda^{\ast}$, we can establish a key relationship between the GMM and the NP-GMM criterion functions. For any vector $\boldsymbol{\sigma}$, we have:
\begin{equation}
    G \left( \boldsymbol{\theta} \right) 
    \; = \;
    G \left( \boldsymbol{\beta}, \,  
    \boldsymbol{\sigma} \right) 
    \; = \;
    Q \left( 
        \lambda^{\ast} \left( \boldsymbol{\sigma} \right), \,
        \boldsymbol{\theta}
    \right).
\end{equation}
This identity implies the following expression for the gradient of the GMM criterion function:
\begin{equation}
    \frac{\partial G}
    {\partial \boldsymbol{\beta}^{\prime}}      
    \; = \;
    \frac{\partial Q}
    {\partial \boldsymbol{\beta}^{\prime}}      
    \qquad \text{and} \qquad
    \frac{\partial G}
    {\partial \boldsymbol{\sigma}^{\prime}}      
    \; = \;
    \frac{\partial Q}{\partial \boldsymbol{\lambda}^{\prime}} 
    \cdot 
    \frac{\partial \lambda^{\ast} \left( \boldsymbol{\sigma} \right)}
    {\partial \boldsymbol{\sigma}^{\prime}} 
    \, + \,
    \frac{\partial Q}{\partial \boldsymbol{\sigma}^{\prime}}.
\label{eq_diff_gradients}
\end{equation}
It is important to note that neither 
$\partial Q/\partial \boldsymbol{\lambda}^{\prime}$ 
nor $\partial \lambda^{\ast}/\partial \boldsymbol{\sigma}^{\prime}$ 
are generally equal to zero. As a result, the condition 
$\partial G/\partial \boldsymbol{\sigma}^{\prime} = 0$  does not imply 
$\partial Q/\partial \boldsymbol{\sigma}^{\prime} = 0$, and vice versa. 

We are now in a position to formalize the distinction between the GMM and the NP-GMM estimators, denoted by $\widehat{\boldsymbol{\theta}}_{gmm}$ and $\widehat{\boldsymbol{\theta}}_{np}$, respectively. By construction, these estimators satisfy the following first-order optimality conditions:

\begin{equation}
    \left\{
    \begin{array}{rcl}
        \displaystyle
        \partial G \left( \widehat{\boldsymbol{\theta}}_{gmm} \right)
        / \partial \boldsymbol{\theta}  
        & = & 0 \\ 
        \displaystyle
        \partial Q \left( 
        \widehat{\boldsymbol{\lambda}}_{np}, \,
        \widehat{\boldsymbol{\theta}}_{np} \right)
        / \partial \boldsymbol{\theta} 
        & = & 0, \quad \text{with } \widehat{\boldsymbol{\lambda}}_{np} 
        = \lambda^{\ast}\left( \widehat{\boldsymbol{\theta}}_{np} \right)
    \end{array}
    \right.
\end{equation}
Using equation \eqref{eq_diff_gradients}, it follows that the NP-GMM estimator does not satisfy the first-order conditions defining the GMM estimator, and vice versa. Therefore, the two methods yield distinct estimators.

\subsubsection{Asymptotic properties of the NP-GMM estimator}

Throughout the paper, we use the hat notation $\widehat{\cdot}$ to denote statistics and functions that involve sampling error, and the subscript $_{0}$ to indicate their population counterparts. Let $\boldsymbol{\theta}_{0}$ be the true value the structural parameters in the population, and let $\boldsymbol{\lambda}_{0}$ be the corresponding value for $\boldsymbol{\lambda}$, i.e., $\boldsymbol{\lambda}_{0} = \lambda^{\ast}(\boldsymbol{\theta}_{0})$. Remember that the mapping $\lambda^{\ast}(\boldsymbol{\theta})$ is deterministic, i.e., it does not incorporate sampling/estimation error. Define the population counterpart of the sample criterion functions, for GMM and NP-GMM estimation: 
\begin{equation}
    G_{0} \left( 
        \boldsymbol{\theta} 
    \right) 
    \; \equiv \; 
    \mathbb{E} 
    \left[ 
        \, G \left( \boldsymbol{\theta} \right) \, 
    \right]
    \qquad \text{ and } \qquad
    Q_{0} \left( 
        \boldsymbol{\lambda}, \boldsymbol{\theta}
    \right) 
    \; \equiv \; 
    \mathbb{E} 
    \left[ 
        \, Q \left( \boldsymbol{\lambda}, \boldsymbol{\theta} \right) \, 
    \right]
\end{equation}

The following assumptions summarize the identification and regularity conditions needed for the consistency and asymptotic normality of the NP-GMM estimator. We study asymptotic properties of the estimators as either the number of products or the number of markets grows large, that is, as $JT \rightarrow \infty$.

\medskip 

\begin{assumption} \label{assumption_1} 
\textit{
    For any $\boldsymbol{\theta} \neq \boldsymbol{\theta}_{0}$, 
    $
        Q_{0} \left( 
            \lambda^{\ast}(\boldsymbol{\theta}), 
            \, \boldsymbol{\theta} 
        \right) 
        \; > \; 
        Q_{0} \left( 
            \boldsymbol{\lambda}_{0}, 
            \, \boldsymbol{\theta}_{0} 
        \right) = 0
    $. 
    Note that, by construction, this is equivalent to 
    $
        G_{0} \left( 
            \boldsymbol{\theta} 
        \right) 
        \; > \; 
        G_{0} \left( 
            \boldsymbol{\theta}_{0} 
        \right) = 0
    $ 
    for any $\boldsymbol{\theta} \neq \boldsymbol{\theta}_{0}$.
    $\qquad \blacksquare$
}
\end{assumption}

\medskip 

Assumption \ref{assumption_1} imposes standard conditions ensuring correct model specification and parameter identification. At the true parameter values, the moment conditions are satisfied, and the corresponding criterion function attains its minimum at zero. For any parameter value different from the true one, the moment conditions are violated, and the criterion function remains strictly positive. Importantly, this identification assumption holds equivalently under both standard GMM estimation, based on $G_{0}$, and the NP-GMM framework, based on $Q_{0}$.

Assumption \ref{assumption_2} presents standard regularity conditions which are used to derive the asymptotic distribution of the NP-GMM estimator.

\medskip 

\begin{assumption} \label{assumption_2} 
\textit{
    The following conditions hold: (a) $\Theta$ is a compact set. (b) $Q(\boldsymbol{\lambda}, \boldsymbol{\theta})$ is continuous and bounded. (c) $(\boldsymbol{\lambda}_0, \boldsymbol{\theta}_0) \in int([0,1]^{N} \times \Theta)$. (d) $(\boldsymbol{\lambda}_0, \boldsymbol{\theta}_0)$ is an isolated NP-GMM fixed point -- either it is unique, or there is a ball around it that does not contain any other NP-GMM fixed point. (e) $Q_{0}(\boldsymbol{\lambda}, \boldsymbol{\theta})$ is concave in $\boldsymbol{\theta}$ for any $\boldsymbol{\lambda}$ in a neighborhood around $\boldsymbol{\lambda}_{0}$.
    $\qquad \blacksquare$
}
\end{assumption}

\medskip 

\begin{prop} \label{proposition_1} 
    \textit{Under Assumptions \ref{assumption_1} and \ref{assumption_2}, as the sample size $JT$ goes to infinity, the NP-GMM estimator converges to the true value $(\boldsymbol{\lambda}_{0}, \boldsymbol{\theta}_{0})$ with probability approaching one. $\qquad \blacksquare$}
\end{prop}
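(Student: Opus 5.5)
The plan follows the consistency argument for Nested Pseudo-Likelihood estimators in \citet{aguirregabiria_mira_2007}, adapted to the $(\boldsymbol{\lambda},\boldsymbol{\theta})$ reformulation. The NP-GMM estimator is a fixed point of the population-free mapping
\[
\phi(\boldsymbol{\lambda}) \equiv \lambda^{\ast}\bigl(\widetilde{\boldsymbol{\theta}}(\boldsymbol{\lambda})\bigr),
\qquad
\widetilde{\boldsymbol{\theta}}(\boldsymbol{\lambda}) \equiv \arg\min_{\boldsymbol{\theta}\in\Theta} Q(\boldsymbol{\lambda},\boldsymbol{\theta}),
\]
where conditions NP-1 to NP-3 are exactly $\widehat{\boldsymbol{\theta}}=\widetilde{\boldsymbol{\theta}}(\widehat{\boldsymbol{\lambda}})$ and $\widehat{\boldsymbol{\lambda}}=\phi(\widehat{\boldsymbol{\lambda}})$. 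Define the population analogues $\boldsymbol{\theta}_0(\boldsymbol{\lambda})\equiv\arg\min_{\boldsymbol{\theta}} Q_0(\boldsymbol{\lambda},\boldsymbol{\theta})$ and $\phi_0(\boldsymbol{\lambda})\equiv\lambda^{\ast}(\boldsymbol{\theta}_0(\boldsymbol{\lambda}))$.

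\textbf{Step 1: the truth is a population fixed point.} By Assumption \ref{assumption_1}, $Q_0(\boldsymbol{\lambda}_0,\boldsymbol{\theta}_0)=0$. Since $Q_0\ge 0$ (it is a quadratic form in a positive definite $\boldsymbol{W}$, up to a nonnegative variance term that I treat as vanishing after normalisation by $JT$), $\boldsymbol{\theta}_0$ minimizes $Q_0(\boldsymbol{\lambda}_0,\cdot)$. Uniqueness of this minimizer is needed near $\boldsymbol{\lambda}_0$, and I would extract it from Assumption \ref{assumption_2}(e). That assumption states concavity, but it presumably means the curvature condition that makes the inner problem well posed. Hence $\boldsymbol{\theta}_0(\boldsymbol{\lambda}_0)=\boldsymbol{\theta}_0$ and $\phi_0(\boldsymbol{\lambda}_0)=\boldsymbol{\lambda}_0$. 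By Assumption \ref{assumption_2}(d), this fixed point is isolated.

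\textbf{Step 2: uniform convergence.} Normalise $Q$ by $(JT)^{-2}$ so the moments are sample averages. Pointwise convergence $Q\to Q_0$ follows from a law of large numbers for $\boldsymbol{z}_{jt}\xi_{jt}$, $\boldsymbol{z}_{jt}\boldsymbol{x}_{jt}$ and $\boldsymbol{z}_{jt}h(\cdot)$. Compactness of $[0,1]^N\times\Theta$ (Assumption \ref{assumption_2}(a)), together with continuity and boundedness of $Q$ (Assumption \ref{assumption_2}(b)) and $h$ being smooth in $(\boldsymbol{\lambda},\boldsymbol{\sigma})$, yields stochastic equicontinuity. This gives $\sup_{\boldsymbol{\lambda},\boldsymbol{\theta}}|Q-Q_0|\to_p 0$.

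\textbf{Step 3: the inner argmin is uniformly consistent.} Using Step 2 and the unique minimizer on a neighbourhood $\mathcal{N}$ of $\boldsymbol{\lambda}_0$, the standard argmin argument gives $\sup_{\boldsymbol{\lambda}\in\mathcal{N}}\|\widetilde{\boldsymbol{\theta}}(\boldsymbol{\lambda})-\boldsymbol{\theta}_0(\boldsymbol{\lambda})\|\to_p 0$. This requires the minimum to be well separated uniformly in $\boldsymbol{\lambda}$, which holds by continuity of $Q_0$ on the compact set. Continuity of $\lambda^{\ast}$ (via the implicit function theorem applied to Berry's contraction $\Psi$) then gives $\sup_{\mathcal{N}}\|\phi-\phi_0\|\to_p 0$.

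\textbf{Step 4: fixed points of $\phi$ converge to the isolated fixed point of $\phi_0$.} Choose a small closed ball $B\subset\mathcal{N}$ around $\boldsymbol{\lambda}_0$ that contains no other fixed point of $\phi_0$. Continuity of $\phi_0$ and compactness of $\partial B$ give $\inf_{\partial B}\|\boldsymbol{\lambda}-\phi_0(\boldsymbol{\lambda})\|=\eta>0$. For the same reason, on any smaller shell $B\setminus B_\varepsilon$ the gap $\|\boldsymbol{\lambda}-\phi_0(\boldsymbol{\lambda})\|$ is bounded below. By Step 3, with probability approaching one every fixed point of $\phi$ in $B$ therefore lies in $B_\varepsilon$. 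Existence of such a fixed point in $B$ follows from a degree or Brouwer argument, using that $\boldsymbol{\lambda}-\phi(\boldsymbol{\lambda})$ is uniformly close to $\boldsymbol{\lambda}-\phi_0(\boldsymbol{\lambda})$ on $\partial B$; alternatively, I would interpret the proposition, as \citet{aguirregabiria_mira_2007} do, as the statement that the estimator selected in $B$ is consistent. Hence $\widehat{\boldsymbol{\lambda}}\to_p\boldsymbol{\lambda}_0$. Then $\widehat{\boldsymbol{\theta}}=\widetilde{\boldsymbol{\theta}}(\widehat{\boldsymbol{\lambda}})\to_p\boldsymbol{\theta}_0(\boldsymbol{\lambda}_0)=\boldsymbol{\theta}_0$, combining Step 3 with continuity of $\boldsymbol{\theta}_0(\cdot)$ from Berge's maximum theorem.

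\textbf{Main obstacle.} The dimension of $\boldsymbol{\lambda}$ is $NT$, which grows with the sample, so compactness of $[0,1]^{NT}$ in a fixed finite-dimensional sense fails. This is exactly where existing NP proofs break down. My first attempt would be to work market by market, exploiting that each $\boldsymbol{\lambda}_t\in[0,1]^N$ has fixed dimension $N$ and enters $Q$ only through averages of $h(\boldsymbol{\lambda}_t,\boldsymbol{x}_{jt},\boldsymbol{\sigma})$. I would then measure distance in the sup norm over $t$ and reduce the problem to the finite-dimensional parameter $\boldsymbol{\theta}$. Concretely, I would substitute $\boldsymbol{\lambda}_t=\lambda_t^{\ast}(\boldsymbol{\sigma}')$ and treat the fixed-point problem as one in $\boldsymbol{\theta}'\mapsto\widetilde{\boldsymbol{\theta}}(\lambda^{\ast}(\boldsymbol{\theta}'))$ on the compact set $\Theta$. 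This makes the Step 4 degree argument finite-dimensional, and the isolation condition in Assumption \ref{assumption_2}(d) is then naturally imposed on $\boldsymbol{\theta}$.
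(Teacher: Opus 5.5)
\textbf{There is a genuine gap: your Step 4 proves only a local result.} It shows that fixed points of $\phi$ lying in a ball $B$ drawn around the unknown $\boldsymbol{\lambda}_0$ collapse to $\boldsymbol{\lambda}_0$. The NP-GMM estimator, however, is not defined relative to $B$. The paper explicitly allows the population fixed-point set $\Upsilon_0$ to contain points other than $(\boldsymbol{\lambda}_0,\boldsymbol{\theta}_0)$, since Assumption \ref{assumption_2}(d) only makes the truth isolated. When there are several sample fixed points, the estimator is the element of $\widehat{\Upsilon}$ that minimizes $\widehat{Q}$.

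Your argument says nothing about sample fixed points outside $B$, nor about why the selection rule should pick the one inside $B$. Falling back on ``the estimator selected in $B$'' replaces the proposition by an infeasible statement. It is also not what Aguirregabiria and Mira (2007) do: they select the fixed point with the highest pseudo-likelihood and prove consistency of that global choice. A telltale sign is that you use Assumption \ref{assumption_1} only to get $Q_0(\boldsymbol{\lambda}_0,\boldsymbol{\theta}_0)=0$. Its real content, $Q_0(\lambda^{\ast}(\boldsymbol{\theta}),\boldsymbol{\theta})>0$ for all $\boldsymbol{\theta}\neq\boldsymbol{\theta}_0$, is exactly what rules out selecting a spurious fixed point.

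\textbf{How the paper handles this.} The paper closes the gap with two global steps.
\begin{enumerate}
\item It shows that all of $\widehat{\Upsilon}$ lies near $\Upsilon_0$. Define $\Delta_0(\boldsymbol{\lambda},\boldsymbol{\theta})=Q_0(\boldsymbol{\lambda},\boldsymbol{\theta})-\min_{\boldsymbol{c}}Q_0(\boldsymbol{\lambda},\boldsymbol{c})$. It is continuous by Berge, and it is bounded below by some $\varepsilon>0$ on the compact constraint set $\mathcal{E}=\{\boldsymbol{\lambda}=\lambda^{\ast}(\boldsymbol{\theta})\}$ away from $\Upsilon_0$. Uniform convergence then forces $\Delta_0<\varepsilon$ at every sample fixed point. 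This device needs neither a unique inner minimizer nor uniform consistency of $\widetilde{\boldsymbol{\theta}}(\cdot)$, which you had to extract from a reinterpretation of Assumption \ref{assumption_2}(e).
\item It uses Assumption \ref{assumption_1} to get a gap $\eta>0$ between $Q_0$ near the truth and $Q_0$ near the other points of $\Upsilon_0$. Hence, with probability approaching one, the $\widehat{Q}$-minimizing fixed point is the one near the truth.
\end{enumerate}

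\textbf{A short fix within your framework.} Your reduction to $\Theta$ makes the selection step brief. At any fixed point, $\widehat{Q}(\lambda^{\ast}(\boldsymbol{\theta}),\boldsymbol{\theta})=G(\boldsymbol{\theta})$. Let $(\boldsymbol{\lambda}_B,\boldsymbol{\theta}_B)$ be your sample fixed point in $B$; then $G(\widehat{\boldsymbol{\theta}})\le G(\boldsymbol{\theta}_B)$. Your Step 2 gives $\sup_{\Theta}|G-G_0|\to_p 0$, so
$G_0(\widehat{\boldsymbol{\theta}})\le G(\widehat{\boldsymbol{\theta}})+o_p(1)\le G(\boldsymbol{\theta}_B)+o_p(1)\le G_0(\boldsymbol{\theta}_B)+o_p(1)=o_p(1)$.
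Since $G_0$ is continuous on compact $\Theta$ with unique zero at $\boldsymbol{\theta}_0$, this yields $\widehat{\boldsymbol{\theta}}\to_p\boldsymbol{\theta}_0$ and $\widehat{\boldsymbol{\lambda}}=\lambda^{\ast}(\widehat{\boldsymbol{\theta}})\to_p\boldsymbol{\lambda}_0$.

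\textbf{Remaining issue: existence.} This argument still needs a sample fixed point near the truth to exist. Your degree argument does not deliver that from isolation alone, because an isolated fixed point can have index zero. You would need something like nonsingularity of $\boldsymbol{I}-\partial\phi_0/\partial\boldsymbol{\lambda}^{\prime}$ at $\boldsymbol{\lambda}_0$. The paper's own Step 3 likewise simply presupposes such a fixed point.
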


\noindent \textit{Proof:} See Appendix \ref{proof_consistency}.

\medskip 

Proposition \ref{proposition_2} establishes the asymptotic distribution of the NP-GMM estimator. To present this result, it is useful to express the first-order conditions for the NP-GMM estimator as follows:
\begin{equation}
    \frac{\partial Q(\boldsymbol{\lambda}, \boldsymbol{\theta})}
    {\partial \boldsymbol{\theta}} 
    \; = \;
    \frac{1}{JT} \displaystyle
    \sum_{j,t} \, g_{jt}(\boldsymbol{\lambda}, \boldsymbol{\theta})
    \; = \;
    \frac{1}{JT} \displaystyle
    \sum_{j,t} \boldsymbol{z}_{jt}^{*}(\boldsymbol{\lambda}, \boldsymbol{\theta}) \, 
    \xi_{jt}(\boldsymbol{\lambda}, \boldsymbol{\theta})
    \; = \; 0
\end{equation}
Here, $\xi_{jt}(\boldsymbol{\lambda}, \boldsymbol{\theta})$ represents 
$\ln \left( s_{jt} \right) - \boldsymbol{x}_{jt}^{\prime} \boldsymbol{\beta} -  h\left( \boldsymbol{\lambda}_{t}, \boldsymbol{x}_{jt}, \boldsymbol{\sigma}
\right)$, and $\boldsymbol{z}_{jt}^{*}(\boldsymbol{\lambda}, \boldsymbol{\theta})$ denotes the effective vector of instruments—equal in dimension to the number of parameters in $\boldsymbol{\theta}$—obtained by multiplying the original instrument vector $\boldsymbol{z}_{jt}$ by the weighting matrix $\boldsymbol{W}$ and the gradient vector of $\xi_{jt}(\boldsymbol{\lambda}, \boldsymbol{\theta})$ with respect to $\boldsymbol{\theta}$. We also define the following matrices:
\begin{equation}
    \Omega_{\theta \theta} \equiv 
    \mathbb{E} \left[ 
        \frac{\partial g_{jt}(\boldsymbol{\lambda}_{0}, \boldsymbol{\theta}_{0})}
        {\partial \boldsymbol{\theta}} 
        \frac{\partial g_{jt}(\boldsymbol{\lambda}_{0}, \boldsymbol{\theta}_{0})}
        {\partial \boldsymbol{\theta}^{\prime}} 
    \right], \,
    \Omega_{\theta \lambda} \equiv 
    \mathbb{E} \left[ 
        \frac{\partial g_{jt}(\boldsymbol{\lambda}_{0}, \boldsymbol{\theta}_{0})}
        {\partial \boldsymbol{\theta}} 
        \frac{\partial g_{jt}(\boldsymbol{\lambda}_{0}, \boldsymbol{\theta}_{0})}
        {\partial \boldsymbol{\lambda}^{\prime}} 
    \right], \,
    \Lambda_{\theta} \equiv 
        \frac{\partial \boldsymbol{\lambda}^{\ast}(\boldsymbol{\theta}_{0})}
        {\partial \boldsymbol{\theta}^{\prime}} 
\end{equation}

\medskip 

\begin{prop} \label{proposition_2} 
    \textit{Under Assumptions \ref{assumption_1} and \ref{assumption_2}, the NP-GMM estimator has a normal asymptotic distribution: 
    $\sqrt{JT} \left( \widehat{\boldsymbol{\theta}} - \boldsymbol{\theta}_{0} \right)$ $\rightarrow_{d} N(0, \boldsymbol{V}_{np})$ with:
    \begin{equation}
        \boldsymbol{V}_{np} \, = \, 
        \left[ 
            \Omega_{\theta \theta} + \Omega_{\theta \lambda} \, \Lambda_{\theta}
        \right]^{-1} 
        \, \Omega_{\theta \theta} \,
        \left[ 
            \Omega_{\theta \theta} + \Lambda_{\theta}^{\prime} 
            \, \Omega_{\theta \lambda}^{\prime}
        \right]^{-1} 
        \qquad \blacksquare
    \label{eq_avar_np}
    \end{equation}
    }
\end{prop}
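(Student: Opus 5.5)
We will follow the standard route for Nested Pseudo estimators (as in Aguirregabiria and Mira), adapted to the stacked first-order condition. Proposition \ref{proposition_1} gives consistency, so $(\widehat{\boldsymbol{\lambda}}, \widehat{\boldsymbol{\theta}}) \to_p (\boldsymbol{\lambda}_0, \boldsymbol{\theta}_0)$ and, by Assumption \ref{assumption_2}(c), the estimator is interior with probability approaching one. The NP-GMM estimator is therefore characterized by the first-order condition
\begin{equation*}
    \frac{1}{JT} \sum_{j,t} g_{jt}\bigl(\lambda^{\ast}(\widehat{\boldsymbol{\theta}}), \widehat{\boldsymbol{\theta}}\bigr) \; = \; 0 .
\end{equation*}
The key device is to treat the incidental vector $\boldsymbol{\lambda}$ only through the deterministic, smooth mapping $\lambda^{\ast}$, which is continuously differentiable by the implicit function theorem applied to Berry's contraction $\Psi$. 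This reduces the problem to a finite-dimensional just-identified moment equation in $\boldsymbol{\theta}$, and the growing dimension of $\boldsymbol{\lambda}$ never enters the expansion explicitly. It appears only through the product $\Omega_{\theta\lambda}\Lambda_{\theta}$, a $\dim(\boldsymbol{\theta})\times\dim(\boldsymbol{\theta})$ object.

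Next we take a mean-value expansion of the first-order condition around $\boldsymbol{\theta}_0$, differentiating through both arguments:
\begin{equation*}
    0 \; = \; \frac{1}{JT}\sum_{j,t} g_{jt}(\boldsymbol{\lambda}_0,\boldsymbol{\theta}_0) \; + \; \Bigl[ \frac{1}{JT}\sum_{j,t} \Bigl( \frac{\partial g_{jt}}{\partial \boldsymbol{\theta}^{\prime}} + \frac{\partial g_{jt}}{\partial \boldsymbol{\lambda}^{\prime}} \, \frac{\partial \lambda^{\ast}}{\partial \boldsymbol{\theta}^{\prime}} \Bigr)\Big|_{\bar{\boldsymbol{\theta}}} \Bigr] (\widehat{\boldsymbol{\theta}} - \boldsymbol{\theta}_0),
\end{equation*}
with $\bar{\boldsymbol{\theta}}$ between $\widehat{\boldsymbol{\theta}}$ and $\boldsymbol{\theta}_0$. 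By consistency and a uniform law of large numbers on a neighborhood of $(\boldsymbol{\lambda}_0,\boldsymbol{\theta}_0)$, using continuity and boundedness from Assumption \ref{assumption_2}(b), the bracketed Jacobian converges in probability to $\Omega_{\theta\theta} + \Omega_{\theta\lambda}\Lambda_{\theta}$. Here we read the paper's $\Omega$ matrices as the limiting Jacobian blocks of the score, so that, under the GMM information-type equality, $\mathbb{E}[\partial g/\partial\boldsymbol{\theta}^{\prime}] = \Omega_{\theta\theta}$. Nonsingularity of this limit follows from the isolated-fixed-point condition (d) together with the concavity condition (e).

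For the leading term we apply a central limit theorem: $\frac{1}{\sqrt{JT}}\sum_{j,t} g_{jt}(\boldsymbol{\lambda}_0,\boldsymbol{\theta}_0) = \frac{1}{\sqrt{JT}}\sum_{j,t} \boldsymbol{z}^{\ast}_{jt}\xi_{jt} \to_d N(0, \Omega_{\theta\theta})$. This uses $\mathbb{E}(\boldsymbol{z}_{jt}\xi_{jt})=0$ at the truth (Assumption \ref{assumption_1}), and the variance equals $\Omega_{\theta\theta}$ by the same score-variance equality. Solving the expansion and applying Slutsky's theorem then gives the stated sandwich form
\begin{equation*}
    \boldsymbol{V}_{np} = \bigl[\Omega_{\theta\theta}+\Omega_{\theta\lambda}\Lambda_{\theta}\bigr]^{-1}\Omega_{\theta\theta}\bigl[\Omega_{\theta\theta}+\Lambda_{\theta}^{\prime}\Omega_{\theta\lambda}^{\prime}\bigr]^{-1}.
\end{equation*}

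The main obstacle is the uniform convergence of the $\boldsymbol{\lambda}$-component of the Jacobian, $\frac{1}{JT}\sum_{j,t} (\partial g_{jt}/\partial\boldsymbol{\lambda}^{\prime})(\partial\lambda^{\ast}/\partial\boldsymbol{\theta}^{\prime})$, because $\dim(\boldsymbol{\lambda}) = NT$ grows with $T$. Our plan is to exploit the market structure. Each $g_{jt}$ depends on $\boldsymbol{\lambda}$ only through $\boldsymbol{\lambda}_t$, an $N$-dimensional block with $N$ fixed, and $\partial\lambda^{\ast}_t/\partial\boldsymbol{\theta}^{\prime}$ depends only on market-$t$ data. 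Their product is therefore an average of $J$ times $T$ market-level terms, each of fixed dimension, to which an ordinary law of large numbers applies. Stochastic equicontinuity then follows from the bounded derivatives of the logit-type functions $h$ and $\lambda$ on a compact neighborhood.
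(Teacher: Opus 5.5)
\textbf{Overall approach.} Your argument is the paper's argument, packaged slightly differently. The paper stacks the first-order condition with $\widehat{\boldsymbol{\lambda}}-\lambda^{\ast}(\widehat{\boldsymbol{\theta}})=0$, expands both blocks jointly around $(\boldsymbol{\lambda}_0,\boldsymbol{\theta}_0)$, and substitutes the linearized second block into the first. You substitute $\lambda^{\ast}$ first and differentiate by the chain rule. Both routes give the Jacobian $\Omega_{\theta\theta}+\Omega_{\theta\lambda}\Lambda_{\theta}$, a CLT for the score at the truth, and the same sandwich. Yours has the advantage that the growing-dimensional vector $\widehat{\boldsymbol{\lambda}}-\boldsymbol{\lambda}_0$ never has to be controlled; only the $\dim(\boldsymbol{\theta})\times\dim(\boldsymbol{\theta})$ product $\Omega_{\theta\lambda}\Lambda_{\theta}$ enters.

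\textbf{The gap: nonsingularity.} One step does not go through as you justify it. Invertibility of $\Omega_{\theta\theta}+\Omega_{\theta\lambda}\Lambda_{\theta}$ does not follow from Assumption \ref{assumption_2}(d)--(e).
An isolated fixed point can have a singular Jacobian; compare the isolated root of $\theta^{3}=0$.
Condition (e) restricts $Q_0$ only as a function of $\boldsymbol{\theta}$ at fixed $\boldsymbol{\lambda}$, so at most it bears on the curvature block $\Omega_{\theta\theta}$.
The full matrix is not even symmetric. Let $M_{\theta}$ and $M_{\lambda}$ be the Jacobians of the population moments $\mathbb{E}[\boldsymbol{z}_{jt}\,\xi_{jt}(\boldsymbol{\lambda},\boldsymbol{\theta})]$ at the truth. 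In the limit the matrix is proportional to $M_{\theta}^{\prime}\boldsymbol{W}(M_{\theta}+M_{\lambda}\Lambda_{\theta})$, and this can be singular even when the GMM Jacobian $M_{\theta}+M_{\lambda}\Lambda_{\theta}$ has full column rank.
Nonsingularity therefore has to be imposed as an additional assumption, as is standard for NPL-type estimators. The paper's proof simply inverts the matrix without comment.

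\textbf{Two further implicit assumptions.} The ``information-type equality'' is what lets $\Omega_{\theta\theta}$ serve both as the Jacobian and as the variance of the score. It is not implied by Assumptions \ref{assumption_1}--\ref{assumption_2}: in general it requires $\boldsymbol{W}$ to be the efficient weighting matrix, the inverse of the variance of the moments. For a general positive definite $\boldsymbol{W}$, the middle factor of the sandwich is the score variance rather than $\Omega_{\theta\theta}$. The paper uses this silently too, so state it as an assumption.

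Your market-by-market LLN fits $T\to\infty$, but the paper lets $JT\to\infty$ through either index. With $T$ fixed and $J\to\infty$, each block $\partial\lambda^{\ast}_t/\partial\boldsymbol{\theta}^{\prime}$ depends on all $J$ products of its market. It is then a triangular-array object, to which an ordinary LLN does not directly apply.
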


\noindent \textit{Proof:} See Appendix \ref{proof_consistency}.

\medskip

The asymptotic variance of the GMM estimator is given by $\Omega_{\theta \theta}^{-1}$. In this model, the Jacobian matrix $\Lambda_{\theta}$ is nonzero, and it is straightforward to show that this implies that matrix $\boldsymbol{V}_{np} - \Omega_{\theta \theta}^{-1}$ is positive definite. As a result, the NP-GMM estimator is asymptotically less efficient than the standard GMM estimator. Nevertheless, as shown in our Monte Carlo experiments, the resulting loss in efficiency is negligible in practice.

In the existing literature on Nested Pseudo estimation methods—including \citet{aguirregabiria_mira_2002, aguirregabiria_mira_2007}, as well as more recent applications such as \citet{lin2024endogeneity} and \citet{dearing2025efficient}—the asymptotic analysis is conducted under the maintained assumption that the number of incidental parameters is finite. This typically follows from a finite state space, which implies a fixed and finite number of conditional choice probabilities that does not grow with the sample size. In sharp contrast, the BLP model departs fundamentally from this setting: the dimension of the mean utility vector (the collection of $\delta$’s) increases with the number of products and therefore diverges as the sample size grows. As a result, existing proofs of consistency and asymptotic normality for NP-GMM or NPL estimators do not apply in this environment. This paper, specifically Propositions \ref{proposition_1} and \ref{proposition_2}, fills this gap by establishing consistency and asymptotic normality of the NP-GMM estimator in a setting where the number of incidental parameters grows without bound as the number of products tends to infinity, thereby providing a new asymptotic framework for NP-GMM estimation in high- and infinite-dimensional settings.

\subsection{Our NP-GMM algorithm \label{sec_npgmm_algorithm}}

This section presents the algorithm we use to compute our NP-GMM estimator. The algorithm is similar to the one described above for the Aguirregabiria-Mira method, but with two key differences. First, the pseudo-GMM criterion function $Q$ now depends on the full set of structural parameters. Second, before each pseudo-GMM iteration, we update the value of $\widehat{\boldsymbol{\lambda}}$ using its closed-form expression as a function of $(\widehat{\boldsymbol{\delta}}, \widehat{\boldsymbol{\sigma}})$.

\medskip

\noindent \textbf{NP-GMM algorithm:}

\begin{itemize}
    \item [0.] \textbf{Initialization:} The algorithm is initialized with values for the vector of mean utilities $\widehat{\boldsymbol{\delta}}^{0}$ and for the structural parameters associated to random coefficients,  $\widehat{\boldsymbol{\sigma}}^{0}$. For instance, we could start with the initial value  $\widehat{\boldsymbol{\sigma}}^{0} = 0$ and 
    $\widehat{\delta}^{0}_{jt} = \ln(s_{jt}) - \ln(s_{0t})$, that corresponds to the standard logit model.

    \item [1.] \textbf{Iteration $n \geq 1$:} It consists of a sequence of three steps.

    \begin{itemize}
        \item [Step 1.] \textbf{Updating $\widehat{\boldsymbol{\lambda}}$:} For any $i = 1, \, 2, \, \dots, \, N$ and $t = 1, \, 2, \, \dots, \, T$, we use $(\widehat{\boldsymbol{\delta}}^{n-1}, \widehat{\boldsymbol{\sigma}}^{n-1})$ to
        calculate $\widehat{\lambda}_{it}^{n}$ using its closed-form formula:
        \begin{equation}
            \widehat{\lambda}_{it}^{n}
            \, = \, 
            \lambda_{it}\left(
                \widehat{\boldsymbol{\delta}}_{t}^{n-1}, \, \widehat{\boldsymbol{\sigma}}^{n-1}
            \right)
            \, = \,
            \displaystyle
            \frac{1}
            { 1 + \sum_{j=1}^J 
            \exp \{ 
                \widehat{\delta}_{jt}^{n-1} + 
                \sum_{k=1}^{K} x^{k}_{jt} \,
                [ \widehat{\sigma}^{n-1}_{k} \, 
                \nu^{k}_{it} + 
                \boldsymbol{d}_{it}^{\prime} \, \widehat{\boldsymbol{\pi}}^{n-1}_{k} 
                ]
            \}
            }
        \end{equation}
        
        \item [Step 2.] \textbf{Pseudo-GMM estimation of $\boldsymbol{\theta}$:} Given $\widehat{\boldsymbol{\lambda}}^{n}$, we obtain $\widehat{\boldsymbol{\theta}}^{n}$ as:
        \begin{equation}
            \widehat{\boldsymbol{\theta}}^{n}
            \; = \;
            \arg \min_{\boldsymbol{\theta}} 
            \; \; 
            Q \left(
                \widehat{\boldsymbol{\lambda}}^{n}, \boldsymbol{\theta} 
            \right) .
    \end{equation}

    \item [Step 3.] \textbf{Newton-Kantorovich iteration} to update mean utilities:
    \begin{equation}
    \widehat{\boldsymbol{\delta}}^{n}_{t}
    \; = \;
    \widehat{\boldsymbol{\delta}}^{n-1}_{t} + 
    \left[ 
        \nabla_{\delta^{\prime}} \ln\boldsymbol{\symms}(\widehat{\boldsymbol{\delta}}^{n-1}_{t}, \widehat{\boldsymbol{\sigma}}^{n})
    \right]^{-1} \, 
    \left[ 
        \ln \boldsymbol{s}_{t} - 
        \ln \boldsymbol{\symms}(\widehat{\boldsymbol{\delta}}^{n-1}_{t}, \widehat{\boldsymbol{\sigma}}^{n})
    \right] 
    \; \text{for } 
        t = 1, \, 2, \, \dots, \, T. \label{step_3_npgmm}
    \end{equation}
\end{itemize}

\end{itemize}

Upon convergence, the algorithm provides the NP-GMM estimator defined by conditions NP-1 to NP-3.

\medskip

\noindent \textbf{Remark 1:} The main computational advantage of this algorithm over NFXP is that it eliminates the need to repeatedly solve the $T$ fixed-point problems. Instead, the NP-GMM algorithm solves these problems only once, at the point of convergence. This efficiency comes at the cost of solving multiple pseudo-GMM minimization problems. However, these minimizations are relatively inexpensive, as the pseudo-GMM criterion has a simple closed-form expression and is globally concave in $\boldsymbol{\theta}$.\footnote{As with any existing method for estimating the BLP model, our algorithm yields a local optimum upon convergence, which may not be the global optimum if multiple local optima exist. The standard practice applies: the algorithm should be run from multiple starting values to assess the presence of multiple local minima. In such cases, the estimator is defined as the local optimum that achieves the lowest value of the GMM criterion function.} 

In contrast to the \textit{Approximate BLP} method proposed by \citet{ABLP2015}—which we describe in Section \ref{sec_ablp}—our criterion function does not rely on a linear approximation of the nonlinear demand system. Instead, it leverages the structure of the model and treats certain endogenous objects as incidental parameters. As our Monte Carlo experiments demonstrate, this difference has important implications for the relative performance of the two methods.

\medskip

\noindent \textbf{Remark 2:} Step 1, which updates $\widehat{\boldsymbol{\lambda}}$, is relatively inexpensive and can be fully parallelized across $NT$ threads. The pseudo-GMM iteration in Step 2 is slightly more computationally demanding than in the standard Aguirregabiria-Mira version of the NP-GMM. This is because the criterion function is no longer quadratic, and we no longer have a closed-form expression for the estimator. However, the criterion function still has a simple analytical form and can be concentrated in $\boldsymbol{\sigma}$ by exploiting its quadratic structure in $\boldsymbol{\beta}$. Most importantly, the function is easy to evaluate, globally convex in $\boldsymbol{\theta}$, and has a closed-form expression for its gradient. As a result, standard gradient-based optimization methods can be used to reliably obtain the global minimum, i.e., the pseudo-GMM estimator.

The Newton–Kantorovich outer iteration in Step 3 can be replaced by a standard fixed-point iteration based on Berry’s mapping. We implemented this alternative version of the algorithm as well. In our experiments, the version using Newton iterations performs markedly better: it requires far fewer outer iterations to converge and reduces the incidence of non-convergence to essentially zero. 

\subsection{The Approximate-BLP estimator \label{sec_ablp}}

\citet{ABLP2015} proposed an estimation procedure for the BLP model parameters based on a first-order approximation of the log-demand system around an arbitrary vector of mean utilities, denoted by $\boldsymbol{\delta}_{0}$. For any value $(\boldsymbol{\delta}, \boldsymbol{\sigma})$:
\begin{equation}
    \ln \boldsymbol{\symms}(\boldsymbol{\delta}, \boldsymbol{\sigma}) 
    \, \approx \, 
    \ln \boldsymbol{\symms}(\boldsymbol{\delta}_{0}, \boldsymbol{\sigma}) +
    \nabla_{\delta^{\prime}} \ln\boldsymbol{\symms}(\boldsymbol{\delta}_{0}, \boldsymbol{\sigma}) \, 
    \left( \boldsymbol{\delta} - \boldsymbol{\delta}_{0} \right)
\end{equation}
where $\approx$ denotes a first-order Taylor series expansion, $\nabla_{\delta'} \equiv \partial/\partial \delta'$, and $   \nabla_{\delta^{\prime}} \ln\boldsymbol{\symms}(\boldsymbol{\delta}_{0}, \boldsymbol{\sigma})$ is a $J \times J$ Jacobian matrix. 

Thanks to this linear approximation, the inversion of the system of demand equations to obtain the vector of average utilities $\boldsymbol{\delta}$ can be approximated using matrix inversion or linear projection. We denote this approximation to the inversion problem using the operator $\Psi^{ablp}(\boldsymbol{\delta}_{0}, \boldsymbol{\sigma})$. That is:
\begin{equation}
    \boldsymbol{\delta} \; = \; 
    \Psi^{ablp}(\boldsymbol{\delta}_{0}, \boldsymbol{\sigma})
    \; \equiv \;
    \boldsymbol{\delta}_{0} + 
    \left[ 
        \nabla_{\delta^{\prime}} \ln\boldsymbol{\symms}(\boldsymbol{\delta}_{0}, \boldsymbol{\sigma})
    \right]^{-1} \, 
    \left[ 
        \ln \boldsymbol{s} - 
        \ln \boldsymbol{\symms}(\boldsymbol{\delta}_{0}, \boldsymbol{\sigma})
    \right]
\label{ABLP}
\end{equation}

The \textit{Approximate-BLP} (ABLP) estimator is based on the following criterion function:
\begin{equation}
    Q^{ablp}(\boldsymbol{\delta}_{0}, \boldsymbol{\theta}) 
    \; = \;
    \displaystyle 
    \left(
    \sum_{j,t}
    \boldsymbol{z}_{jt} \, 
    \left[ 
        \Psi^{ablp}_{jt}(\boldsymbol{\delta}_{0}, \boldsymbol{\sigma}) -
        \boldsymbol{x}_{jt}^{\prime} \, \boldsymbol{\beta} 
    \right]
    \right)^{\prime} 
    \, \boldsymbol{W} \,
    \left(
    \sum_{j,t}
    \boldsymbol{z}_{jt} \, 
    \left[ 
        \Psi^{ablp}_{jt}(\boldsymbol{\delta}_{0}, \boldsymbol{\sigma}) -
        \boldsymbol{x}_{jt}^{\prime} \, \boldsymbol{\beta} 
    \right]
    \right)
\label{eq_ablp_criterion}
\end{equation}
where $\Psi^{ablp}_{jt}(\boldsymbol{\delta}_{0}, \boldsymbol{\sigma})$ represents element $jt$ of the vector $\Psi^{ablp}(\boldsymbol{\delta}_{0}, \boldsymbol{\sigma})$. 

\begin{define} \label{def_ablp_estimator}
    The \textbf{ABLP estimator} is defined as a pair $(\widehat{\boldsymbol{\delta}}, \widehat{\boldsymbol{\theta}})$ satisfying two sets of conditions:
    \begin{itemize}
        \item ABLP-1. Given $\widehat{\boldsymbol{\delta}}$, we have that $\widehat{\boldsymbol{\theta}} = \arg \min_{\boldsymbol{\theta}} Q^{ablp}(\widehat{\boldsymbol{\delta}}, \boldsymbol{\theta})$. 

        \item ABLP-2. Given $\widehat{\boldsymbol{\theta}}$, we have that $\widehat{\boldsymbol{\delta}} = \Psi^{ablp}(\widehat{\boldsymbol{\delta}}, \widehat{\boldsymbol{\sigma}})$.
        $\qquad \blacksquare$
    \end{itemize}
\end{define}

\medskip

\noindent \textbf{Approximate BLP algorithm:} \citet{ABLP2015} proposed the following algorithm to compute this ABLP estimator.

\begin{itemize}
    \item [0.] \textbf{Initialization:} The algorithm is initialized with values for the vector of mean utilities $\widehat{\boldsymbol{\delta}}^{0}$.

    \item [1.] \textbf{Iteration $n \geq 1$:} It consists of a sequence of two steps.

    \begin{itemize}
        \item [Step 1.] \textbf{Pseudo-GMM estimation of $\boldsymbol{\theta}$:} Given $\widehat{\boldsymbol{\delta}}^{n-1}$, we obtain $\widehat{\boldsymbol{\theta}}^{n}$ as:
        \begin{equation}
            \widehat{\boldsymbol{\theta}}^{n}
            \; = \;
            \arg \min_{\boldsymbol{\theta}} 
            \; \; 
            Q^{ablp} \left(
                \widehat{\boldsymbol{\delta}}^{n}, \boldsymbol{\theta} 
            \right) .
        \end{equation}
    
    \item [Step 2.] \textbf{Updating $\widehat{\boldsymbol{\delta}}$.} Mean utilities are updated using the approximate mapping $\Psi^{ablp}$:
    \begin{equation}
        \widehat{\boldsymbol{\delta}}^{n} = 
        \Psi^{ablp}(\widehat{\boldsymbol{\delta}}^{n-1}, \widehat{\boldsymbol{\sigma}}^{n}).
    \end{equation}
\end{itemize}

\end{itemize}

Upon convergence, the algorithm provides the ABLP estimator defined by conditions ABLP-1 and ABLP-2.

\begin{table}[ht]
\centering
\caption{Features of Algorithms for the Estimation of BLP Demand Model}
\label{tab_feaures_algorthms}
\begin{tabular}{r|cccc}
\hline \hline \\ 
\multicolumn{1}{r|}{\textbf{Features}} 
& \textbf{NFXP} & \textbf{MPEC} 
& \textbf{ABLP} & \textbf{NP-GMM} \\ 
\hline \\ 
\multicolumn{1}{r|}{\textit{Avoids solving inversion at each $\sigma$ trial value}} 
& No & Yes & Approx. & Yes \\ 
& & & & \\ 
\multicolumn{1}{r|}{\textit{Avoids  linearization of shares mapping}} 
& Yes & Yes & No & Yes \\ 
& & & & \\ 
\multicolumn{1}{r|}{\textit{Closed-form inversion conditional on auxiliary objects}} 
& No & No & No & Yes \\ 
& & & & \\ 
\multicolumn{1}{r|}{\textit{Closed-form gradient of criterion function}} 
& No & No & No & Yes \\ 
& & & & \\ 
\multicolumn{1}{r|}{\textit{Scales well with multithreading}} 
& Mixed & Mixed & Weak & Strong \\ 
& & & & \\ 
\multicolumn{1}{r|}{\textit{Returns GMM estimator}} 
& Yes & Yes & No & No \\ 
& & & & \\ 
\multicolumn{1}{r|}{\textit{Asymptotically equivalent to GMM estimator}} 
& Yes & Yes & Yes & No \\ 
& & & & \\ 
\hline \hline
\end{tabular}
\end{table}

\medskip 

\noindent \textbf{Remark 3:} There is a fundamental difference between the Pseudo-GMM estimation in the ABLP algorithm and that in our NP approach. Each Pseudo-GMM step in the ABLP procedure requires repeatedly evaluating $\Psi^{ablp}_{jt}(\boldsymbol{\delta}_{0}, \boldsymbol{\sigma})$ —once for each trial value of $\boldsymbol{\sigma}$ —which, in turn, involves computing the matrix inverse $\left[ \nabla_{\delta^{\prime}} \ln \boldsymbol{\symms}(\boldsymbol{\delta}_{0}, \boldsymbol{\sigma}) \right]^{-1}$ for each of those values. By contrast, the Pseudo-GMM step in our NP-GMM algorithm only requires evaluating simple, closed-form expressions for each product individually when computing both the criterion function and its gradient.

\medskip

\noindent \textbf{Remark 4:} The distinction between ABLP and our NP-GMM  algorithm becomes even more significant when parallel computing is introduced. A key advantage of our approach is the straightforward computation of the sample counterparts of the error terms $\xi_{jt}$, which can be efficiently parallelized across multiple threads. As demonstrated in our numerical experiments in Section \ref{sec_comparing_gmm}, this leads to substantial reductions in CPU time during the Pseudo-GMM estimation step.\footnote{Appendix \ref{anaytic gradient} derives the gradients of our Pseudo-GMM objective function, which have a notably simple form and are straightforward to vectorize.} In contrast, parallel computing yields negligible time savings for the Pseudo-GMM step in the ABLP method. This is because evaluating ABLP's criterion function and its gradients requires inverting matrices that depend on all products simultaneously, limiting the effectiveness of parallelization across products.

Table \ref{tab_feaures_algorthms} summarizes key features of the main algorithms used for the estimation of BLP demand models, and highlights the novel features of the NP-GMM method.

\section{Monte Carlo Experiments} \label{sec:monte}

In this section, we use simulated datasets to compare the computational speed of our approach with that of the ABLP method. \citet{ABLP2015} show that the ABLP algorithm is significantly faster than both the NFXP and MPEC algorithms, especially in large-sample settings.

\subsection{Data-Generating Process}

We adopt the same Monte Carlo simulation setting as in \citet{dube2012improving}, which is also used by \citet{ABLP2015} to compare the ABLP and MPEC approaches.\footnote{All experiments are conducted on a system equipped with an Intel(R) Core(TM) i7-8850H CPU @ 2.60GHz, 16GB of RAM, running Windows 10 (64-bit) and MATLAB R2023a. For additional speed gains, we also provide Julia code implementing our method.}

We conduct a series of Monte Carlo experiments using samples in which  the number of markets $T$ takes values $50$, $100$, or $500$, and the number of products $J$ takes values $25$, $50$, $100$, $200$, $400$, $800$, $1600$, and $3200$. In each market $t$, all $J$ products are present.

Let the deterministic component of utility be defined as: $\boldsymbol{x}_{jt}^{\prime} \boldsymbol{\beta} = \beta_0 + \beta_1 x_{1,j} + \beta_2 x_{2,j} + \beta_3 x_{3,j} + \beta_p p_{jt}$, where the vector of coefficients is given by $\boldsymbol{\beta}^{\prime} = (\beta_0, \beta_1, \beta_2, \beta_3, \beta_p) = (0, 1.5, 1.5, 0.5, -3)$. The three exogenous product characteristics -—$x_{1,j}$, $x_{2,j}$, and $x_{3,j}$-— are constant across markets $t$ and are drawn from a multivariate normal distribution given by:
\begin{equation}
    \begin{bmatrix}
        \boldsymbol{x}_{1j} \\ \boldsymbol{x}_{2j} \\ \boldsymbol{x}_{3j} 
    \end{bmatrix}
    \sim N \left(
        \begin{bmatrix}
            0 \\ 0 \\ 0  
        \end{bmatrix},
        \begin{bmatrix}
            1 & -0.8 & 0.3 \\
            -0.8 & 1 & 0.3 \\
            0.3 & 0.3 & 1  
        \end{bmatrix} 
    \right).
\end{equation}

In each market $t$, we simulate $N = 1,000$ individual-level taste draws, denoted by $\boldsymbol{\nu}_{it}^{\prime}$ $= (\nu_{0,it}, \nu_{1,it}, \nu_{2,it}, \nu_{3,it}, \nu_{p,it})$, which are independently drawn from a standard normal distribution. The scale parameters associated with these consumer-level random coefficients are $\boldsymbol{\sigma}^{\prime} =$ $(\sigma_0, \sigma_1, \sigma_2, \sigma_3, \sigma_p)$ $= (\sqrt{0.5}, \sqrt{0.5}, \sqrt{0.5}, \sqrt{0.5}, \sqrt{0.2})$. Note that this DGP abstracts from observed demographics, $\boldsymbol{d}_{i}$.

Each product $j$ has a market-specific vertical characteristic $\xi_{jt}$, which is independently drawn across products and markets from a standard normal distribution. Product prices vary across markets and are determined by the following equation:
\begin{equation}
    p_{jt} = 3 + 
    x_{1,j} + x_{2,j} + x_{3,j} +
    1.5 \, \xi_{jt} + 
    5 \, \omega_{jt},
\end{equation}
where $\omega_{jt}$ is a marginal cost shock, independently drawn from a uniform distribution $U(0,1)$ across products and markets. Note that price $p_{jt}$ is an endogenous regressor in the demand estimation due to its correlation with the unobserved demand shock $\xi_{jt}$.

We address price endogeneity using instrumental variables derived from observable marginal cost shifters. Specifically, the DGP includes six observable cost shifters per product, denoted by ${w_{k,jt}}$, which are generated according to the following equation:
\begin{equation}
    w_{k,jt} \, = \, 
    0.25 \cdot \left|
        5 \, \omega_{jt} + 
        1.1 (x_{1,j} + x_{2,j} + x_{3,j}) 
    \right| + 
    e_{k,jt}, 
    \quad \text{for } k = 1, 2, \dots, 6,
\end{equation}
where $e_{k,jt} \sim i.i.d. \, U(0,1)$. By construction, each instrument $w_{k,jt}$ is strongly correlated with price $p_{jt}$ and independent of the unobserved component $\xi$.

The full vector of instrumental variables $\boldsymbol{z}_{jt}$ includes 42 variables, comprising: a constant term (1 variable); the product's own $x$ variables in levels, squares, and cubes (9 variables); the product's own $w$ variables in levels, squares, and cubes (18 variables); the product of all own $x$ variables (1); the product of all own $w$ variables (1); the cross-products between $x_{1,j}$ and each $w$ variable (6); and the cross-products between $x_{2,j}$ and each $w$ variable (6).

\subsection{Comparing computational speeds of NP-GMM and ABLP methods \label{sec_comparing_gmm}}

We begin by comparing the two methods in terms of both the computational speed of the algorithms and the statistical properties of the estimators. For a fair comparison, we apply the same convergence criterion and identical initial values to both algorithms. Convergence is assessed using the infinity norm of the differences between successive iterations of the vectors $(\widehat{\boldsymbol{\delta}}^{n} - \widehat{\boldsymbol{\delta}}^{n-1})$ and $(\widehat{\boldsymbol{\theta}}^{n} - \widehat{\boldsymbol{\theta}}^{n-1})$. Specifically, an algorithm is considered to have converged when
\begin{equation}
    \max_{j,t} |\widehat{\delta}_{jt}^{n} - \widehat{\delta}_{jt}^{n-1}| < 10^{-6} \quad \text{and} \quad \max_{k} |\widehat{\theta}_{k}^{n} - \widehat{\theta}_{k}^{n-1}| < 10^{-6}.
\end{equation}

Since both NP-GMM and ABLP may yield multiple fixed points, we implement each algorithm using five distinct starting values—identical across the two methods—for each simulated dataset. For each algorithm, we select the fixed point that minimizes the value of the criterion function. To generate the initial values $\widehat{\boldsymbol{\delta}}^{0}$ and $\widehat{\boldsymbol{\sigma}}^{0}$, we follow the approach proposed by \citeauthor{dube2012improving} (\citeyear{dube2012improving}) to construct five random starting points.\footnote{To generate a random initial value for $\widehat{\boldsymbol{\sigma}}^{0}$, \citeauthor{dube2012improving} (\citeyear{dube2012improving}) use the formula: $\widehat{\sigma}_{k}^{0} = 0.5 \cdot |\widehat{\beta}_{k}^{logit}| \cdot \widetilde{U}_{k}$, where  $\widehat{\boldsymbol{\beta}}^{logit}$ is the IV estimate of $\boldsymbol{\beta}$ 
from a standard logit model with instruments $\boldsymbol{z}_{jt}$. $\widetilde{U}_{k}$ is a random draw from a U(0,1). 
Given $\widehat{\boldsymbol{\sigma}}^{0}$, the initial value $\widehat{\boldsymbol{\delta}}^{0}$ is obtained by solving the fixed-point problem that defines the vector of mean utilities.} For each sample and algorithm, we define a successful estimation as one in which at least one of the five starting points converges.

\subsubsection{Total wall-clock time and its components}

Table \ref{tab:perform_compare} and Figure \ref{fig:speed_compare} compare the performance of the ABLP and NP-GMM methods using simulated samples with $T = 100$ markets and the numbers of products ranging from $J=25$ to $J=400$. Each method is implemented with its optimal number of computational threads.\footnote{See Section 
\ref{sec_optimal_number_threads} on the calculation of the optimal number of threads.} Across all values of $J$, NP-GMM achieves significantly lower wall-clock times than ABLP (Panel A).\footnote{\textit{Wall-clock time} is the standard computer-science measure of the total time required to complete a task, as recorded by a real clock. It includes all delays, such as input/output operations, system interruptions, and waiting time. This contrasts with \textit{CPU time}, which measures only the periods during which the processor is actively executing instructions.} This advantage exists even though NP-GMM requires more outer iterations (Panel D), more inner iterations (Panel E), and more criterion-function evaluations (Panel F). The source of this advantage is clearly shown in  Panel B of Table \ref{tab:perform_compare}. ABLP incurs a substantially higher computational cost in each inner-loop GMM minimization because, at every trial value of $\sigma$, it repeatedly computes and inverts a Jacobian matrix. By conditioning on the incidental parameters $\boldsymbol{\lambda}$, the NP-GMM algorithm avoids this expensive step, leading to a far lower per-iteration cost and ultimately much faster overall computation despite requiring more iterations. 

Figure \ref{fig:speed_compare} shows how the performance gap between ABLP and NP-GMM methods widens more than proportionally as $J$ increases. An OLS regression of log–Time-per-Inner-Iteration on log-$J$ yields a slope of $0.96$ for the NP-GMM algorithm —indicating roughly linear growth— and a slope of $1.32$ for the ABLP algorithm. Consequently, the relative speed advantage of NP-GMM becomes increasingly pronounced as the size of the problem grows. For example, extrapolating these rates to $J = 12{,}800$ products yields predicted times per-inner-iteration of approximately $1076$ seconds for ABLP versus only $33$ seconds for NP-GMM.

\begin{table}[H]\centering
\caption{Comparing Computational Performance ($T = 100$)}
\label{tab:perform_compare}
%\footnotesize

\begin{tabular}{c cc cc cc}
\hline \hline

% =================== Row of panels: A E F ===================
& \multicolumn{2}{c}{\textbf{Panel A}} 
& \multicolumn{2}{c}{\textbf{Panel B}} 
& \multicolumn{2}{c}{\textbf{Panel C}} \\
& \multicolumn{2}{c}{Wall-clock Time}
& \multicolumn{2}{c}{Time per Inner Iteration}
& \multicolumn{2}{c}{Time per Crit. Fun. Eval.} \\
\cmidrule(lr){2-3}\cmidrule(lr){4-5}\cmidrule(lr){6-7}
$J$ & ABLP & NP-GMM & ABLP & NP-GMM & ABLP & NP-GMM \\
\hline
25  & 55.84  & 16.95  & 0.3619  & 0.0817  & 0.2764 & 0.0633 \\
50  & 74.61  & 40.52  & 0.6543  & 0.1596  & 0.4905 & 0.1242 \\
100 & 161.83 & 92.18  & 1.4040  & 0.3008  & 1.0579 & 0.2379 \\
200 & 612.55 & 249.75 & 4.9949  & 0.6127  & 3.8164 & 0.4731 \\
400 & 2,415.2 & 740.07 & 12.4464 & 1.1790  & 9.5191 & 0.9068 \\
\hline

% =================== Row of panels: B C D ===================
& \multicolumn{2}{c}{\textbf{Panel D}} 
& \multicolumn{2}{c}{\textbf{Panel E}} 
& \multicolumn{2}{c}{\textbf{Panel F}} \\
& \multicolumn{2}{c}{Number of Outer Iterations}
& \multicolumn{2}{c}{Number of Inner Iterations}
& \multicolumn{2}{c}{Number of Crit. Fun. Eval.} \\
\cmidrule(lr){2-3}\cmidrule(lr){4-5}\cmidrule(lr){6-7}
$J$ & ABLP & NP-GMM & ABLP & NP-GMM & ABLP & NP-GMM \\
\hline
25  & 9    & 11.4 & 151.5 & 202.2 & 199.4 & 260.2 \\
50  & 7.6  & 14.2 & 113.5 & 247.0 & 150.8 & 316.6 \\
100 & 7.3  & 16.1 & 114.7 & 297.4 & 151.8 & 375.1 \\
200 & 7.6  & 21.6 & 122.6 & 392.7 & 159.8 & 509.1 \\
400 & 10.6 & 31.0 & 191.8 & 595.0 & 250.8 & 772.4 \\
\hline \hline
\end{tabular}

\vspace{0.5em}

\begin{minipage}{0.9\textwidth}
\footnotesize
\textit{Note:} For each value of $J$, we generate five datasets with $T = 100$ and $N = 1,000$. The estimation for each dataset is done with five different starting points. Reported means are based on the $5 \times 5 = 25$ runs. All time measures are wall-clock seconds.
\end{minipage}

\end{table}

\begin{figure}[H]
\centering
    \caption{Total Wall-Clock Times of ABLP and BP-GMM Methods}
    \includegraphics[scale=0.60]{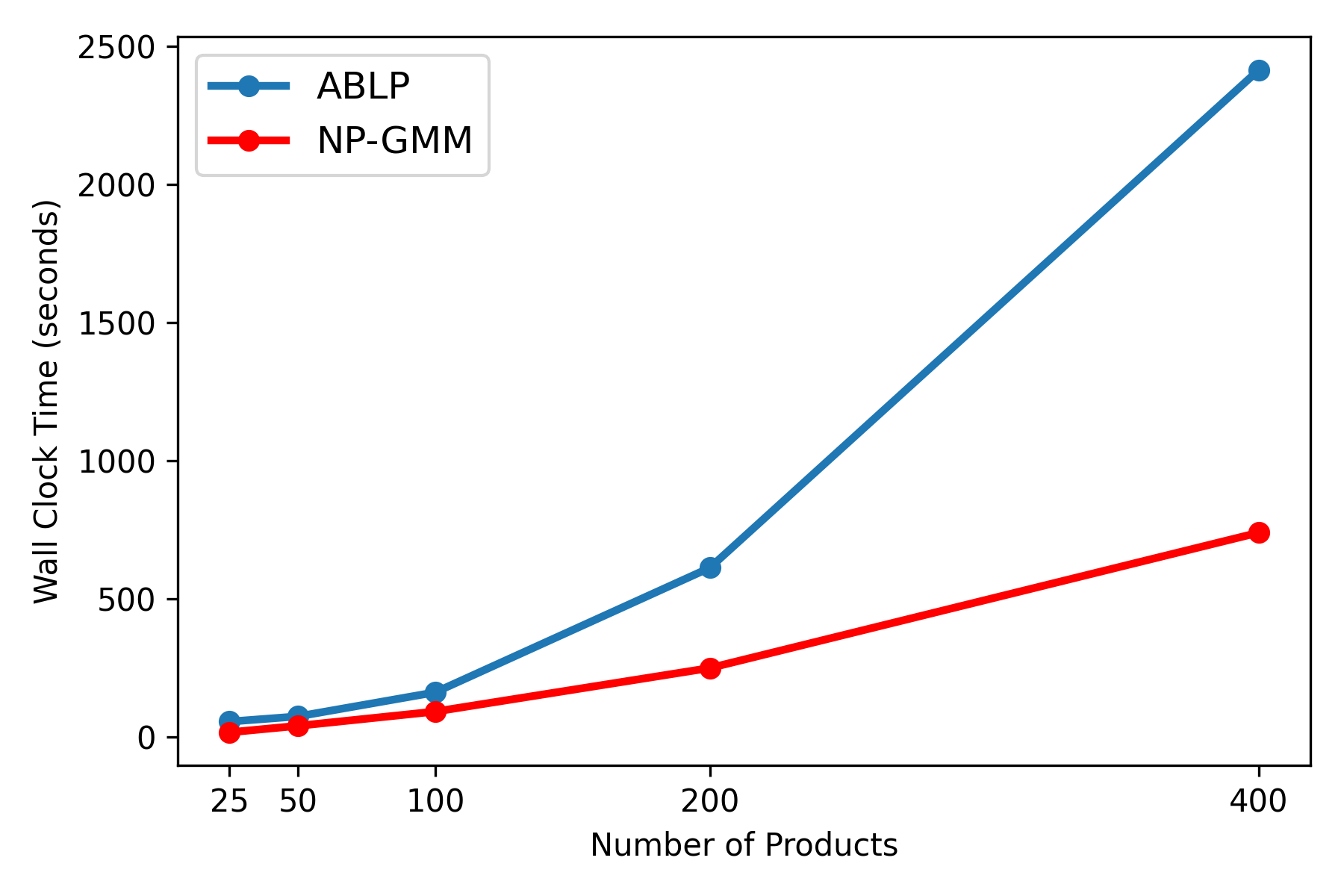}
    \fnote{\textit{Note:} For each value of $J$, we generate five datasets with $T = 100$ and $N = 1,000$. The estimation for each dataset is done with five different starting points. Reported means are based on the $5 \times 5 = 25$ runs.}
\label{fig:speed_compare}
\end{figure}

\subsubsection{Parallel computing and optimal number of threads \label{sec_optimal_number_threads}}

Figure \ref{fig:eva_time_threads} reports the average wall-clock time required by the NP-GMM and ABLP algorithms to evaluate their respective GMM criterion functions and gradients. The averages are computed over 1,000 simulated samples with $J=25$, $T=500$, and $N=1{,}000$. All evaluations are conducted using parallel computing, with the number of threads varying from 1 to 6. For the evaluation of the GMM objective function, NP-GMM and ABLP differ only in their computation of $\boldsymbol{\delta}$. NP-GMM relies on Equation \eqref{NPGMM}, which provides a simple closed-form mapping in $\boldsymbol{\sigma}$ for a given $\boldsymbol{\lambda}$. In contrast, ABLP uses Equation \eqref{ABLP}, a mapping in $\boldsymbol{\sigma}$ that crucially requires matrix inversion for its calculation. Because $\boldsymbol{\delta}$ is a simpler closed-form function of $\boldsymbol{\sigma}$ under NP-GMM, our estimator is substantially faster than ABLP in computing both the GMM objective function and its associated gradients.

A further advantage of our market-share inversion is its ability to exploit parallel computing more effectively. The \citet{ABLP2015} MATLAB code uses the \texttt{-singleCompThread} flag or option to disable multithreading, in which case CPU time and wall-clock time are comparable across algorithms. When multiple threads are allowed, however, we observe substantial reductions in wall-clock time for evaluating the GMM objective function and its gradients under NP-GMM. By contrast, ABLP’s gradient evaluation does not benefit from multithreading and in fact deteriorates as the number of threads increases, as shown in Figure \ref{fig:eva_time_threads}.

\begin{figure}[H]
    \centering
    \captionsetup{justification=centering}
    \caption{Speed Comparison of NP-GMM and ABLP Algorithms -- \\ Computing GMM Criteria and Gradient}
    \includegraphics[scale=0.16]{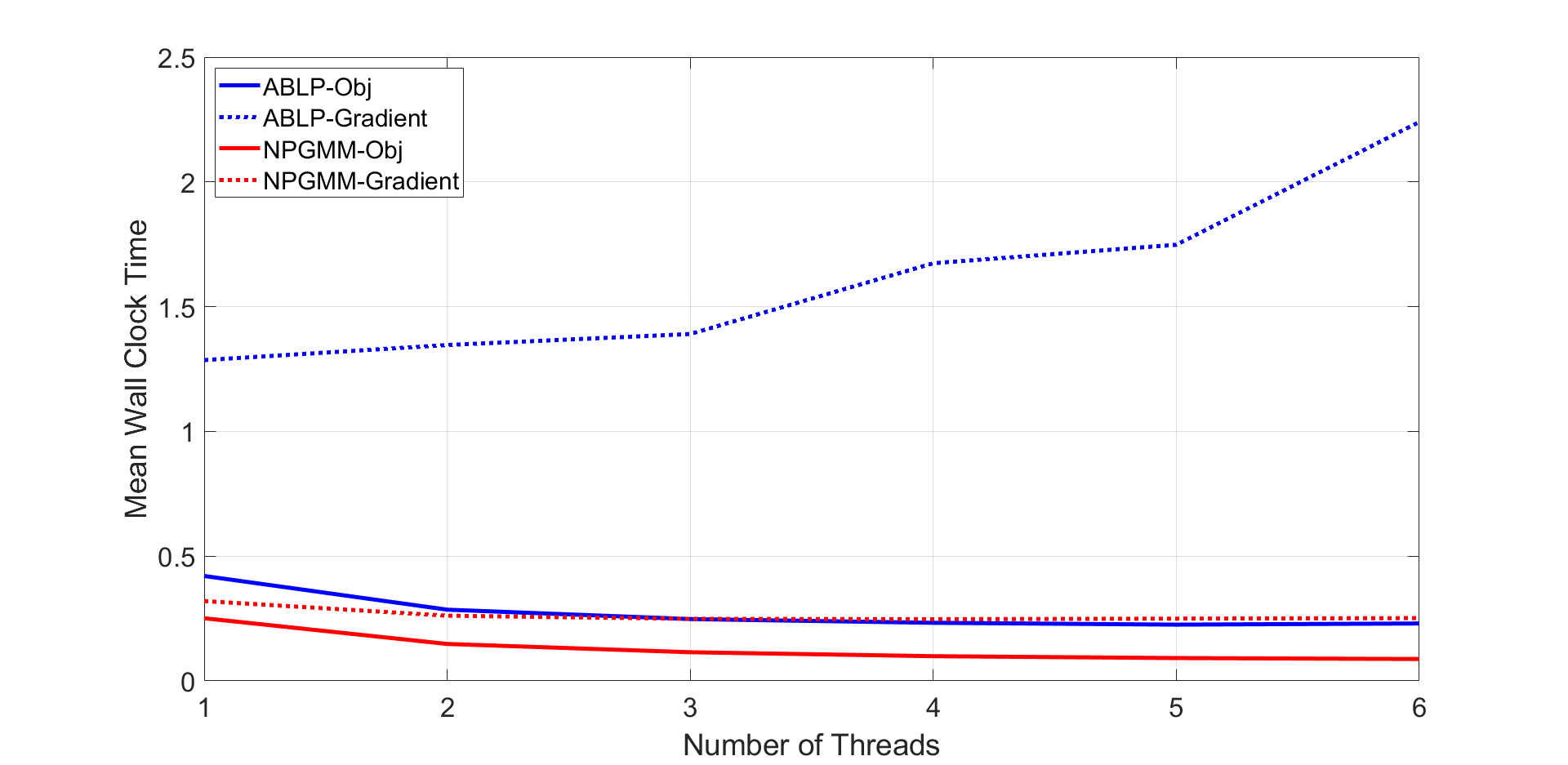}
    \fnote{\textit{Note:} Average wall-clock time (in seconds) for 1,000 evaluations of the GMM objective function and associated gradients. The model setting is $J$ = 25, $T$ = 500, and $N$ = 1,000.}
\label{fig:eva_time_threads}
\end{figure}

To understand why the optimal number of threads in parallel computing is finite—and why this optimum is substantially larger for NP-GMM than for ABLP—it is useful to consider the following simple decomposition. Let $\ell$ denote the number of threads, $T(\ell)$ the total execution time using $\ell$ threads, $T_{\text{parallel}}$ the time required to perform tasks that can be parallelized, $T_{\text{serial}}$ the time required for inherently serial tasks, and $T_{\text{overhead}}(\ell)$ the overhead introduced by parallelization, which is increasing in $\ell$. Total execution time can then be written as
\begin{equation}
    T(\ell) \; = \; 
    \frac{T_{\text{parallel}}}{\ell}
    \; + \;
    T_{\text{serial}}
    \; + \;
    T_{\text{overhead}}(\ell)
\end{equation}

As the number of threads increases, parallel execution incurs overhead costs $T_{\text{overhead}}(\ell)$ that are absent in serial computation, including thread creation and management, synchronization, inter-thread communication, and operating-system scheduling. These overhead costs grow with the number of threads and eventually offset the gains from dividing the parallel workload across more processors. For example, if the overhead is linear in $\ell$, so that $T_{\text{overhead}}(\ell)=k_{0}+k_{1} \, \ell$, the optimal number of threads is finite and given by $\ell^{*} =\sqrt{T_{\text{parallel}}/k_{1}}$.

This framework also clarifies why the optimal number of threads is substantially larger for NP-GMM than for ABLP. Under NP-GMM, the computation of the $\delta$’s and the associated gradients relies on the closed-form expression in Equation \eqref{NPGMM}. These calculations are fully parallelizable and therefore contribute almost entirely to $T_{\text{parallel}}$. In contrast, under ABLP the computation of the $\delta$’s and gradients is based on Equation \eqref{ABLP}, which requires matrix inversion. Such operations are only partially parallelizable and involve substantial serial components, implying that a larger fraction of the computation falls into $T_{\text{serial}}$. As a result, NP-GMM can efficiently exploit a larger number of threads before overhead dominates, leading to a higher optimal degree of parallelization than in ABLP.

Given this, we determine the optimal number of threads for each estimator to minimize overall evaluation time and apply it to ensure a fair comparison. Under the model setting $J = 25$, $T = 500$, and $N = 1,000$, the optimal number of threads is 2 for ABLP and 6 for NP-GMM. See Appendix \ref{omega faster} for further details on optimal thread configurations under different model settings. Using these optimized settings, we employ MATLAB's profiler to track the wall-clock time for 1,000 evaluations of the GMM objective function and associated gradients, as shown in Figure \ref{fig:eva_time}. NP-GMM demonstrates significant speed advantages in computing both the GMM objective function and its gradients. Overall, our NP-GMM algorithm is approximately five times faster than ABLP in evaluating the objective function and gradients. See Appendix \ref{omega faster} for more details.

\begin{figure}[H]
\centering
    \includegraphics[scale=0.25]{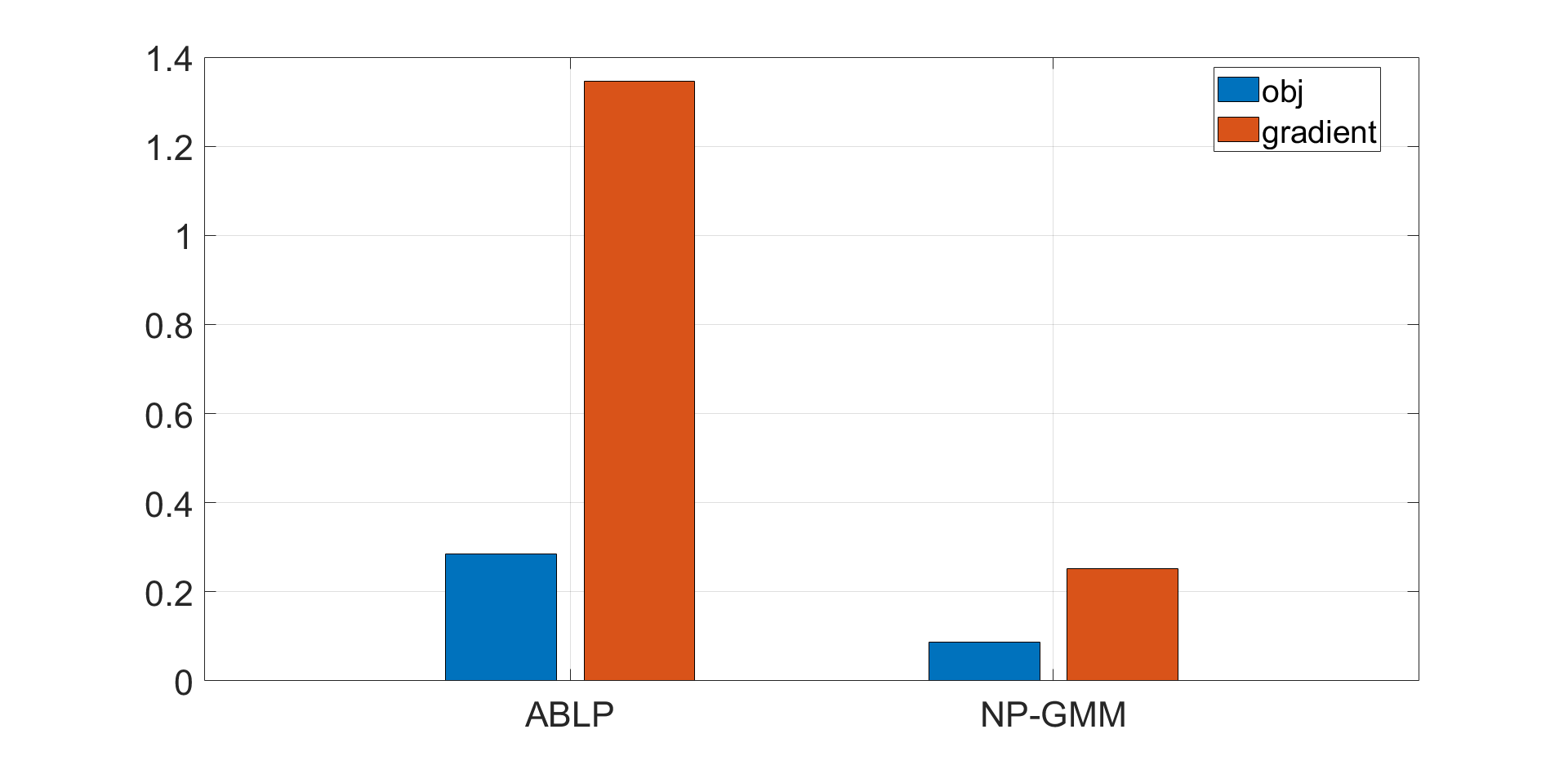}
    \caption{Speed Comparison}
    \fnote{\textit{Note:} Average wall-clock time (in seconds) for 1,000 evaluations of the GMM objective function and associated gradients. The model setting is $J$ = 25, $T$ = 500, and $N$ = 1,000. We use the optimal number of threads: 2 for ABLP and 6 for NP-GMM.}
\label{fig:eva_time}
\end{figure} 

\subsubsection{Comparing Statistical Properties}

Tables \ref{tab_mc_experiments} and \ref{tab:combined_results} summarize the performance of the two estimators and algorithms in two Monte Carlo experiments: one with $J = 25$ and $T = 500$, and another with $J = 100$ and $T = 100$. Additional experiments with comparable sample sizes $J \times T$ but different combinations of $J$ and $T$ yield similar statistical properties for both estimators.\footnote{In our DGP, and thus in all our experiments, the number of random draws for consumer-level random coefficients is fixed at $N = 1,000$.} Each experiment is based on 20 simulated datasets. For each dataset, estimation is initialized from five random starting values, resulting in a total of 100 estimations per method. The maximum number of outer iterations is capped at 100, and the algorithm terminates once this limit is reached.

Table \ref{tab_mc_experiments} summarizes the computational and statistical performance of the two algorithms in these experiment. Both algorithms exhibit excellent convergence behavior. For each experiment and dataset, at least one run converged for each estimator. In the experiment with $J=25$ and $T=500$, out of 100 estimations per algorithm (20 datasets × 5 starting values), ABLP failed to converge in 2 cases (from a single dataset), while NP-GMM failed in 4 cases (from two datasets). In the experiment with $J=100$ and $T=100$, each algorithm failed to converge in one case.

\begin{table}[h]
\centering
\caption{Monte Carlo Experiments: Performance Comparisons} \label{tab_mc_experiments}
\begin{tabular}{l|cc|cc}
    \hline \hline
    & \multicolumn{2}{c|}{\textbf{With $J = 25$ \& $T = 500$}}
    & \multicolumn{2}{c}{\textbf{With $J = 100$ \& $T = 100$}} \\ 
    & \textbf{ABLP} & \textbf{NP-GMM} 
    & \textbf{ABLP} & \textbf{NP-GMM} \\ \hline
    \textbf{Computational Properties} 
    & & & & \\ 
    & & & & \\ 
    \textit{Convergence Rate (\%)} 
    & & & &
    \\
    \multicolumn{1}{r|}{\textit{At dataset level:}}  
    & 100.0 & 100.0 & 100.0 & 100.0 \\
    \multicolumn{1}{r|}{\textit{At dataset-initial value level:}}  
    & 98.0 & 96.0 & 99.0 & 99.0 \\ 
    & & & & \\ 
    \textit{Wall-clock Time (in seconds):} 
    &  &  &  & \\
    \multicolumn{1}{r|}{\textit{Mean}} 
    & 263  & 101  &  230 &  93 \\
    \multicolumn{1}{r|}{\textit{25th pct}}
    & 207  & 78  &  159 &  66 \\
    \multicolumn{1}{r|}{\textit{Median}}
    & 248 & 99 & 181 & 86 \\
    \multicolumn{1}{r|}{\textit{75th pct}}
    & 311 & 117  & 215 & 101 \\ 
    & & & & \\ 
    \hline 
    \multicolumn{1}{l|}{\textbf{Statistical Properties:}}  
    & & & & \\ 
    & & & & \\ 
    \multicolumn{1}{r|}{\textit{Root Mean Square Error:}}    
    & 0.1101  & 0.1086 & 0.2282 &  0.1744 \\
    \multicolumn{1}{r|}{\textit{Mean Absolute Bias of Price Coeff:}}    
    & 0.0423 & 0.0368 & 0.0181 & 0.0396 \\
    \multicolumn{1}{r|}{\textit{Standard deviation of Price Coeff.:}}    
    & 0.1049 & 0.1016 & 0.1490 & 0.1582 \\
    & & & & \\ 
    \hline \hline
    \multicolumn{5}{l}{\footnotesize \textit{Note:} Convergence Rate is the percentage of estimations meeting convergence criterion.}
\end{tabular}
\end{table}

Consistent with the results reported above, the wall-clock time required to implement NP-GMM is between two and three times lower than that of ABLP. This difference appears systematically across all datasets and initial parameter values.

The bottom panel of Table \ref{tab_mc_experiments} reports the statistical properties of the estimators. The results show that both estimators exhibit very similar finite-sample performance in terms of bias, variance, and mean squared error. The NP-GMM estimator performs slightly better in terms of RMSE in the two experiments.

Table \ref{tab:combined_results} reports detailed Monte Carlo results on the means and standard deviations of the parameter estimates under the two methods. For each parameter, the two methods exhibit very similar finite-sample biases and variances.

\begin{table}[H]\centering
\caption{Summary Statistics of Parameter Estimates}
\label{tab:combined_results}
\footnotesize
\begin{tabular}{cccccccccccc}
\hline\hline

 & & $\beta_0$ & $\beta_1$ & $\beta_2$ & $\beta_3$ & $\beta_p$ & $\sigma_0$ & $\sigma_1$ & $\sigma_2$ & $\sigma_3$ & $\sigma_p$ \\
\cmidrule(lr){3-7}\cmidrule(lr){8-12}
 & True & 0 & 1.5 & 1.5 & 0.5 & -3 & 0.7071 & 0.7071 & 0.7071 & 0.7071 & 0.4472 \\
\hline
%==================== Panel A ====================%
\multicolumn{12}{l}{\textbf{Panel A}} \\
\multicolumn{12}{l}{\textbf{$J=25$, $T=500$}} \\
\cmidrule(lr){3-7}\cmidrule(lr){8-12}
\multirow{2}{*}{ABLP} 
& Mean & -0.0940 & 1.4809 & 1.4801 & 0.4967 & -2.9577 & 0.6661 & 0.7047 & 0.7072 & 0.7153 & 0.4371 \\
& Std  &  0.1579 & 0.0539 & 0.0532 & 0.0376 & 0.1049 & 0.2507 & 0.0349 & 0.0240 & 0.0229 & 0.0349 \\
\cmidrule(lr){3-7}\cmidrule(lr){8-12}
\multirow{2}{*}{NP-GMM} 
& Mean & -0.0814 & 1.4813 & 1.4797 & 0.4990 & -2.9632 & 0.6423 & 0.7094 & 0.7069 & 0.7181 & 0.4393 \\
& Std  &  0.1626 & 0.0525 & 0.0529 & 0.0374 & 0.1016 & 0.2473 & 0.0332 & 0.0274 & 0.0215 & 0.0333 \\
\hline

%==================== Panel B ====================%
\multicolumn{12}{l}{\textbf{Panel B}} \\
\multicolumn{12}{l}{\textbf{$J=100$, $T=100$}} \\
\cmidrule(lr){3-7}\cmidrule(lr){8-12}
\multirow{2}{*}{ABLP} 
& Mean & -0.0841 & 1.4916 & 1.4940 & 0.4988 & -2.9819 & 0.8272 & 0.7050 & 0.7018 & 0.7228 & 0.4422 \\
& Std  &  0.2211 & 0.0573 & 0.0602 & 0.0464 & 0.1490 & 0.4612 & 0.0367 & 0.0402 & 0.0335 & 0.0474 \\
\cmidrule(lr){3-7}\cmidrule(lr){8-12}
\multirow{2}{*}{NP-GMM} 
& Mean & -0.0849 & 1.4904 & 1.4910 & 0.4995 & -2.9604 & 0.6306 & 0.7086 & 0.6975 & 0.7231 & 0.4344 \\
& Std  &  0.2395 & 0.0552 & 0.0564 & 0.0453 & 0.1582 & 0.4179 & 0.0386 & 0.0436 & 0.0324 & 0.0504 \\
\hline\hline

\end{tabular}
\end{table}

\section{Application: Wine Demand at the LCBO} 
\label{sec:lcbo}

In this section, we use wine demand data from the LCBO to illustrate the implementation of the proposed methodology and to assess its computational performance relative to ABLP. The empirical results are intended to demonstrate the scalability and feasibility of estimating demand models with a large number of products. The LCBO is a provincially owned near monopoly that oversees the wholesale and retail distribution of alcoholic beverages in Ontario. A detailed description of the industry and the raw data is provided in \citeauthor{aguirregabiria2023decentralized} (\citeyear{aguirregabiria2023decentralized}).

\subsection{Data and Working Sample}

The dataset covers all 634 LCBO stores in Ontario and contains monthly sales data from October 2011 to October 2013, for a total of 25 months. LCBO classifies stores into six size categories --AAA, AA, A, B, C, and D-- from largest to smallest. We focus on the five AAA stores, which are the highest-volume outlets and together account for a substantial share of total LCBO wine sales. Restricting attention to these stores allows us to study demand in large, stable markets with a broad and consistently available product assortment.

We abstract from consumers’ store choice and treat each store-month observation as a separate market. Over the sample period, the LCBO sold 14,364 distinct wine products. Many of these products are offered only temporarily or in a limited number of stores, which creates substantial product turnover and sparsity at the store-month level. To avoid unbalanced product availability and to simplify the interpretation and estimation of the demand system, we restrict attention to the 76 store–month markets (out of 125) in which at least 3,000 products record positive sales. Within each selected market, we designate the products with the highest sales as inside goods and aggregate sales of all remaining wine products into the outside good. We fix the number of inside goods at $J=2900$, thereby trimming products with extremely small market shares. Our full working sample therefore consists of the top 2,900 wine products in each of the 76 markets, yielding a total of 220,400 product–market observations.

Product-level variables include standardized monthly sales (measured in 750-ml bottle equivalents), standardized price per 750-ml bottle, alcohol content, sugar content, and indicators for red and white wine, with other wine types serving as the omitted category. We define market size as twice the maximum monthly sales observed at each store over the sample period. Table \ref{tab:summary_lcbo} presents summary statistics for our working sample and the variables in our model.\footnote{The LCBO also sells non-wine spirits which we exclude from our sample. We rely on the LCBO’s own classification of products as “wine”. Ten out of the 14,364 products that LCBO classifies as wines have alcohol content above 30\%. One of these—Zwack Unicum Slivovitz 3-Year-Old—is sufficiently popular to appear in our estimation sample, which explains why the maximum observed alcohol content is 47\%. All other wines in the sample have alcohol levels below 23\%.}

\begin{table}[htbp]
\centering
\caption{Summary Statistics: Market Shares and Product Characteristics}
\label{tab:summary_lcbo}
\begin{tabular}{lccccc}
\hline\hline
Variable & Observations & Mean & Std.\ Dev. & Min & Max \\
\hline
Market share & 220{,}400 & $1.08\times10^{-4}$ & $2.31\times10^{-4}$ & $1.98\times10^{-6}$ & 0.0113 \\
Price (CAD \$ per 750ml bottle)
& 220{,}400 & 21.53 & 19.57 & 4.93 & 199.85 \\
Alcohol (\%) 
& 220{,}400 & 13.07 & 1.71  & 4.80 & 47.00 \\
Sugar (Grams per litter)
& 220{,}400 & 12.52 & 25.36 & 2.00 & 373.00 \\
Red wine (Dummy)
& 220{,}400 & 0.55  & 0.50  & 0    & 1 \\
White wine (Dummy)
& 220{,}400 & 0.31  & 0.46  & 0    & 1 \\
\hline\hline
\end{tabular}

\vspace{0.5em}
\begin{minipage}{1.00\textwidth}
\footnotesize
\textit{Note:} The sample includes $J=2{,}900$ products across $T=76$ markets. Market shares are computed using market size defined as twice the maximum monthly sales observed at each store. 
\end{minipage}
\end{table}

\subsection{Empirical Model}

In our empirical model, the product attribute vector $\boldsymbol{x}_{jt}$ includes a constant, alcohol content, sugar content, price, and indicators for red and white wine. The deterministic component of utility is:
\begin{equation}
    \boldsymbol{x}_{jt}^{\prime} \boldsymbol{\beta}
    \; = \; 
    \beta_0 + \beta_{\text{alc}} \; \text{alc}_j + \beta_{\text{sugar}} \; \text{sugar}_j 
    + \beta_{\text{red}} \; \mathbf{1}\{j \text{ is red}\}
    + \beta_{\text{white}} \; \mathbf{1}\{j \text{ is white}\}
    + \beta_p \; p_{jt}.
\end{equation}
We allow for random coefficients on the constant, alcohol content, sugar content, and price. In each market $t$, individual taste heterogeneity is captured by draws $\boldsymbol{\nu}_{it}
= (\nu_{0,it}, \nu_{\text{alc},it}, \nu_{\text{sugar},it}, \nu_{p,it})$, which are independently drawn from a standard normal distribution. We simulate 1,000 consumers in each market. The corresponding scale parameters are denoted by $\boldsymbol{\sigma}
= (\sigma_0, \sigma_{\text{alc}}, \sigma_{\text{sugar}}, \sigma_p)$. As in our Monte Carlo experiments, we abstract from observed demographics.

We treat price as endogenous. In addition to the product characteristics summarized in Table~\ref{tab:summary_lcbo}, the data include information on each wine’s country of origin and a broad classification of wine style (e.g., dry or sweet). We construct BLP-style instruments using the average alcohol and sugar content of other wines from the same country of origin, as well as those of other wines within the same style category. 

These instruments are motivated by cost-side and technological similarities among wines that share a country of origin or style. Alcohol and sugar content are closely linked to grape characteristics and fermentation technology, reflecting exogenous production conditions that systematically affect wine quality and pricing outcomes \citep{combris1997estimation,ashenfelter2008predicting}. Accordingly, variation in the average alcohol and sugar content of competing wines within narrowly defined competitive groups acts as a cost shifter that predicts equilibrium prices, consistent with standard differentiated-product pricing models in which firms’ optimal prices respond to rivals’ cost conditions \citep{berry_levinshon_1995,nevo2001measuring}. Empirically, these instruments are highly predictive of price, with a first-stage $F$-statistic of 1{,}903.98, suggesting that weak-instrument concerns are unlikely.

\subsection{Estimation Results}

To study how the relative performance of the algorithms varies with the number of products, we estimate the model using different values of $J$, holding the set of markets $T$ fixed at the 76 identified above. We report results for the full sample with $J = 2{,}900$ products, as well as for smaller subsamples with $J = 800$ and $J = 1{,}600$ products, drawn at random from the full product set.

Both ABLP and NP-GMM benefit from parallelization as $J$ increases. Table~\ref{tab:per_eval_time_lcbo} in Appendix~\ref{estimate_details_lcbo} reports mean wall-clock time per evaluation of the objective function and its gradients using one to six threads. For NP-GMM, six threads consistently yield the fastest evaluation. For ABLP, five threads are optimal when $J = 800$, and six threads are optimal when $J = 1{,}600$ or $2{,}900$. These thread settings are used in all subsequent estimations.

Estimation is conducted with five random starting points, with both estimators initialized from the same starting values and updating the mean utility vector $\boldsymbol{\delta}$ via Newton's method in the outer loops. Table \ref{tab:lcbo_estimates} reports parameter estimates obtained using the two methods for each of the three samples. Across all samples and parameters, except for $\beta_0$ and $\sigma_0$, the estimates produced by the two methods are extremely close, with only negligible differences between them. Parameters $\beta_0$ and $\sigma_0$ are typically weakly identified in this setting due to limited variation in baseline preferences once observed characteristics and unobserved product quality are accounted for (see, e.g., \citet{dube2012improving} and \citet{ABLP2015}).

The near-identical point estimates produced by ABLP-GMM and NP-GMM across all configurations underscore that the results are not driven by algorithmic choices. Overall, the model delivers a plausible and robust characterization of wine demand at the LCBO. Across all product sets, the estimates paint a very coherent picture of consumer preferences in the LCBO wine market that is both economically intuitive and remarkably stable as the number of products grows. 

\begin{table}[H]\centering
\caption{Parameter Estimates: LCBO Application}
\label{tab:lcbo_estimates}
\resizebox{0.98\textwidth}{!}{
\begin{tabular}{cccccccccccc}
\hline\hline
$J$ & & $\beta_0$ & $\beta_{alc}$ & $\beta_{sugar}$ &
$\beta_{p}$ & $\beta_{red}$ & $\beta_{white}$ &
$\sigma_0$ & $\sigma_{alc}$ & $\sigma_{sugar}$ & $\sigma_{p}$ \\
\hline
\multirow{2}{*}{800}
& ABLP   & -10.0973 & -0.1116 & -0.0591 & -0.2219 & 0.2370 & 0.1666 & 0.4533 & 0.2407 & 0.0180 & 0.0117 \\
& NP-GMM & -10.0954 & -0.1115 & -0.0592 & -0.2219 & 0.2370 & 0.1666 & 0.4476 & 0.2407 & 0.0180 & 0.0120 \\
\hline
\multirow{2}{*}{1,600}
& ABLP   & -10.7204 & -0.1251 & -0.0548 & -0.2063 & 0.2484 & 0.1860 & 1.8423 & 0.2156 & 0.0025 & 0.0276 \\
& NP-GMM & -13.0064 & -0.0937 & -0.0520 & -0.2197 & 0.2286 & 0.1625 & 4.6914 & 0.2068 & 0.0081 & 0.0077 \\
\hline
\multirow{2}{*}{2,900}
& ABLP   & -9.8265 & -0.1114 & -0.0466 & -0.2253 & 0.2487 & 0.1779 & 0.8842 & 0.1967 & 0.0086 & 0.0237 \\
& NP-GMM & -11.1340 & -0.0940 & -0.0474 & -0.2271 & 0.2326 & 0.1576 & 3.7913 & 0.2039 & 0.0025 & 0.0006 \\
\hline\hline
\end{tabular}
}
\end{table}

Both alcohol and sugar content enter average utility with negative coefficients, suggesting that consumers on average dislike higher-alcohol and sweeter wines. Price has an economically meaningful negative effect on utility, generating substantial price sensitivity. The magnitude is stable across product counts. The positive coefficients on the red and white wine dummies indicate that, relative to the omitted “other wine” category, both red and white wines enjoy a significant utility premium. The random-coefficient estimates reveal important heterogeneity in tastes. There is meaningful heterogeneity in preferences for alcohol content and price, indicating that while the average consumer dislikes high alcohol and high prices, there is a non-trivial subset of consumers who are less price sensitive and who positively value stronger wines. By contrast, the estimated dispersion in sugar preferences is small, suggesting relatively homogeneous attitudes toward sweetness. 

Figure~\ref{fig:speed_compare_lcbo} shows total wall-clock times, illustrating that the performance gap between ABLP and NP-GMM widens disproportionately as $J$ increases.

Table~\ref{tab:lcbo_perform} reports computational performance metrics, highlighting the speed advantage of NP-GMM over ABLP. Although NP-GMM requires more objective function evaluations, its per-evaluation cost is substantially lower. As $J$ doubles, NP-GMM's per-evaluation time roughly doubles, whereas ABLP's increases by a factor of three. Across the three values of $J$, evaluations of the ABLP objective function and gradients are between 12 and 41 times slower than those of NP-GMM. Overall, NP-GMM is between 7 and 14 times faster than ABLP in this application.

\begin{figure}[H]
\centering
\caption{Total Wall-Clock Times: LCBO Application}
\includegraphics[scale=0.22]{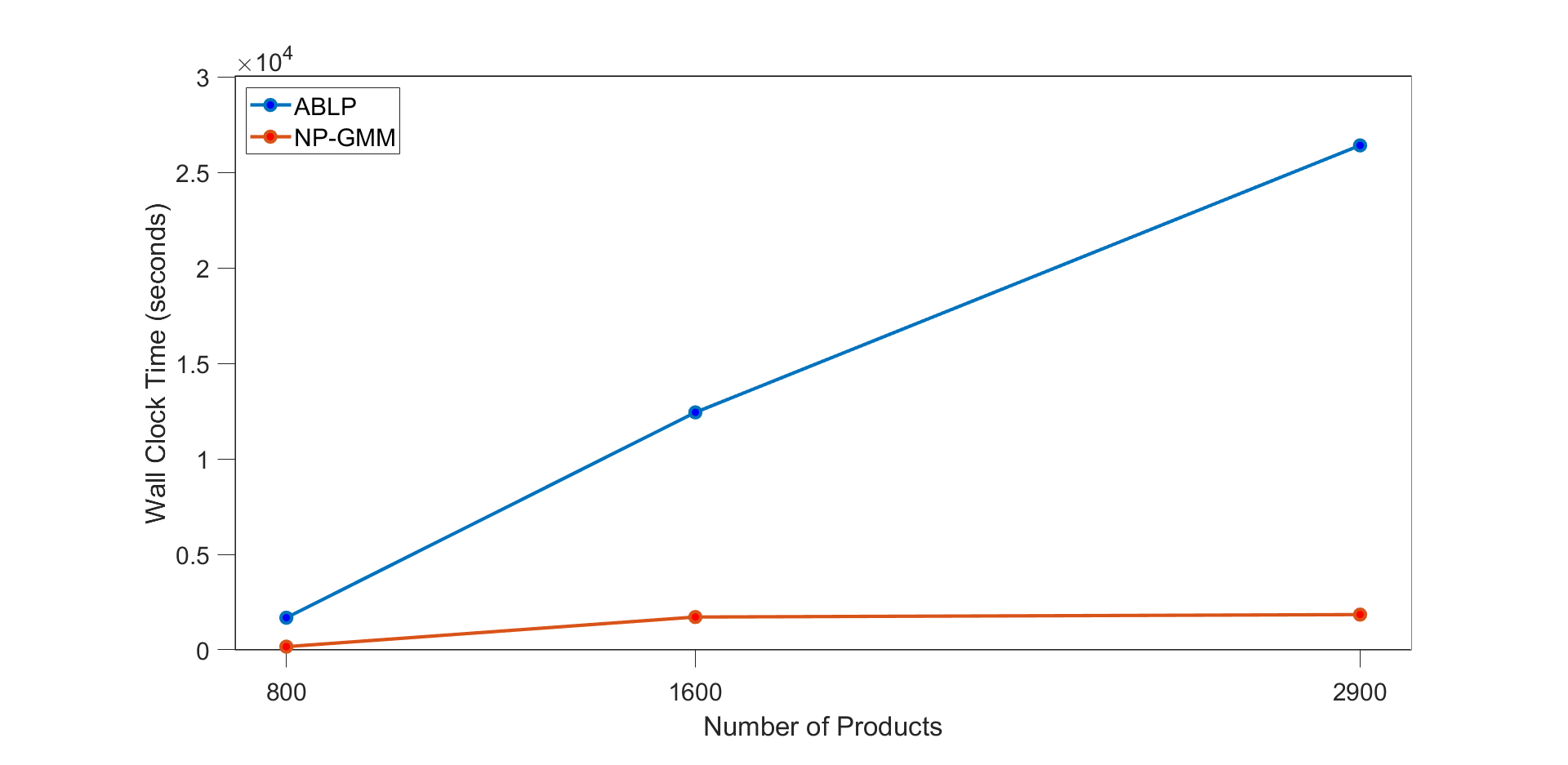}
\fnote{\textit{Note:} Each estimation is conducted with five random starting points. Reported means are based on five runs. Each $J$ is paired with $T = 76$ and $N = 1{,}000$.}
\label{fig:speed_compare_lcbo}
\end{figure} 

\begin{table}[H]\centering
\caption{Comparing Computational Performance: LCBO Application}
\label{tab:lcbo_perform}
\begin{tabular}{c cc cc cc}
\hline \hline
& \multicolumn{2}{c}{\textbf{Panel A}} 
& \multicolumn{2}{c}{\textbf{Panel B}} 
& \multicolumn{2}{c}{\textbf{Panel C}} \\
& \multicolumn{2}{c}{Wall-clock Time}
& \multicolumn{2}{c}{Time per Inner Iteration}
& \multicolumn{2}{c}{Time per Crit. Fun. Eval.} \\
\cmidrule(lr){2-3}\cmidrule(lr){4-5}\cmidrule(lr){6-7}
$J$ & ABLP & NP-GMM & ABLP & NP-GMM & ABLP & NP-GMM \\
\hline
800   & 1{,}683.02  & 168.43   & 20.99  & 1.64  & 14.62 & 1.17 \\
1,600  & 12{,}431.66 & 1{,}719.89 & 71.19  & 3.22  & 53.37 & 2.38 \\
2,900  & 26{,}405.95 & 1{,}847.03 & 262.09 & 6.34  & 183.96 & 4.48 \\
\hline
& \multicolumn{2}{c}{\textbf{Panel D}} 
& \multicolumn{2}{c}{\textbf{Panel E}} 
& \multicolumn{2}{c}{\textbf{Panel F}} \\
& \multicolumn{2}{c}{Number of Outer Iterations}
& \multicolumn{2}{c}{Number of Inner Iterations}
& \multicolumn{2}{c}{Number of Crit. Fun. Eval.} \\
\cmidrule(lr){2-3}\cmidrule(lr){4-5}\cmidrule(lr){6-7}
$J$ & ABLP & NP-GMM & ABLP & NP-GMM & ABLP & NP-GMM \\
\hline
800   & 5.0  & 6.2  & 79.4  & 91.4  & 114.0 & 127.4 \\
1,600  & 8.8  & 22.2 & 177.8 & 473.6 & 232.2 & 641.0 \\
2,900  & 6.2  & 13.2 & 99.8  & 201.8 & 142.0 & 285.4 \\
\hline \hline
\end{tabular}

\vspace{0.5em}
\begin{minipage}{0.9\textwidth}
\footnotesize
\textit{Note:} Each estimation is conducted with 5 random starting points. Reported values are means over 5 runs. All time measures are in seconds. Each model uses $T = 76$ markets and $N = 1{,}000$ simulated consumers.
\end{minipage}
\end{table}

\section{Conclusion}
\label{sec:conclusion}

This paper proposes a new computational approach to estimating the random-coefficients logit demand model that substantially reduces the computational burden of existing methods while preserving their desirable large-sample properties. The key insight is that, conditional on consumer-level probabilities of choosing the outside option, the BLP share inversion admits a closed-form representation that decouples the evaluation of unobserved product characteristics across products. Exploiting this representation allows us to reformulate estimation as a nested pseudo-GMM problem in which the most computationally demanding component of standard BLP estimation—the repeated numerical solution of the share inversion—is no longer required at each parameter trial.

This reformulation leads to an estimation algorithm with three important features. First, it replaces repeated fixed-point iterations or constrained optimization over mean utilities with simple, regression-like calculations that are analytically tractable and admit closed-form gradients. Second, it shifts computational effort from the inner loop to the outer loop of the algorithm, yielding a criterion function that is considerably cheaper to evaluate even if a larger number of outer iterations is required. Third, because the resulting objective and gradient evaluations are naturally separable across products and simulation draws, the algorithm scales particularly well with parallel computing.

Monte Carlo experiments and an empirical application show that these features translate into substantial computational gains at no cost in terms of finite-sample performance. Even in moderately sized demand systems, the proposed estimator is approximately twice as fast as the fastest currently available alternative, and its relative advantage grows more than proportionally with the number of products. Importantly, these gains do not come at the expense of statistical reliability: across all designs considered, the pseudo-GMM estimator exhibits sampling variability and finite-sample bias that are very close to those of fully iterated GMM estimators.

Beyond its immediate computational benefits, the proposed approach opens several avenues for future research. First, the idea of conditioning on auxiliary statistics that summarize competitive interactions may be useful in other models of differentiated-product demand, including extensions with richer heterogeneity, alternative choice sets, or dynamic consumer behavior. Second, the algorithm’s favorable scaling properties make it well suited for applications involving large product spaces, high-frequency data, or repeated estimation in simulation-based policy analysis. 

\clearpage

\baselineskip14pt

\appendix
\renewcommand{\theequation}{\thesection\arabic{equation}}

\begin{appendices}

\setstretch{1.5}

\section{Proofs of Propositions \ref{proposition_1} and \ref{proposition_2} \label{proof_consistency}}

\subsection{Definitions}

Throughout this appendix, we use the hat notation $\widehat{\cdot}$ to denote statistics and functions that involve sampling error, and the subscript $_{0}$ to indicate their population counterparts.

Define the set of sample NP-GMM fixed points as:
\[
    \widehat{\Upsilon} \equiv 
    \left\{ 
    (\boldsymbol{\lambda}, \boldsymbol{\theta}) 
    \in [0, 1]^N \times \Theta : \, 
    \boldsymbol{\theta} = \widehat{\theta}^{\ast}(\boldsymbol{\lambda}) 
    \text{ and } 
    \boldsymbol{\lambda} = \lambda^{\ast}(\boldsymbol{\theta}) 
    \right\},
\]
where $\widehat{\theta}^{\ast}(\boldsymbol{\lambda})$ denotes the NP-GMM estimator evaluated at a given value of $\boldsymbol{\lambda}$, defined as:
\[
    \widehat{\theta}^{\ast}(\boldsymbol{\lambda}) 
    \equiv 
    \arg \min_{\boldsymbol{\theta}} \;
    \widehat{Q}(\boldsymbol{\lambda}, \boldsymbol{\theta}).
\]
If $\widehat{\Upsilon}$ contains a single element, that value is the NP-GMM estimator. In the case of multiple fixed points, the NP-GMM estimator is defined as the element in $\widehat{\Upsilon}$ that minimizes the criterion function $\widehat{Q}$.

Let $\boldsymbol{\theta}_{0}$ be the true value the structural parameters in the population, and let $\boldsymbol{\lambda}_{0}$ be the corresponding value for $\boldsymbol{\lambda}$, i.e., $\boldsymbol{\lambda}_{0} = \lambda^{\ast}(\boldsymbol{\theta}_{0})$. Remember that the mapping
$\lambda^{\ast}(\boldsymbol{\theta})$ is deterministic, i.e., it does not incorporate sampling/estimation error. 

Define the population counterpart of the sample criterion function $\widehat{Q} \left( \boldsymbol{\lambda}, \boldsymbol{\theta} \right)$:
\begin{equation}
    Q_{0} \left( \boldsymbol{\lambda}, \boldsymbol{\theta} \right) 
    \; \equiv \;
    \mathbb{E} 
    \left[ \,
        \widehat{Q} \left( \boldsymbol{\lambda}, \boldsymbol{\theta} \right)
    \, \right],
\end{equation}
the corresponding mapping $\theta_{0}^{\ast}(\boldsymbol{\lambda})$,
\begin{equation}
    \theta_{0}^{\ast}(\boldsymbol{\lambda})
    \equiv 
    \arg \min_{\boldsymbol{\theta}} \;
    Q_{0} \left( \boldsymbol{\lambda}, \boldsymbol{\theta} \right),
\end{equation}
and the set of population NP-GMM fixed points:
\begin{equation}
    \Upsilon_{0} \equiv 
    \left\{ 
    (\boldsymbol{\lambda}, \boldsymbol{\theta}) 
    \in [0, 1]^N \times \Theta : \, 
    \boldsymbol{\theta} = \theta_{0}^{\ast}(\boldsymbol{\lambda}) 
    \text{ and } 
    \boldsymbol{\lambda} = \lambda^{\ast}(\boldsymbol{\theta}) 
    \right\},
\end{equation}
It is important to note that Assumption \ref{assumption_1} does not imply that $(\boldsymbol{\lambda}_0, \boldsymbol{\theta}_0)$  is the only point in the set $\Upsilon_{0}$. Our identification assumption does not rule that $\Upsilon_{0}$ may have multiple elements. But, as we show below, this does not affect the consistency of the NP-GMM estimator. 

\subsection{Consistency of the NP-GMM Estimator}

We begin with an outline of the proof, which proceeds in four steps:
\begin{itemize}
    \item Step 1. The true vector $(\boldsymbol{\lambda}_0, \boldsymbol{\theta}_0)$ uniquely minimizes 
    $Q_0(\boldsymbol{\lambda}, \boldsymbol{\theta})$ within the set $\Upsilon_0$.
    
    \item Step 2. With probability approaching 1, every element of $\widehat{\Upsilon}$ belongs to an arbitrarily small open ball around an element of $\Upsilon_0$.

    \item Step 3. With probability approaching 1, the NP-GMM estimator belongs to an open ball around the true $(\boldsymbol{\lambda}_0, \boldsymbol{\theta}_0)$.
\end{itemize}

If $\Upsilon_0$ is a singleton, then Steps 1 and 2 prove the consistency of the NP-GMM estimator. Otherwise, if $\Upsilon_0$ contains multiple NP-GMM fixed points, then Steps 1 to 3 prove the consistency of the NP-GMM estimator.

\subsubsection*{Step 1 — $(\boldsymbol{\lambda}_0, \boldsymbol{\theta}_0)$ uniquely minimizes 
$Q_0(\boldsymbol{\lambda}, \boldsymbol{\theta})$ within the set $\Upsilon_0$}

\begin{itemize}

\item [1.a.] This is a direct implication of identification Assumption \ref{assumption_1}. 

\item [1.b.] First, we can establish that $(\boldsymbol{\lambda}_0, \boldsymbol{\theta}_0)$ belongs to the set $\Upsilon_0$. By construction, $\boldsymbol{\lambda}_0 = \lambda^{\ast}(\boldsymbol{\theta}_0)$, and the correct specification of the model --embedded into Assumption \ref{assumption_1} -- implies that $Q_{0}(\boldsymbol{\lambda}_0, \boldsymbol{\theta})$ is minimized at $\boldsymbol{\theta} = \boldsymbol{\theta}_0$.  Therefore, $(\boldsymbol{\lambda}_0, \boldsymbol{\theta}_0)$ belongs to $\Upsilon_0$.

\item [1.c.]  Second, by Assumption \ref{assumption_1}, for any element $(\boldsymbol{\lambda}, \boldsymbol{\theta}) \in \Upsilon_0$ different to $(\boldsymbol{\lambda}_0, \boldsymbol{\theta}_0)$, we have that $Q_{0}(\boldsymbol{\lambda}, \boldsymbol{\theta}) \, > \, 
Q_{0}(\boldsymbol{\lambda}_0, \boldsymbol{\theta}_0) = 0$.

\end{itemize}

%%%%%%

\subsubsection*{Step 2 — Convergence in probability of $\widehat{\Upsilon}$ to $\Upsilon_0$}

\begin{itemize}
\item [2.a.] Let $(\widehat{\boldsymbol{\lambda}}, \widehat{\boldsymbol{\theta}})$ be an element of $\widehat{\Upsilon}$. For each element of $\Upsilon_0$, consider an arbitrarily small open ball that contains it. Let $\mathcal{O}$ be the union of these open balls. We want to prove that as the sample size increases, we have that $\Pr\left((\widehat{\boldsymbol{\lambda}}, \widehat{\boldsymbol{\theta}}) \in \mathcal{O}\right) \to 1$, i.e., every element in $\widehat{\Upsilon}$ belongs to an arbitrarily small ball around an element in $\Upsilon_0$.

\item [2.b.] Define the function
\begin{equation}
    \Delta_0(\boldsymbol{\lambda}, \boldsymbol{\theta}) 
    \equiv 
    Q_{0}(\boldsymbol{\lambda}, \boldsymbol{\theta}) -
    \min_{c \in \Theta}
    \left\{ 
        Q_{0}(c, \boldsymbol{\lambda})
    \right\}.
\end{equation}
Because $Q_{0}(\boldsymbol{\lambda}, \boldsymbol{\theta})$ is continuous and $[0, 1]^{N} \times \Theta$ is compact, Berge's maximum theorem establishes that $\Delta_0(\boldsymbol{\lambda}, \boldsymbol{\theta})$ is a continuous function on $[0, 1]^{N} \times \Theta$. By construction, $\Delta_0(\boldsymbol{\lambda}, \boldsymbol{\theta}) \geq 0$ for all $(\boldsymbol{\lambda}, \boldsymbol{\theta}) \in [0, 1]^{N} \times \Theta$.

\item [2.c.] Given the set of arbitrarily small balls $\mathcal{O}$, we can construct a scalar $\varepsilon > 0$ that provides a distance between $\mathcal{O}$ and $\Upsilon_0$. We now describe how we construct $\varepsilon$. 
    \begin{itemize}
        \item [2.c.1.] Let $\mathcal{E}$ be the set of vectors $(\boldsymbol{\lambda}, \boldsymbol{\theta})$ that satisfy the model restrictions $\boldsymbol{\lambda} = \lambda^{\ast}(\boldsymbol{\theta})$:
        \begin{equation}
            \mathcal{E} \equiv 
            \left\{ 
                (\boldsymbol{\lambda}, \boldsymbol{\theta}) \in [0, 1]^{N} \times \Theta : \,
                \boldsymbol{\lambda} = \lambda^{\ast}(\boldsymbol{\theta}) 
        \right\}.
        \end{equation}
        Since $[0, 1]^{N} \times \Theta$ is compact and the function $\boldsymbol{\lambda} - \lambda^{\ast}(\boldsymbol{\theta})$ is continuous, the set $\mathcal{E}$ is compact. By definition, $\Upsilon_0 \subset \mathcal{E}$. Because both $\mathcal{E}$ and $\mathcal{O}^c$ are compact, the set $\mathcal{O}^c \cap \mathcal{E}$ is compact. 

        \item [2.c.2.] Define the constant:
        \begin{equation}
            \varepsilon = 
            \min_{(\boldsymbol{\lambda}, \boldsymbol{\theta}) 
            \in \mathcal{O}^c \cap \mathcal{E}} \; 
            \Delta_0(\boldsymbol{\lambda}, \boldsymbol{\theta}).
        \end{equation}
        By construction, $\varepsilon > 0$. To see this, suppose $\varepsilon = 0$. Then there exists $(\boldsymbol{\lambda}, \boldsymbol{\theta})$ which does not belong to $\Upsilon_0$ but belongs to $\mathcal{E}$ and minimizes $Q_{0}^{\ast}(\boldsymbol{\lambda}, \boldsymbol{\theta})$ -- as zero is the minimum value this function can reach-- which contradicts the definition of $\Upsilon_0$.
    \end{itemize}

    \item [2.d.] Since $\Theta$ is a compact set and $\widehat{Q}(\boldsymbol{\lambda}, \boldsymbol{\theta})$ is continuous and bounded, we have that $\widehat{Q}(\boldsymbol{\lambda}, \boldsymbol{\theta})$ converges uniformly to $Q_0(\boldsymbol{\lambda}, \boldsymbol{\theta})$. Define the following indicator of a sample event:
    \begin{equation}
        \widehat{\mathcal{A}} \equiv 
        1\left\{ 
            \left| 
                \widehat{Q}(\boldsymbol{\lambda}, \boldsymbol{\theta}) - 
                Q_0(\boldsymbol{\lambda}, \boldsymbol{\theta})
            \right| < 
            \frac{\varepsilon}{2} 
            \text{ for all } 
            (\boldsymbol{\lambda}, \boldsymbol{\theta}) 
            \in \Theta \times [0,1]^{N} 
        \right\}.
    \end{equation}
    Uniform convergence of $\widehat{Q}$ to $Q_{0}$ implies that, as the sample size increases, $Pr(\widehat{\mathcal{A}}=1)$ approaches one for any arbitrary $\varepsilon > 0$.

    \item [2.e.] Let $(\widehat{\boldsymbol{\lambda}}, \widehat{\boldsymbol{\theta}})$ be an element of $\widehat{\Upsilon}$ and let the sample size be large enough such that $\widehat{\mathcal{A}}=1$. Then:
\begin{itemize}
    \item[i.] Since $\widehat{\mathcal{A}}=1$, we have that 
    $Q_0(\widehat{\boldsymbol{\lambda}}, \widehat{\boldsymbol{\theta}}) -
    \widehat{Q}(\widehat{\boldsymbol{\lambda}}, \widehat{\boldsymbol{\theta}}) < \varepsilon/2$.
    
    \item[ii.] Since $\widehat{\mathcal{A}}=1$, for any $\boldsymbol{\theta} \in \Theta$, we have that 
    $\widehat{Q}(\widehat{\boldsymbol{\lambda}}, \boldsymbol{\theta}) -
    Q_0(\widehat{\boldsymbol{\lambda}}, \boldsymbol{\theta})
     < \varepsilon/2$.
    
    \item[iii.] Since $(\widehat{\boldsymbol{\lambda}}, \widehat{\boldsymbol{\theta}})$ is an NP-GMM fixed point, for any $\boldsymbol{\theta} \in \Theta$, $\widehat{Q}(\widehat{\boldsymbol{\lambda}}, \widehat{\boldsymbol{\theta}}) \leq \widehat{Q}(\widehat{\boldsymbol{\lambda}}, \boldsymbol{\theta})$.
\end{itemize}

Combining $(i)$ and $(iii)$ gives:
\begin{equation}
    Q_0(\widehat{\boldsymbol{\lambda}}, \widehat{\boldsymbol{\theta}}) < 
    \widehat{Q}(\widehat{\boldsymbol{\lambda}}, \boldsymbol{\theta}) 
    + \frac{\varepsilon}{2}, \quad \forall \boldsymbol{\theta} \in \Theta.
\end{equation}
Adding $(ii)$, we obtain:
\begin{equation}
    Q_0(\widehat{\boldsymbol{\lambda}}, \widehat{\boldsymbol{\theta}}) < 
    Q_0(\widehat{\boldsymbol{\lambda}}, \boldsymbol{\theta}) 
    + \varepsilon, \quad \forall \boldsymbol{\theta} \in \Theta.
\end{equation}
Hence,
\begin{equation}
    \Delta_0(\widehat{\boldsymbol{\lambda}}, \widehat{\boldsymbol{\theta}}) \equiv
    Q_0(\widehat{\boldsymbol{\lambda}}, \widehat{\boldsymbol{\theta}}) -
    \min_{\boldsymbol{\theta} \in \Theta} 
    Q_0(\widehat{\boldsymbol{\lambda}}, \boldsymbol{\theta}) 
    < \varepsilon.
\end{equation}
Since $(\widehat{\boldsymbol{\lambda}}, \widehat{\boldsymbol{\theta}}) \in \mathcal{E}$ and $\varepsilon = \min_{(\boldsymbol{\lambda}, \boldsymbol{\theta}) \in \mathcal{O}^c \cap \mathcal{E}} \; 
\Delta_0(\boldsymbol{\lambda}, \boldsymbol{\theta})$, this implies that $(\widehat{\boldsymbol{\lambda}}, \widehat{\boldsymbol{\theta}}) \in \mathcal{O}$.

Therefore, as the sample size increases and $Pr(\widehat{\mathcal{A}}=1)$ approaches one, we have that $\Pr\left((\widehat{\boldsymbol{\lambda}}, \widehat{\boldsymbol{\theta}}) \in \mathcal{O}\right) \to 1$, i.e., every element in $\widehat{\Upsilon}$ belongs to an arbitrarily small ball around an element in $\Upsilon_0$.

\item [2.f.] As a corollary, note that if $\Upsilon_0$ is a singleton, consistency follows immediately.

\end{itemize}

%%%%%%

\subsubsection*{Step 3 — The NP-GMM estimator belongs to an open ball around $(\boldsymbol{\lambda}_0, \boldsymbol{\theta}_0)$}

\begin{itemize}
    \item [3.a] Let $\mathcal{O}_0$ be an open ball around the true $(\boldsymbol{\lambda}_0, \boldsymbol{\theta}_0)$, and let $\mathcal{O}_1$ be the union of open balls around all other elements in $\Upsilon_0 \setminus \{(\boldsymbol{\lambda}_0, \boldsymbol{\theta}_0)\}$. Define the constant $\eta$:
    \begin{equation}
        \eta  = 
        \inf_{(\boldsymbol{\lambda}, \boldsymbol{\theta}) 
        \in \mathcal{O}_1} \, 
        Q_0(\boldsymbol{\lambda}, \boldsymbol{\theta}) -
        \sup_{(\boldsymbol{\lambda}, \boldsymbol{\theta}) 
        \in \mathcal{O}_0} \, 
        Q_0(\boldsymbol{\lambda}, \boldsymbol{\theta}).
    \end{equation}
    By continuity of $Q_0$ and since $(\boldsymbol{\lambda}_0, \boldsymbol{\theta}_0) $ is isolated, we can take $\mathcal{O}_0$ and $\mathcal{O}_1$ small enough to ensure $\eta > 0$.

    \item [3.b] Let $(\widehat{\boldsymbol{\lambda}}, \widehat{\boldsymbol{\theta}})$ be the NP-GMM fixed point in $\mathcal{O}_0$. Let $\widehat{\mathcal{A}}$ be the event indicator defined in Step 2. And let $\varepsilon^{\ast} > 0$ be the minimum between the constants $\eta$ defined above and the $\varepsilon$ defined in Step 2. Then:
    \begin{itemize}
        \item [i.] Since $\widehat{\mathcal{A}}=1$, $\widehat{Q}(\widehat{\boldsymbol{\lambda}}, \widehat{\boldsymbol{\theta}})
        < Q_{0}(\widehat{\boldsymbol{\lambda}}, \widehat{\boldsymbol{\theta}}) + \varepsilon^{\ast}/2$.
    
        \item [ii.] Since $\widehat{\mathcal{A}}=1$, for any $(\boldsymbol{\lambda}, \boldsymbol{\theta}) \in \Theta \times [0,1]^{N}$, $Q_{0}(\boldsymbol{\lambda}, \boldsymbol{\theta}) 
        < \widehat{Q}(\boldsymbol{\lambda}, \boldsymbol{\theta}) + \varepsilon^{\ast}/2$
    
        \item[iii.] Since $(\widehat{\boldsymbol{\lambda}}, \widehat{\boldsymbol{\theta}}) \in \mathcal{O}_0$, for any $(\boldsymbol{\lambda}, \boldsymbol{\theta}) \in \widehat{\Upsilon} \cap \mathcal{O}_1$, 
        $Q_0(\widehat{\boldsymbol{\lambda}}, \widehat{\boldsymbol{\theta}}) \leq Q_0(\boldsymbol{\lambda}, \boldsymbol{\theta}) - \varepsilon^{*}$.
    \end{itemize}
  
    Combining (i) and (iii),
    \begin{equation}
        \widehat{Q}(\widehat{\boldsymbol{\lambda}}, \widehat{\boldsymbol{\theta}}) < 
        Q_0(\boldsymbol{\lambda}, \boldsymbol{\theta}) -
        \varepsilon^{*}/2 \quad 
        \forall (\boldsymbol{\lambda}, \boldsymbol{\theta}) 
        \in \widehat{\Upsilon} \cap \mathcal{O}_1.
    \end{equation}
    Adding (ii), we obtain,
        \begin{equation}
        \widehat{Q}(\widehat{\boldsymbol{\lambda}}, \widehat{\boldsymbol{\theta}}) < 
        Q_0(\boldsymbol{\lambda}, \boldsymbol{\theta}) -
        \quad 
        \forall (\boldsymbol{\lambda}, \boldsymbol{\theta}) 
        \in \widehat{\Upsilon} \cap \mathcal{O}_1.
    \end{equation}
    By Step 2, $\widehat{\Upsilon} \rightarrow_{p} \Upsilon_{0}$ and this implies that with probability approaching one 
    $\widehat{\Upsilon} \subset \mathcal{O}_0 \cup \mathcal{O}_1$. Therefore, with probability approaching one, the NP-GMM estimator lies in $\mathcal{O}_0$ and hence converges in probability to the true $(\boldsymbol{\lambda}_0, \boldsymbol{\theta}_0)$. 
    
    Q.E.D.
\end{itemize}

\subsection{Asymptotic distribution of the NP-GMM estimator}

The marginal conditions that define the NP-GMM estimator are:

\begin{equation}
    \left\{
    \begin{array}{rcl}
        \displaystyle \frac{1}{J} 
        \sum_{j} \, g_{j}(\widehat{\boldsymbol{\lambda}}, 
        \widehat{\boldsymbol{\theta}}) & = & 0
        \\
        \widehat{\boldsymbol{\lambda}} - \lambda^{\ast} 
        (\widehat{\boldsymbol{\theta}}) & = & 0
    \end{array}
    \right.
\end{equation}
Applying a stochastic mean value expansion between $(\boldsymbol{\lambda}_{0}, \boldsymbol{\theta}_{0})$ and $(\widehat{\boldsymbol{\lambda}}, \widehat{\boldsymbol{\theta}})$, and using the consistency of $(\widehat{\boldsymbol{\lambda}}, \widehat{\boldsymbol{\theta}})$, we have:
\begin{equation}
    \left\{
    \begin{array}{rcl}
        \displaystyle
        \frac{1}{\sqrt{J}} 
        \sum_{j} \frac{\partial g_{j}^{0}}{\partial \boldsymbol{\theta}} -
        \Omega_{\theta \theta} \, \sqrt{J}
        (\widehat{\boldsymbol{\theta}} - \boldsymbol{\theta}_0) -
        \Omega_{\theta \lambda} \, \sqrt{J}
        (\widehat{\boldsymbol{\lambda}} - \boldsymbol{\lambda}_0) 
        & = & o_p(\sqrt{J}) \\
        \sqrt{J}(\widehat{\boldsymbol{\lambda}} - \boldsymbol{\lambda}_0) -
        \Lambda_{\theta} \, 
        \sqrt{J}
        (\widehat{\boldsymbol{\theta}} - \boldsymbol{\theta}_0)
        & = & o_p(\sqrt{J}) 
    \end{array}
    \right.
\end{equation}
with $\Omega_{\theta \theta} \equiv \displaystyle \mathbb{E} \left[ \frac{\partial^{2} g_{j}(\boldsymbol{\lambda}_0, \boldsymbol{\theta}_0)}{\partial \boldsymbol{\theta} \partial \boldsymbol{\theta}^{\prime}}\right]$, $\Omega_{\theta \lambda} \equiv \displaystyle \mathbb{E} \left[ \frac{\partial^{2} g_{j}(\boldsymbol{\lambda}_0, \boldsymbol{\theta}_0)}{\partial \boldsymbol{\theta} \partial \boldsymbol{\lambda}^{\prime}}\right]$, and $\Lambda_{\theta} \equiv \displaystyle \frac{\partial \boldsymbol{\lambda}^{\ast}(\boldsymbol{\theta}_0)}{\partial \boldsymbol{\theta}^{\prime}}$. 
Solving the second equation for $\sqrt{J}(\widehat{\boldsymbol{\lambda}} - \boldsymbol{\lambda}_0)$ and substituting into the first equation gives:
\begin{equation}
    \left[ 
        \Omega_{\theta \theta} + \Omega_{\theta \lambda} \, \Lambda_{\theta}
    \right] \, 
    \sqrt{J} (\widehat{\boldsymbol{\theta}} - \boldsymbol{\theta}_0)
    \; = \; 
    \displaystyle
    \frac{1}{\sqrt{J}} 
    \sum_{j} \frac{\partial g_{j}^{0}}{\partial \boldsymbol{\theta}} + 
    o_p(\sqrt{J}) 
\end{equation}
Thus, by the Mann–Wald theorem:
\[
    \sqrt{J} (\widehat{\boldsymbol{\theta}} - \boldsymbol{\theta}_0)
    \overset{d}{\longrightarrow} 
    \mathcal{N}(0, V_{np}),
\]
where
\begin{equation}
    V_{np} \; = \; 
    \left[ 
        \Omega_{\theta \theta} + \Omega_{\theta \lambda} \, \Lambda_{\theta}
    \right]^{-1} 
    \, \Omega_{\theta \theta} \,
    \left[ 
        \Omega_{\theta \theta} + \Lambda_{\theta}^{\prime} 
        \, \Omega_{\theta \lambda}^{\prime}
    \right]^{-1} 
\end{equation}

Q.E.D.

\newpage

\section{Multithreading}
\label{omega faster}

We use MATLAB's profiler to identify the source of computational cost reductions when using NP-GMM. Tables \ref{tab: more eva time 500 25}, \ref{tab: more eva time 50 25} and \ref{tab: more eva time 100 100} compare ABLP and NP-GMM for 1,000 evaluations of the objective function and associated gradients, using different numbers of threads and under different model settings (all with $N = 1,000$).

NP-GMM is faster than ABLP in evaluating both the GMM objective function and its gradients. For objective-function evaluation, as discussed above, NP-GMM’s speed advantage arises from its use of an ``almost'' linear mapping in $\boldsymbol{\sigma}$ to recover $\boldsymbol{\delta}$. For gradient evaluation, NP-GMM further benefits from simple, closed-form analytic gradients that are particularly well suited to parallel computation and multithreading.

In contrast, ABLP relies on more complex numerical procedures to compute $\boldsymbol{\delta}$ and its derivatives. In particular, its linear approximation requires a matrix inversion that jointly accounts for all products within a market, leading to higher-dimensional matrix operations. As shown in the three tables, when each method is implemented using its respective optimal number of threads, NP-GMM evaluates the objective function and gradients approximately five times faster than ABLP under these model settings.

\begin{table}[H]\centering
\caption{Evaluation Time (seconds): ABLP versus NP-GMM ($J = 25$, $T = 500$)}
\label{tab: more eva time 500 25}
\resizebox{.60\columnwidth}{!}{%
\begin{tabular}{ccccc} \hline \hline
Number of Threads                                     &  Method    & GMM ($\boldsymbol{\delta}$)  & Gradient (Jacobian)    & Total \\ \hline
\multirow{2}{*}{1}  & ABLP & 420.6(420.2) & 1285.8(1278.2) & 1706.4 \\
                    &                      NP-GMM  & 251.2(250.8)  & 320 & 571.2 \\ \hline 
\multirow{2}{*}{2}  & ABLP & 285.5(285)	& 1346.2(1338.7) &	1631.7
 \\
                    &                      NP-GMM   & 148.1(147.6)	  & 261.4  & 	409.5
 \\ \hline 
\multirow{2}{*}{3}  & ABLP & 247.8(247.3)	 & 1390.3(1382.7) &	1638.1
 \\
                    &                     NP-GMM    & 114.9(114.4) &	248.7	& 363.6
 \\ \hline 
\multirow{2}{*}{4}  & ABLP & 232.9(232.4)	 & 1674(1666.3) &	1906.9
 \\
                    &                      NP-GMM  & 99.1(98.6)	& 247.2 &	346.3 
 \\ \hline 
\multirow{2}{*}{5}  & ABLP & 225.5(224.9) &	1748.1(1740.4) &	1973.6
 \\
                    &                      NP-GMM    & 91.6(91.1) & 	249.7 &	341.3
 \\ \hline 
\multirow{2}{*}{6}  & ABLP & 230.2(229.6) &	2240.3(2232.7) &	2470.5
 \\
                    &                     NP-GMM    & 87.7(87.2) & 	251.5 & 	339.2
 \\ \hline \hline
\end{tabular}
}
\vspace{0.5em}
\begin{minipage}{0.6\columnwidth}
\footnotesize
\textit{Note:} Entries report mean wall-clock time (in seconds) for a single evaluation of the GMM objective function and its gradient. In the GMM column, numbers in parentheses report the time required to recover the mean utility vector $\boldsymbol{\delta}$. In the Gradient column, numbers in parentheses report the time spent computing the Jacobian associated with the differentiation of the mean utility vector $\boldsymbol{\delta}$. Total time is the sum of GMM and gradient evaluation times.
\end{minipage}
\end{table}

\begin{table}[H]\centering
\caption{Evaluation Time (seconds): ABLP versus NP-GMM ($J = 25$, $T = 50$)}
\label{tab: more eva time 50 25}
\resizebox{.6\columnwidth}{!}{%
\begin{tabular}{ccccc} \hline \hline
Number of Threads                                     &  Method    & GMM ($\boldsymbol{\delta}$)  & Gradient (Jacobian)    & Total \\ \hline
\multirow{2}{*}{1}  & ABLP & 45(44.8) &	127.4(126.6) & 172.4
 \\
                    &                      NP-GMM  & 26.6(26.4) &	 31.8 &	58.4
 \\ \hline 
\multirow{2}{*}{2}  & ABLP & 29.7(29.6) & 	131.1(130.2) &	160.8

 \\
                    &                      NP-GMM   & 15.6(15.4)	& 25.7	& 41.3

 \\ \hline 
\multirow{2}{*}{3}  & ABLP & 25.4(25.3)	& 142.6(141.7)	& 168

 \\
                    &                     NP-GMM    & 12.1(11.9)	& 24.3	& 36.4

 \\ \hline 
\multirow{2}{*}{4}  & ABLP & 23.6(23.4)	& 157.7(156.8) &	181.3

 \\
                    &                      NP-GMM  & 11.1(10.9)	& 23.8 &	 34.9
 
 \\ \hline 
\multirow{2}{*}{5}  & ABLP & 23(22.9)	& 159.1(158.1) &	182.1

 \\
                    &                      NP-GMM    & 10.4(10.2) & 	23.7	 & 34.1

 \\ \hline 
\multirow{2}{*}{6}  & ABLP & 22.8(22.7)	& 172.9(171.8) &	195.7

 \\
                    &                     NP-GMM    & 10.1(9.9) &	23.8 &	33.9

 \\ \hline \hline
\end{tabular}
}
\end{table}

\begin{table}[H]\centering
\caption{Evaluation Time (seconds): ABLP versus NP-GMM ($J = 100$, $T = 100$)}
\label{tab: more eva time 100 100}
\resizebox{.6\columnwidth}{!}{%
\begin{tabular}{ccccc} \hline \hline
Number of Threads                                     &  Method    & GMM ($\boldsymbol{\delta}$)  & Gradient (Jacobian)    & Total \\ \hline
\multirow{2}{*}{1}  & ABLP & 365(364.6) &	 1173.7(1166.9) & 1538.7
 \\
                    &                      NP-GMM  & 211.7(211.3) &	 268.6 &	480.3
 \\ \hline 
\multirow{2}{*}{2}  & ABLP & 253.3(252.8) & 	1082.1(1075.2) &	1335.4

 \\
                    &                      NP-GMM   & 120.9(120.4)	& 208.9	& 329.8

 \\ \hline 
\multirow{2}{*}{3}  & ABLP & 217(216.5)	& 1116.2(1109.4)	& 1333.2

 \\
                    &                     NP-GMM    & 92.9(92.4)	& 205.1	& 298

 \\ \hline 
\multirow{2}{*}{4}  & ABLP & 208.3(207.8)	& 1195.2(1188.2) &	1403.5

 \\
                    &                      NP-GMM  & 83.2(82.7)	& 204.8 &	 288
 
 \\ \hline 
\multirow{2}{*}{5}  & ABLP & 213.3(212.8)	& 1286.2(1279.4) &	1499.5

 \\
                    &                      NP-GMM    & 76.6(76.1) & 	207.6	 & 284.2

 \\ \hline 
\multirow{2}{*}{6}  & ABLP & 217.4(216.8)	& 1364(1356.7) &	1581.4

 \\
                    &                     NP-GMM    & 72.4(71.9) &	207.8 &	280.2

 \\ \hline \hline
\end{tabular}
}
\end{table}

From the discussion above, we observe the speed advantage of NP-GMM over ABLP under the benchmark large sample setting of (a) $J = 25$, $T = 500$, $N = 1,000$. To assess the effect of increasing sample size on NP-GMM's speed advantage, we modify setting (a) by doubling $J$, $T$, and $N$ to create the following settings: (b) $J = 50$, $T = 500$, $N = 1,000$; (c) $J = 25$, $T = 1,000$, $N = 1,000$; and (d) $J = 25$, $T = 500$, $N = 2000$. Figure \ref{fig:eva speed} provides wall-clock time comparisons under these four settings. The optimal number of threads for ABLP is always 2, while the optimal number of threads for NP-GMM is either 5 or 6, with negligible differences between the two. Once again, our algorithm is approximately five times faster than ABLP in evaluating the objective function and associated gradients.

%Remark: I can't run the code for the model setting of $J = 25$ and $T = 2000$. The error code is "Requested 50000x50000 (18.6GB) array exceeds maximum array size preference (15.8GB). This might cause MATLAB to become unresponsive." 

\begin{figure}[H]
     \centering
     \caption{Speed Comparisons under Different Model Settings}
        \label{fig:eva speed}
     \begin{subfigure}[b]{0.48\textwidth}
         \centering
         \includegraphics[width=\textwidth]{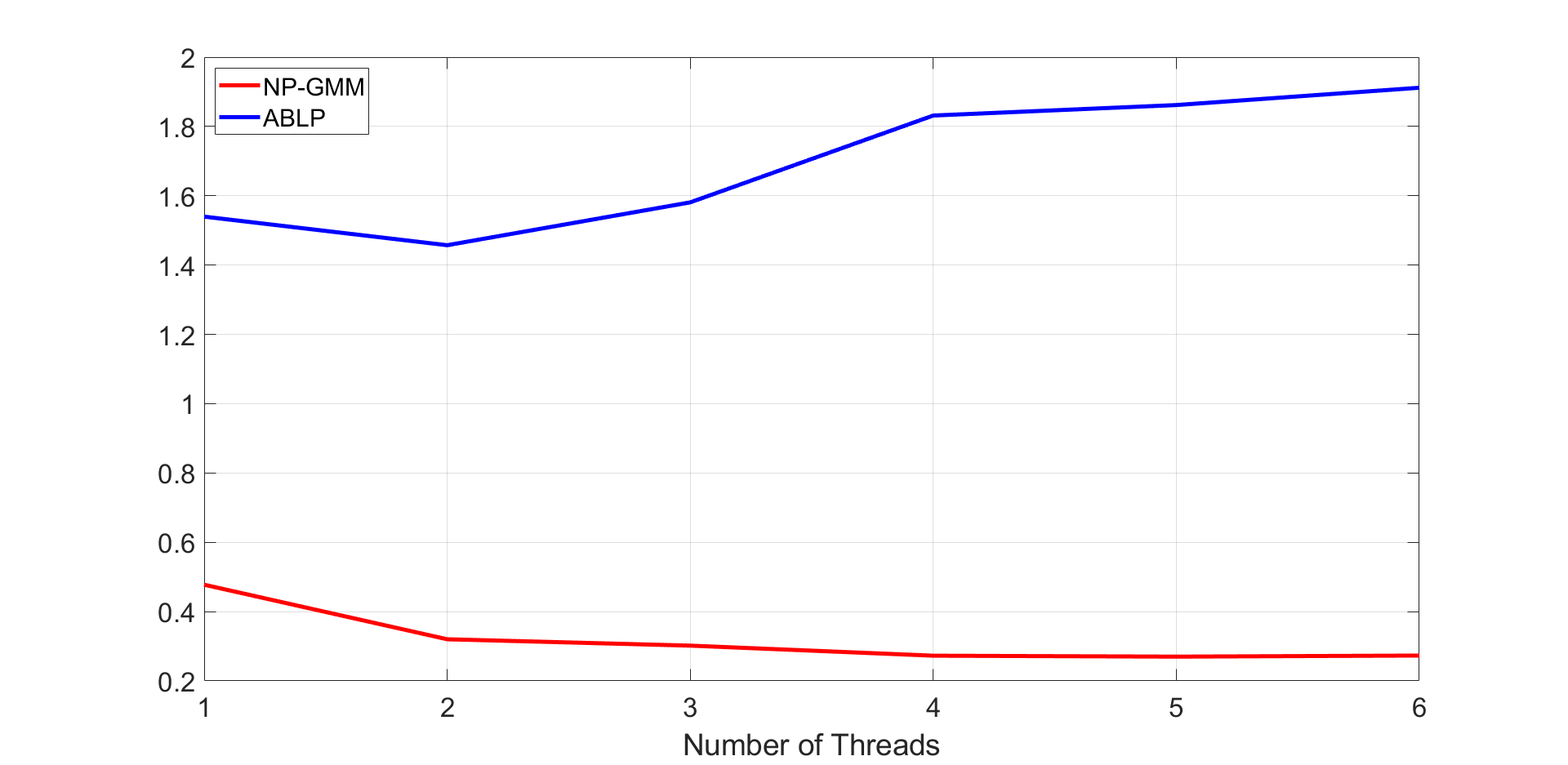}
         \caption{$J = 25$, $T = 500$, $N = 1,000$}
     \end{subfigure}
     %\hfill
     \begin{subfigure}[b]{0.48\textwidth}
         \centering
         \includegraphics[width=\textwidth]{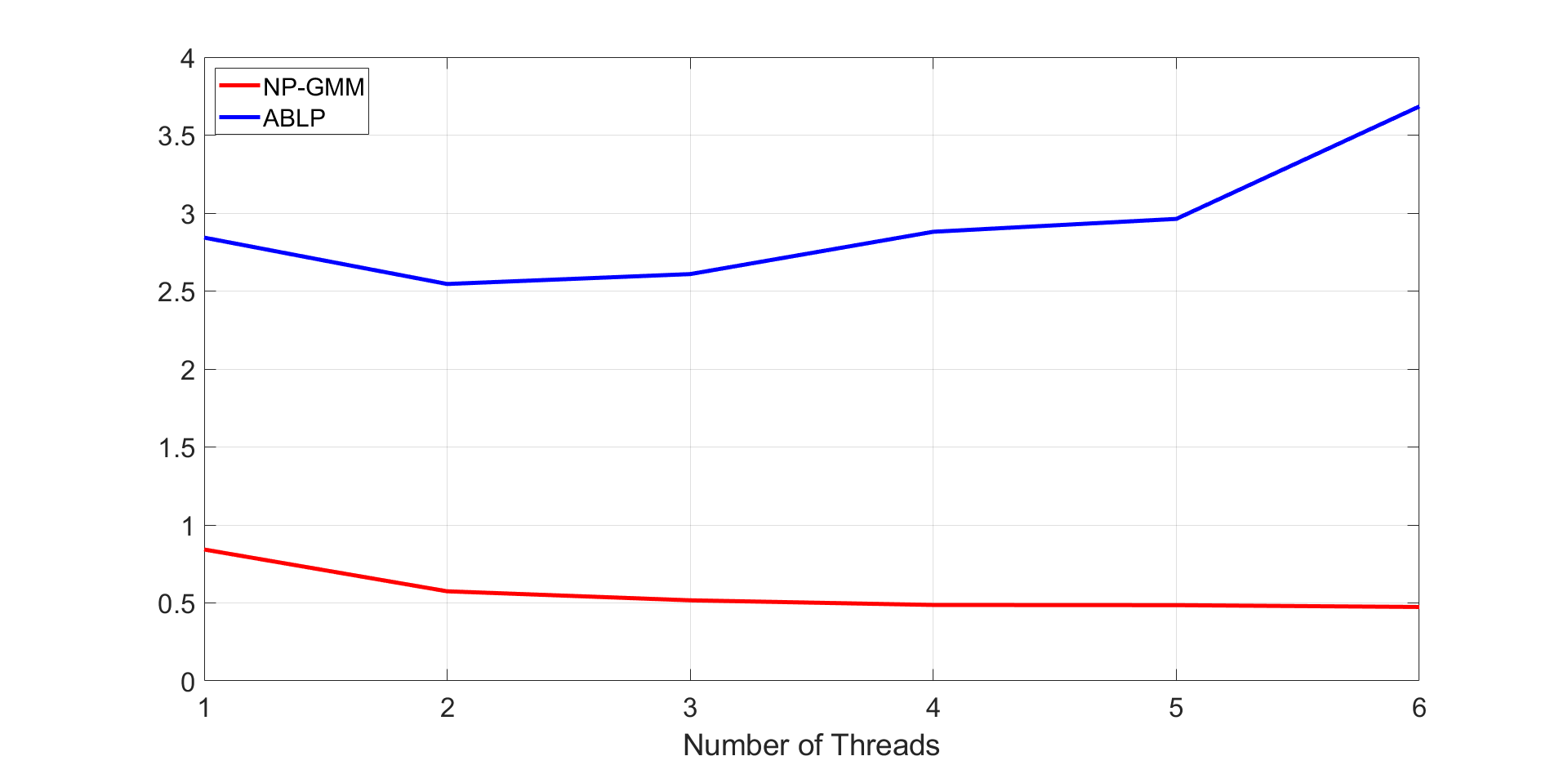}
         \caption{$J = 50$, $T = 500$, $N = 1,000$}
     \end{subfigure}
     \vfill
     \begin{subfigure}[b]{0.48\textwidth}
         \centering
         \includegraphics[width=\textwidth]{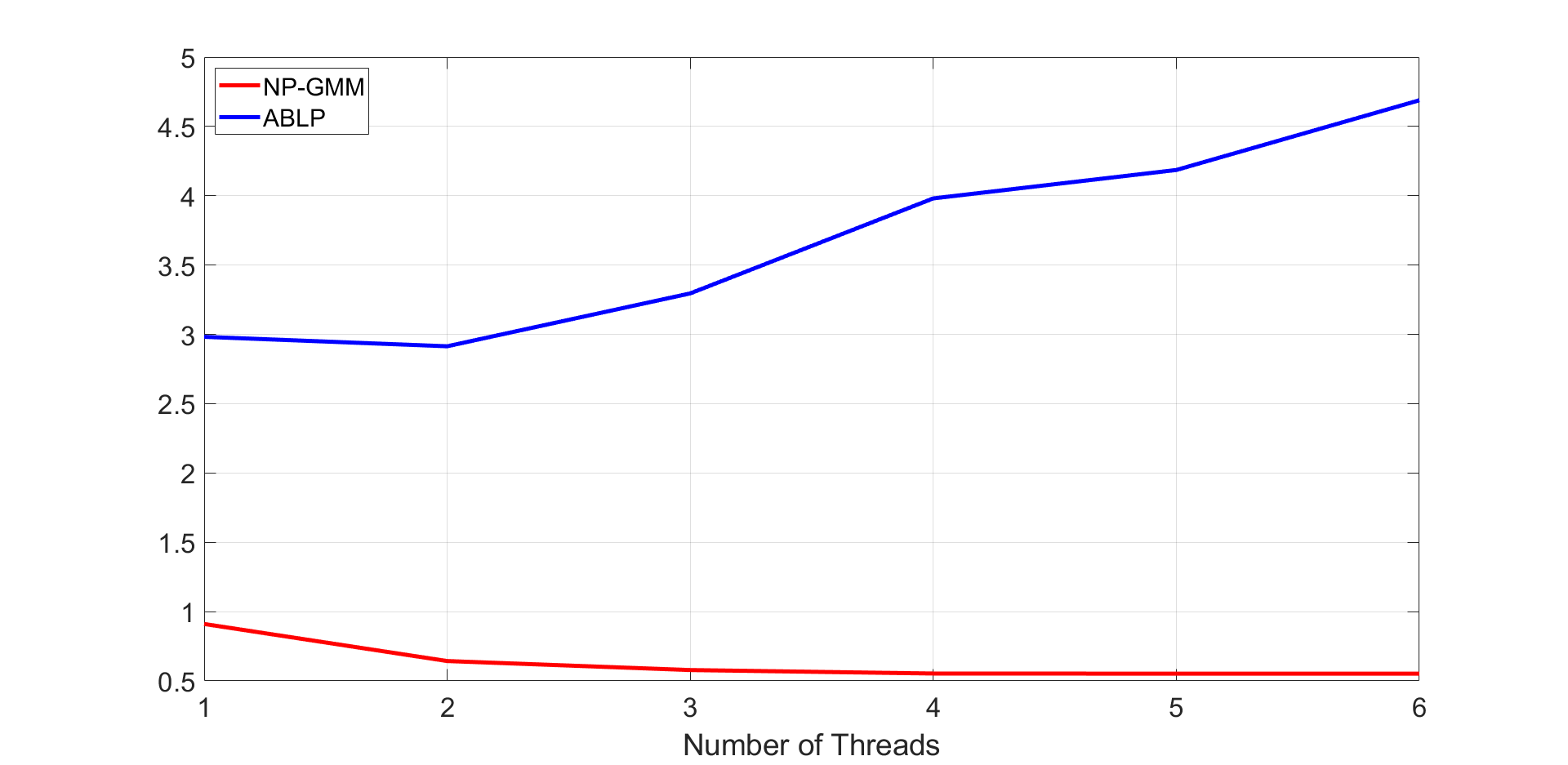}
         \caption{$J = 25$, $T = 1,000$, $N = 1,000$}
     \end{subfigure}
     %\hfill
     \begin{subfigure}[b]{0.48\textwidth}
         \centering
         \includegraphics[width=\textwidth]{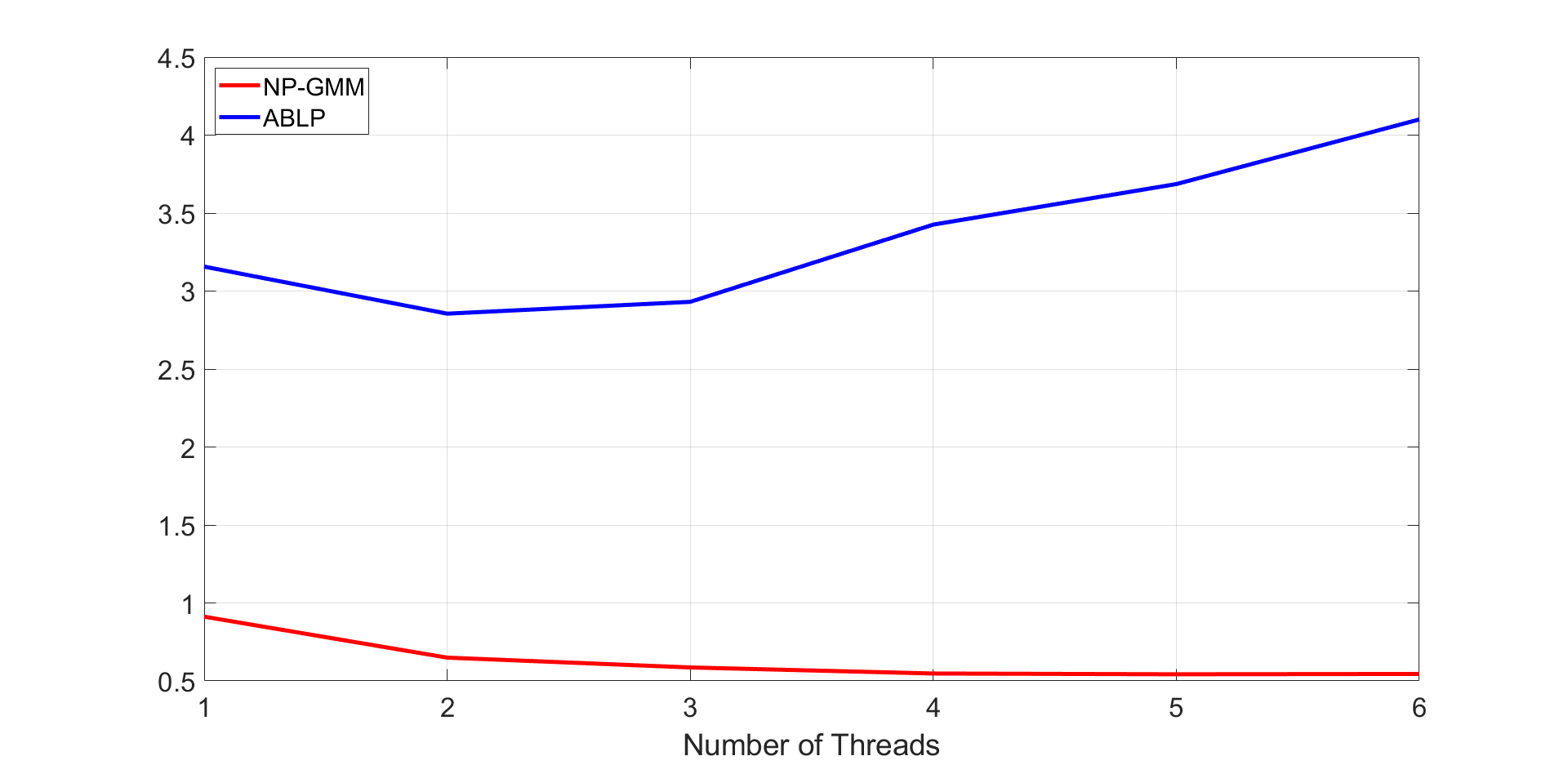}
         \caption{$J = 25$, $T = 500$, $N = 2000$}
     \end{subfigure}
\end{figure}

We conduct more tests by fixing one dimension of the model settings ($J$ or $T$) and increasing the size of the other dimension. The results are consistent with our previous finding that NP-GMM benefits from parallel computing and multithreading, whereas ABLP is less suited to these techniques, as shown in Figure \ref{fig:more eva speed}. We also calculate the speed gains of NP-GMM and ABLP using the optimal number of threads compared to a single thread, as shown in Tables \ref{tab:more eva time J 25} and \ref{tab:more eva time T 500}. NP-GMM reduces wall-clock time by 40\% to 47\% with the optimal number of threads, while ABLP achieves savings of only 2\% to 14\%.

%The time ratio of optimal thread over single thread ranges from 52.91\% to 60.64\% for NP-GMM, and 86.35\% to 97.72\% for ABLP.

\begin{figure}[H]
     \centering
     \caption{More Speed Comparisons}
        \label{fig:more eva speed}
     \begin{subfigure}[b]{0.48\textwidth}
         \centering
         \includegraphics[width=\textwidth]{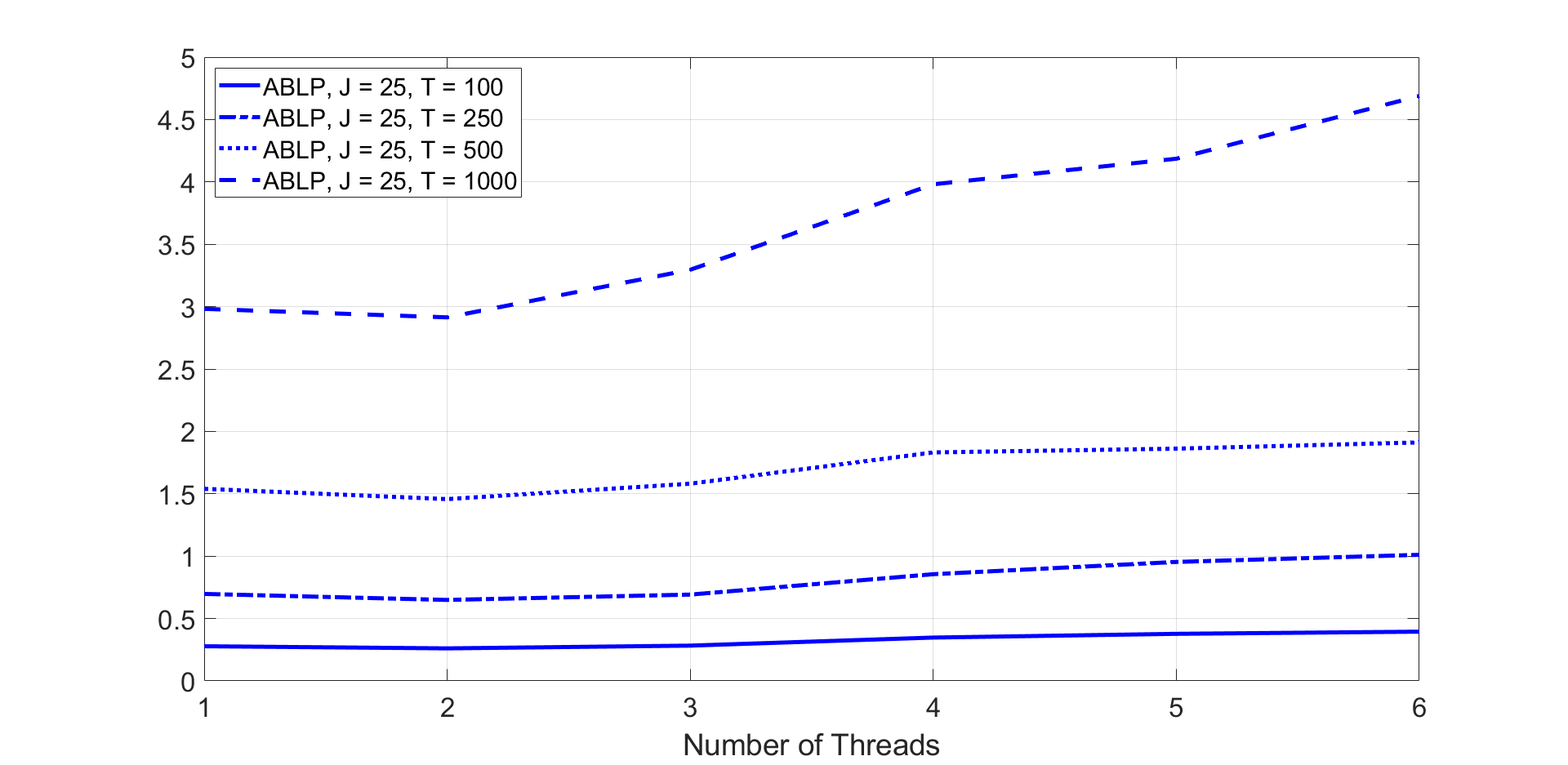}
         \caption{ABLP, $J = 25$, $N = 1,000$}
     \end{subfigure}
     %\hfill
     \begin{subfigure}[b]{0.48\textwidth}
         \centering
         \includegraphics[width=\textwidth]{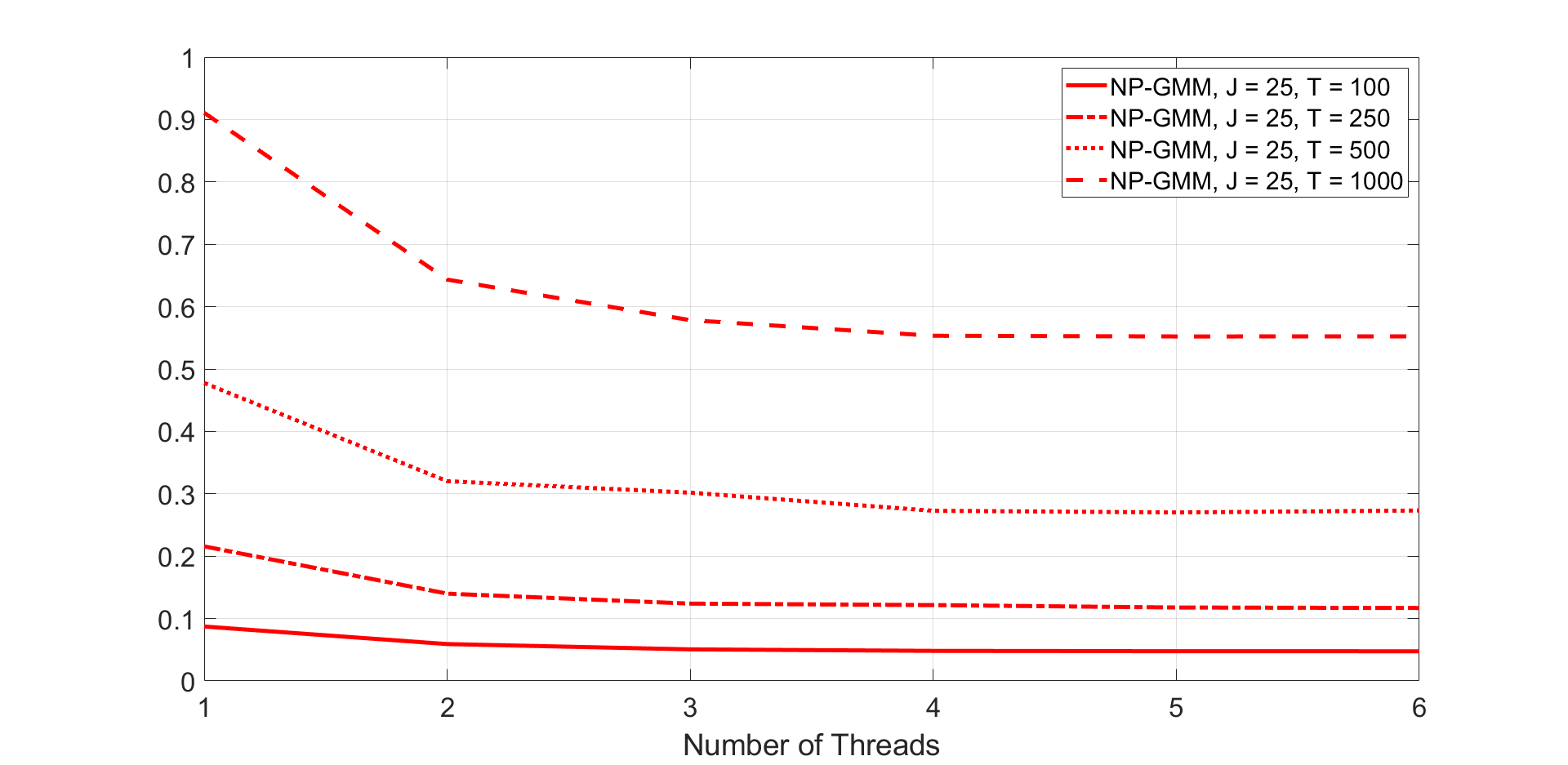}
         \caption{NP-GMM, $J = 25$, $N = 1,000$}
     \end{subfigure}
     \vfill
     \begin{subfigure}[b]{0.48\textwidth}
         \centering
         \includegraphics[width=\textwidth]{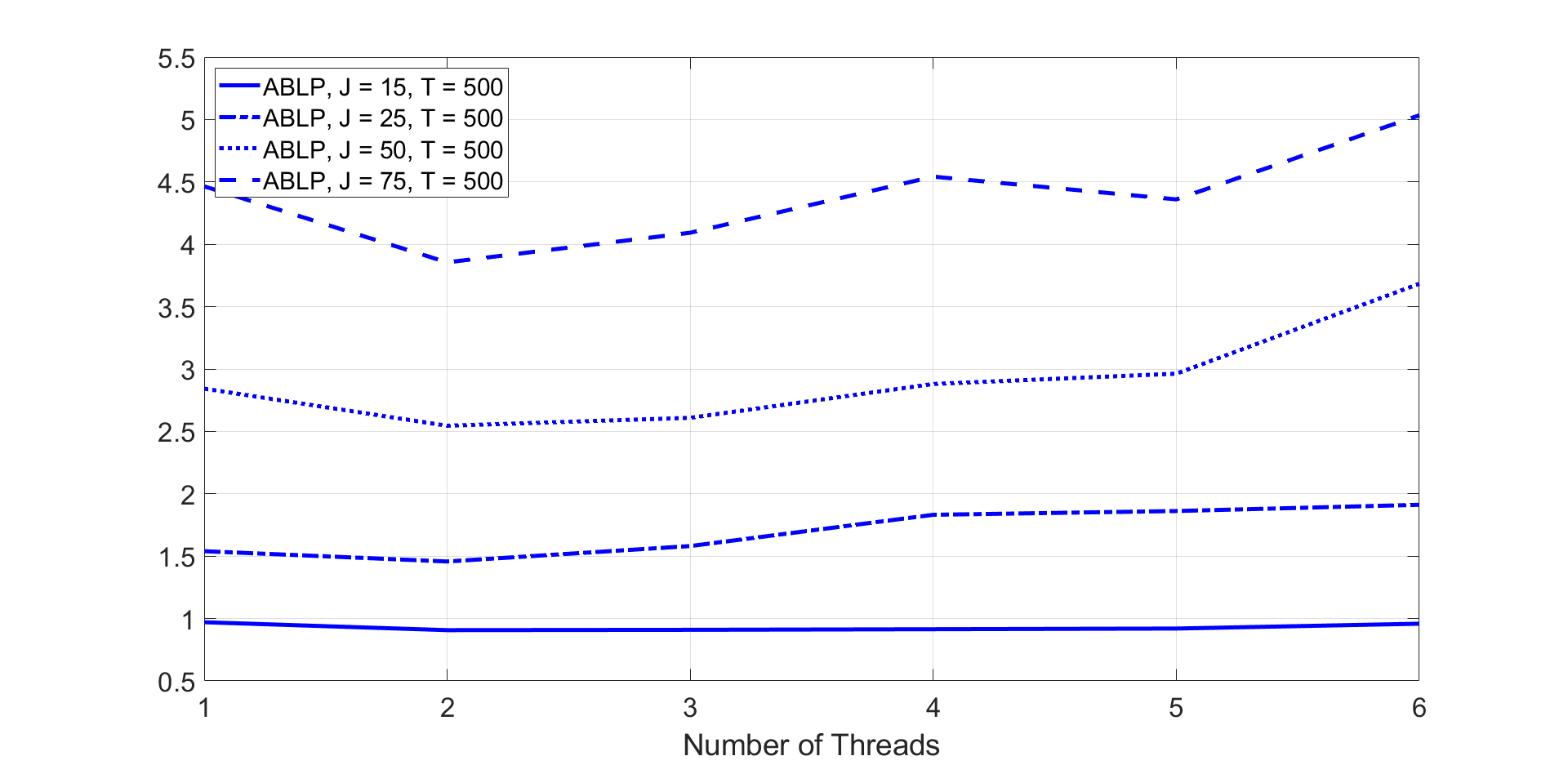}
         \caption{ABLP, $T = 500$, $N = 1,000$}
     \end{subfigure}
     %\hfill
     \begin{subfigure}[b]{0.48\textwidth}
         \centering
         \includegraphics[width=\textwidth]{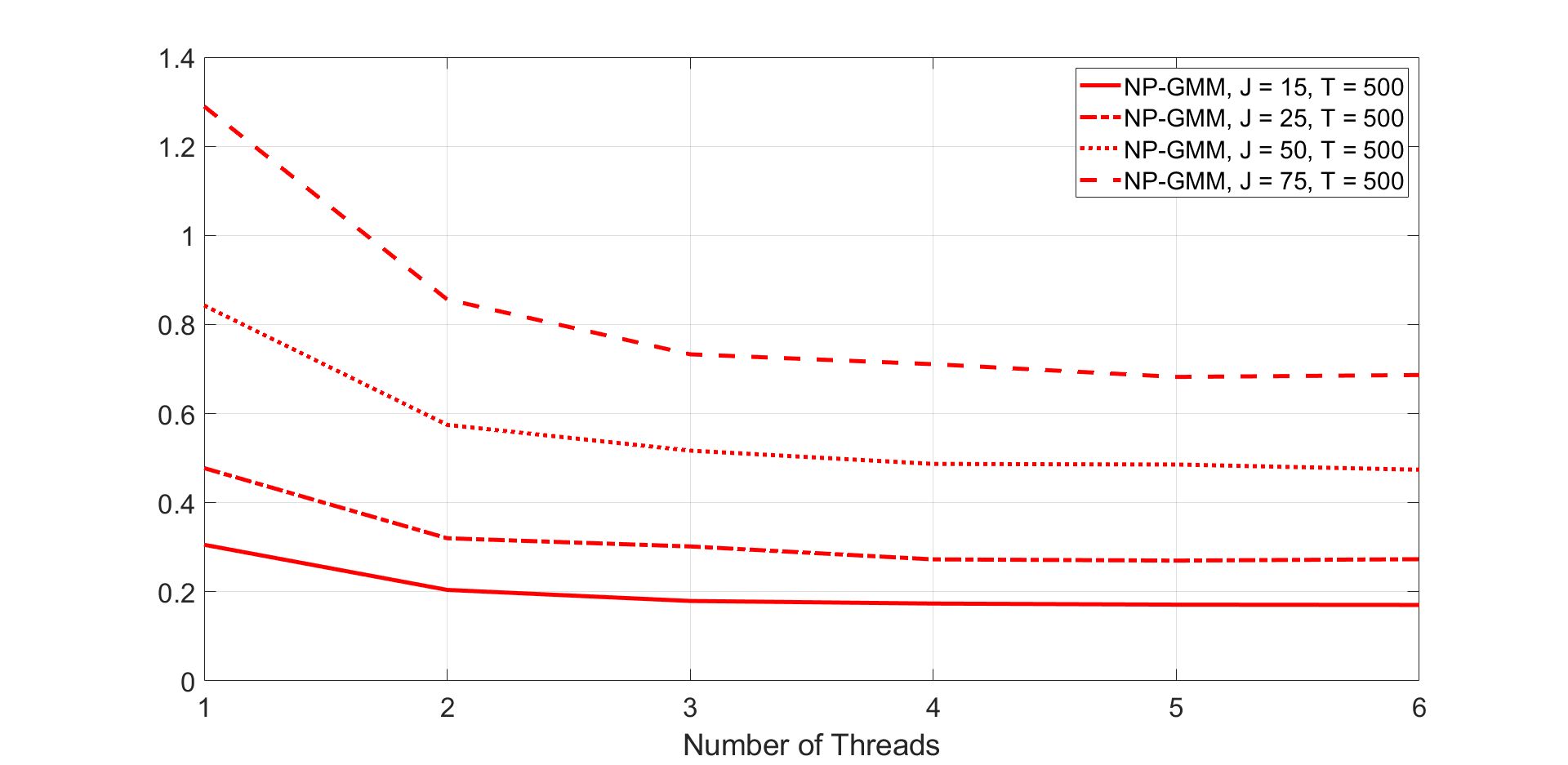}
         \caption{NP-GMM, $T = 500$, $N = 1,000$}
     \end{subfigure}
\end{figure}

\begin{table}[H]\centering
\caption{Evaluation Time (seconds): ABLP versus NP-GMM ($J = 25$)}
\label{tab:more eva time J 25}
\resizebox{.9\columnwidth}{!}{%
\begin{tabular}{cccccc} \hline \hline
$T$ &      & Optimal number of threads & Optimal time    & Single thread time & Optimal/Single thread \\ \hline
\multirow{2}{*}{100}  & ABLP & 2 &	 0.2608 & 0.2779  &  93.84\%
 \\
                    &                      NP-GMM  & 6 &	 0.0475  & 0.0872  &  54.47\%
 \\ \hline 
\multirow{2}{*}{250}  & ABLP & 2 & 	0.6491 &	0.6974  &  93.07\%

 \\
                    &                      NP-GMM   & 6 	& 0.1169 	& 0.2157  & 54.2\%

 \\ \hline 
\multirow{2}{*}{500}  & ABLP & 2	& 1.4573 	& 1.5398  & 94.64\%

 \\
                    &                     NP-GMM    & 5	& 0.2700 	& 0.4776  & 56.53\%

 \\ \hline 
\multirow{2}{*}{1000}  & ABLP & 2	& 2.9141 &	 2.9822  & 97.72\%

 \\
                    &                      NP-GMM  & 5	& 0.5523 &	 0.9108  & 60.64\%

 \\ \hline \hline
\end{tabular}
}
\end{table}

\begin{table}[H]\centering
\caption{Evaluation Time (seconds): ABLP versus NP-GMM ($T = 500$)}
\label{tab:more eva time T 500}
\resizebox{.9\columnwidth}{!}{%
\begin{tabular}{cccccc} \hline \hline
$J$ &      & Optimal number of threads & Optimal time    & Single thread time & Optimal/Single thread \\ \hline
\multirow{2}{*}{15}  & ABLP & 2 &	 0.9069 & 0.9704  &  93.46\%
 \\
                    &                      NP-GMM  & 6 &	 0.1705  & 0.3056  &  55.79\%
 \\ \hline 
\multirow{2}{*}{25}  & ABLP & 2 & 	1.4573 &	1.5398  &  94.64\%

 \\
                    &                      NP-GMM   & 5	& 0.2700	& 0.4776  & 56.53\%

 \\ \hline 
\multirow{2}{*}{50}  & ABLP & 2	& 2.5453 	& 2.8428  &  89.53\%

 \\
                    &                     NP-GMM    & 6	& 0.4740 	& 0.8427  &  56.25\%

 \\ \hline 
\multirow{2}{*}{75}  & ABLP & 2	& 3.8550 &	4.4645  & 86.35\%

 \\
                    &                      NP-GMM  & 5	& 0.6823 &	 1.2895  &  52.91\%

 \\ \hline \hline
\end{tabular}
}
\end{table}

\section{Analytic Gradients of NP-GMM Estimators} \label{anaytic gradient}

In this section, we derive the analytic gradient of the NP-GMM objective function. The objective function and its derivatives are explicit. For convenience, we concentrate out the linear parameters $\boldsymbol{\beta}$ and use $\boldsymbol{\theta}$ to denote the remaining parameters in this section. Let $X$ and $Z$ denote the stacked matrices of regressors and instruments, respectively, formed by stacking all $\boldsymbol{x_{jt}}$ and $\boldsymbol{z_{jt}}$ across products and markets.

The GMM objective function is:
\begin{equation}
\begin{aligned}
    & & & 
    Q \big(\boldsymbol{\lambda}, \boldsymbol{\theta} \big) 
    \; = \; 
    \boldsymbol{\xi}(\boldsymbol{\lambda}, \boldsymbol{\theta})^{\prime} \; Z \; W \; Z^{\prime} \;
    \boldsymbol{\xi}(\boldsymbol{\lambda}, \boldsymbol{\theta}), \\
& \text{where}  & & \boldsymbol{\theta} \; = \; (\boldsymbol{\sigma}, \boldsymbol{\pi})', \\ 
    & & & \boldsymbol{\xi}(\boldsymbol{\lambda}, \boldsymbol{\theta}) \; = \; \boldsymbol{\delta}(\boldsymbol{\lambda},\boldsymbol{\theta}) - X \; \boldsymbol{\beta}(\boldsymbol{\lambda}, \boldsymbol{\theta}), \\ 
    & & & \delta_{jt}(\boldsymbol{\lambda}, \boldsymbol{\theta}) 
    \; = \;
    \ln s_{jt} - \ln \Big[\frac{1}{N} \sum_{i=1}^{N} \lambda_{it} 
    \exp \big(\sum_{k=1}^{K} \boldsymbol{x}_{jt}^k(\sigma^{k} v_i^k +\sum_{r=1}^{R} \pi_{kr} d_{ir} ) \big) \Big], \\
    & & & \boldsymbol{\beta}(\boldsymbol{\lambda}, \boldsymbol{\theta}) 
    \; = \; 
    \Big[ 
        (X^\prime \; Z \; W \; Z^\prime X)^{-1} 
        X^\prime \; Z \; W \; Z^\prime
    \Big] \;
    \boldsymbol{\delta}(\boldsymbol{\lambda}, \boldsymbol{\theta}).
\end{aligned}
\end{equation}
Thus,
\begin{equation}
    \boldsymbol{\xi}(\boldsymbol{\lambda}, \boldsymbol{\theta}) 
    \; = \; 
    \boldsymbol{\delta}(\boldsymbol{\lambda}, \boldsymbol{\theta}) - 
    X \; 
    \Big[ 
        (X^\prime \; Z \; W \; Z^\prime X)^{-1} 
        X^\prime \; Z \; W \; Z^\prime
    \Big] \;
    \boldsymbol{\delta}(\boldsymbol{\lambda}, \boldsymbol{\theta}).
\end{equation}
The gradient is:
\begin{equation}
\begin{aligned}
    & & & \nabla_{\theta}Q(\boldsymbol{\lambda}, \boldsymbol{\theta}) 
    \; = \;  
    \frac{d\boldsymbol{\xi}(\boldsymbol{\lambda}, \boldsymbol{\theta})^{\prime}}{d\boldsymbol{\theta}} \; \frac{dQ}{d\boldsymbol{\xi}} 
    \; = \; 
    2 \; \frac{d\boldsymbol{\xi}(\boldsymbol{\lambda}, \boldsymbol{\theta})^{\prime}}{d\boldsymbol{\theta}} 
    \; Z \; W \; Z^{\prime} \; 
    \boldsymbol{\xi}(\boldsymbol{\lambda}, \boldsymbol{\theta}), \\
& \text{where}
    & & \frac{d\boldsymbol{\xi}(\boldsymbol{\lambda}, \boldsymbol{\theta})}{d\boldsymbol{\theta}} \; = \;  
    \frac{d\boldsymbol{\delta}(\boldsymbol{\lambda}, \boldsymbol{\theta})}{d\boldsymbol{\theta}} - 
    X \; 
    \Big[ 
        (X^\prime \; Z \; W \; Z^\prime X)^{-1} 
        X^\prime \; Z \; W \; Z^\prime
    \Big] \;
    \frac{d\boldsymbol{\delta}(\boldsymbol{\lambda}, \boldsymbol{\theta})}{d\boldsymbol{\theta}}; \\
    & \text{and}
    & & \frac{\partial\delta_{jt}(\boldsymbol{\lambda}, \boldsymbol{\theta})}{\partial \sigma^{k}} = - \frac{\sum_{i=1}^{N} \lambda_{it}
    \exp \big(\sum_{k=1}^{K} \boldsymbol{x}_{jt}^k(\sigma^{k} v_i^k +\sum_{r=1}^{R} \pi_{kr} d_{ir} ) \big) \boldsymbol{x}_{jt}^k v_i^k}{ \sum_{i=1}^{N} \lambda_{it} 
    \exp \big(\sum_{k=1}^{K} \boldsymbol{x}_{jt}^k(\sigma^{k} v_i^k +\sum_{r=1}^{R} \pi_{kr} d_{ir} ) \big)}, \\
    & & & \frac{\partial\delta_{jt}(\boldsymbol{\lambda}, \boldsymbol{\theta})}{\partial \pi_{kr}} = - \frac{\sum_{i=1}^{N} \lambda_{it}
    \exp \big(\sum_{k=1}^{K} \boldsymbol{x}_{jt}^k(\sigma^{k} v_i^k +\sum_{r=1}^{R} \pi_{kr} d_{ir} ) \big) \boldsymbol{x}_{jt}^k d_{ir}}{\sum_{i=1}^{N} \lambda_{it} 
    \exp \big(\sum_{k=1}^{K} \boldsymbol{x}_{jt}^k(\sigma^{k} v_i^k +\sum_{r=1}^{R} \pi_{kr} d_{ir} ) \big)}.
\end{aligned}
\end{equation}
Both numerator and denominator of the above two equations are closed-form and "almost" linear functions of $\boldsymbol{\theta}$, so their calculation is fast.

\section{Supplementary Results for the LCBO Application}
 \label{estimate_details_lcbo} 

\begin{table}[ht]
\centering
\caption{Mean Wall-clock Time per Evaluation by Number of Threads}
\label{tab:per_eval_time_lcbo}
\begin{tabular}{l l c c c c c c}
\hline \hline
& & \multicolumn{6}{c}{Number of Threads} \\
\cmidrule(lr){3-8}
$J$ & Method & 1 & 2 & 3 & 4 & 5 & 6 \\
\hline
\multirow{2}{*}{800}
& ABLP   & 27.04 & 19.22 & 16.82 & 15.91 & 15.79 & 15.95 \\
& NP-GMM &  2.28 &  1.54 &  1.35 &  1.20 &  1.24 &  1.17 \\
\hline
\multirow{2}{*}{1{,}600}
& ABLP   & 100.12 & 68.37 & 57.34 & 56.04 & 52.68 & 52.49 \\
& NP-GMM &   4.58 &  3.12 &  2.67 &  2.51 &  2.43 &  2.39 \\
\hline
\multirow{2}{*}{2{,}900}
& ABLP   & 366.92 & 254.19 & 233.02 & 195.90 & 202.34 & 192.40 \\
& NP-GMM &   8.47 &   5.66 &   4.91 &   4.56 &   4.34 &   4.30 \\
\hline \hline
\end{tabular}

\vspace{0.5em}
\begin{minipage}{0.9\linewidth}
\footnotesize
\textit{Note:} Entries report wall-clock time (in seconds) for a single evaluation of the objective
function and its gradients. Each model uses $T = 76$ markets and $N = 1{,}000$ simulated consumers.
Each entry is the mean over 10 runs.
\end{minipage}
\end{table}

\clearpage

\end{appendices}

\bibliographystyle{econometrica}
\bibliography{references}

\end{document}